\renewcommand{\phi}{\varphi}
\newcommand{\U}{{\calP}}
\newcommand{\Refl}{\mathrm{REFL}_{p}}
\newcommand{\RotF}{\mathrm{ROT}_{y}}
\newcommand{\RR}{\mathcal{U}}
\newcommand{\bonef}{\wt{\bone}_y}
\newcommand{\ketzero}{\ket{\vec{\mathtt{0}}}}
\newcommand{\brazero}{\bra{\vec{\mathtt{0}}}}
\newcommand{\x}{X}
\newcommand{\garbage}{\textnormal{garbage}}
\newcommand{\second}{s}
\DeclareMathOperator{\hav}{hav}
\renewcommand{\i}{\mathrm{i}}
\newcommand{\bii}{\boldsymbol{\mathrm{i}}}
\newcounter{prob}
\renewcommand{\theprob}{\arabic{prob}}
\newenvironment{prob}[1][]{\refstepcounter{prob}

    \ifstrempty{#1}{\mdfsetup{frametitle={\tikz[baseline=(current bounding box.east),outer sep=0pt,inner sep=2pt]
        \node[anchor=east,rectangle,fill=black!20]
        {\strut Problem~\theprob};}
    }}{\mdfsetup{frametitle={\tikz[baseline=(current bounding box.east),outer sep=0pt,inner sep=2pt]
        \node[anchor=east,rectangle,fill=black!20]
        {\strut Problem~\theprob~(#1)};}}}\mdfsetup{innertopmargin=0pt,linecolor=black!20,linewidth=2pt,topline=true,frametitleaboveskip=\dimexpr-\ht\strutbox\relax }

\begin{mdframed}[]\relax}{\end{mdframed}}
\Crefname{prob}{Problem}{Problems}
\renewcommand{\backref}[1]{}
\renewcommand{\backrefalt}[4]{\ifcase #1 \or
[p.\ #2]\else
[pp.\ #2]\fi}
\begin{document}

\title{Mean estimation when you have the source code; \\
or, quantum Monte Carlo methods}

\author{
Robin Kothari\thanks{Microsoft Quantum. \texttt{robin.kothari@microsoft.com}}
\and
Ryan O'Donnell\thanks{Carnegie Mellon University Computer Science Department.
             Part of this research was performed while the author was at Microsoft Quantum.
             This work was partially supported by ARO grant W911NF2110001.
             \texttt{odonnell@cs.cmu.edu}}
}

\date{\today}

\maketitle

\begin{abstract}
    Suppose $\by$ is a real random variable, and one is given access to ``the code'' that generates it (for example, a randomized or quantum circuit whose output is~$\by$).
    We give a quantum procedure that runs the code~$O(n)$ times and returns an estimate $\widehat{\bmu}$ for $\mu = \E[\by]$ that  with high probability satisfies $\abs{\widehat{\bmu} - \mu} \leq \sigma/n$, where $\sigma = \stddev[\by]$.
    This dependence on~$n$ is optimal for quantum algorithms.
    One may compare with classical algorithms, which can only achieve the quadratically worse $\abs{\wh{\bmu} - \mu} \leq \sigma/\sqrt{n}$.
    Our method improves upon previous works, which either made additional assumptions about~$\by$, and/or assumed the algorithm knew an  a priori bound on~$\sigma$, and/or used additional logarithmic factors beyond~$O(n)$.
    The central subroutine for our result is essentially Grover's algorithm but with complex phases.
\end{abstract}

\section{Introduction}      \label{sec:intro}

Let $\by$ be a real random variable.\footnote{Throughout we use \textbf{boldface} to denote random quantities.}
One may wish to estimate its mean $\mu = \E[\by]$ from independent samples $\by_1, \by_2, \dots, \by_n$.
A natural strategy is to output the sample mean $\widehat{\bmu} = (\by_1 + \cdots + \by_n)/n$, an unbiased estimator with standard deviation~$\sigma/\sqrt{n}$, where $\sigma = \stddev[\by] = \sqrt{\E[(\by-\mu)^2]}$.
Then Chebyshev's inequality implies, say,
\begin{equation}    \label[ineq]{ineq:cheby-example}
    \Pr[\abs{\widehat{\bmu} - \mu} \geq 10\sigma/\sqrt{n}] \leq 1\%.
\end{equation}
As familiar special cases: if $\by$ is bounded in $[0,1]$, then $\sigma \leq 1$ and we get that $n = O(1/\eps^2)$ samples suffice to ensure $|\widehat{\bmu} - \mu| \leq \eps$ with high probability; if $\by \in \{0,1\}$, then $\sigma = \sqrt{\mu(1-\mu)} \leq \sqrt{\mu}$, and we get that $n = O(1/\eps)$ samples suffice to distinguish $\mu \geq \eps$ from $\mu \leq \eps/2$ with high probability.
Up to constant factors, these guarantees cannot be improved upon if the samples $\by_1, \dots, \by_n$ are coming ``from nature''.

\emph{But what if we have ``the code'' for~$\by$?}
By this we mean we have access to, say, a randomized circuit~$C$ whose output is~$\by$.
In a certain sense this means we don't need any samples at all to estimate~$\mu$: By enumerating all possible random paths for~$C$, we can compute~$\mu$ perfectly.
But this could be enormously expensive; if running~$C$ to produce a single sample takes effort~$S$, then the brute-force enumeration analysis might take $\exp(S)$ effort.
It is much more practical to treat~$C$ as a ``black box'' and apply \Cref{ineq:cheby-example}, expending just $O(S/\eps^2)$ effort to get a high-confidence estimate of~$\mu$ with ``error bar''~$\epsilon \sigma$.
This idea is the essence of the \emph{Monte Carlo Method}~\cite{HH64}.
Very surprisingly (at least, circa the mid-'90s), one can do quadratically better using a quantum computer!
As we show in this work, only $O(S/\eps)$ effort is needed to get the same guarantee.

To state our main result, let $\by$ be a discrete real random variable (whose values are encodable by bits on a digital computer).
We will formally discuss ``having the code'' for~$\by$ in \Cref{sec:prelims}, but for now suffice it to say it includes the following scenarios:
\paragraph{Scenarios for ``having the code'':}
\begin{enumerate} \label{list:code}
     \item \label{item:classical} Access to a classical randomized circuit (with no input) whose output is a draw from~$\by$.
     \item \label{item:general} More generally, access to a unitary quantum circuit (with some fixed input $\ket{0^k}$) such that, upon measuring its output and discarding some bits, we get a draw from~$\by$.
     (Note that a quantum circuit with intermediate measurements can be transformed to this form.) 
     \item \label{item:uniform} Less generally, access to a unitary quantum circuit that produces $\frac{1}{\sqrt{N}}\sum_{j =1}^N \ket{j}\ket{y_j}$, and $\by$~is defined to be the uniform distribution on the multiset of reals $\{y_1, y_2, \dots, y_N\}$.
     Grover's algorithm works in this model.
\end{enumerate}

In this work, we show the following theorem:
\begin{theorem} \label{thm:main0}
    There is a computationally efficient
    quantum algorithm with the following properties:
    Given ``the code'' for a random variable~$\by$, the algorithm uses $O(n)$ samples\footnote{In this introduction, we will say that an algorithm uses $q$~``samples'' from $\by$ to mean that it uses the code for~$\by$ at most~$q$ times.}
    and outputs an estimate~$\widehat{\bmu}$ such that
    \begin{equation}
        \Pr[\abs{\widehat{\bmu} - \mu} > \sigma/n] \leq 1/3,
    \end{equation}
    where $\mu = \E[\by]$ and $\sigma = \stddev[\by]$.
    (By repeating the algorithm $O(\log 1/\delta)$ times and taking the median, one can reduce the ``$1/3$'' to any $\delta > 0$.)
\end{theorem}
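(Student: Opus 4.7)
The plan is to reduce mean estimation to the estimation of a characteristic function via a ``complex-phase Grover'' subroutine, and then to handle the unknown-variance issue with a bootstrap. Let $U$ be the preparation unitary, so that $U\ketzero$ has the form $\sum_{j}\sqrt{p_j}\,\ket{j}\ket{y_j}$, and let $R(t)$ be the phase oracle that maps $\ket{j}\ket{y_j} \mapsto e^{\i t y_j}\ket{j}\ket{y_j}$. A direct computation gives
\[
    \brazero U^\dagger R(t) U \ketzero \;=\; \E\bigl[e^{\i t\by}\bigr] \;=\; \phi(t),
\]
the characteristic function of $\by$ at $t$. The first step is to build a core subroutine, essentially Grover's algorithm with a complex phase replacing the standard $\pi$-reflection, that uses $O(n)$ applications of $U$, $U^\dagger$, and $R(t)$ to produce an estimate $\widehat{\phi}$ satisfying $\abs{\widehat{\phi}-\phi(t)} \leq 1/n$ with high probability. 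The natural tool is phase estimation applied to the ``rotation'' built from $U R(t) U^\dagger$ and the reflection about $\ketzero$, which has a 2-dimensional invariant subspace whose eigenphases encode $\phi(t)$.

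Next I would extract the mean from $\phi$ via its Taylor expansion. For $\abs{t}\sigma$ smaller than a suitable absolute constant,
\[
    \phi(t) \;=\; 1 + \i t\mu \;-\; \tfrac{1}{2}t^2\,\E[\by^2] \;+\; R_3(t),
\]
so $\operatorname{Im}(\phi(t))/t = \mu + R_3(t)/(\i t)$. Choosing $\abs{t}$ of order $1/\sigma$ makes the estimation error contribute $O(1/(tn)) = O(\sigma/n)$, matching the desired bound, \emph{provided} the cubic remainder is of the same order. To ensure this, I would first replace $\by$ by a truncated surrogate supported in a window of width $O(\sigma)$ about a rough centre, so that the truncated variable has bounded higher moments; one then argues via Chebyshev-type tail bounds that the truncation shifts the mean by at most $O(\sigma/n)$.

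The main obstacle is that the algorithm is not told $\sigma$ in advance, yet the choice of $t$ and the truncation window both depend on it. My plan is to bootstrap in three phases, each with a geometrically allocated share of the $O(n)$-sample budget. First, a crude-precision pass at a doubling schedule of scales $t_k = 2^{-k}$ produces a rough estimate $\widehat{\mu}$, used as a centring offset. Second, the core subroutine is applied to the code for $(\by - \widehat{\mu})^2$ (which is obtainable from the code for $\by$) to produce a constant-factor estimate $\widehat{\sigma}$ of $\sigma$. Third, the core subroutine is applied at scale $t \sim 1/\widehat{\sigma}$ to the truncated, centred variable to extract $\mu$ to error $\sigma/n$.

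The hardest technical point will be coupling the three stages correctly: the rough $\widehat{\mu}$ must be within $O(\sigma)$ of $\mu$ for the squared-centred variance estimation to make sense, and $\widehat{\sigma}$ must be tight to a constant factor for the final Taylor analysis to go through, all while guaranteeing that heavy tails outside the truncation window contribute at most $O(\sigma/n)$ to the mean. Once these couplings are established, the proof finishes by union-bounding the failure of each phase, verifying that the total sample cost is dominated by the final high-precision call, and amplifying the constant success probability to $1-\delta$ by the usual median trick; optimality of the $O(n)$ scaling then follows from the known $\Omega(n)$ quantum lower bound for counting.
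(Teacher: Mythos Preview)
Your proposal has a genuine gap at its core. The operator built from $UR(t)U^\dagger$ and the reflection about $\ketzero$ does \emph{not}, in general, have a $2$-dimensional invariant subspace containing the starting state. That structure is special to the case where the phase oracle is itself a reflection (eigenvalues $\pm 1$), as in standard Grover/amplitude amplification; here $R(t)$ has eigenvalues $e^{\i t y_j}$ for all the distinct outcomes~$y_j$, so the composite unitary is genuinely high-dimensional and $U\ketzero$ spreads across many eigenvectors with no clean relation between their eigenphases and $\phi(t)$. The paper's main technical contribution is precisely to handle this high-dimensional situation: it replaces the linear phase $e^{\i t y}$ by $e^{-2\i\arctan y}$, and then proves (via the identity $\tfrac{\Id-\calU}{2}\ket{\bone+\i\by}=\i\mu\ket{\bone}$ together with a matched reciprocal-haversine bound) that under the promise $\E[\by^2]\leq 1$ the starting state is mostly supported on eigenvectors of eigenphase $\approx 2|\mu|$. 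There is no $2$-dimensional shortcut.

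Even if you obtained $\phi(t)$ to additive $1/n$ by some other route (say Hadamard test plus amplitude estimation), the Taylor step would fail as written: $\operatorname{Im}\phi(t)/t=\mu-\tfrac{t^2}{6}\E[\by^3]+\cdots$, and with $t\sim 1/\sigma$ and a truncation window of width $O(\sigma)$ the cubic remainder is of order~$\sigma$, not $\sigma/n$; shrinking~$t$ enough to kill the remainder pushes the estimation error $1/(tn)$ back up to $\sigma/\sqrt{n}$, the classical rate. The $\arctan$ phase sidesteps this because the resulting eigenphase equals $2|\mu|$ up to a \emph{multiplicative} $(1\pm O(s))$ factor, with no Taylor remainder to control. (Your variance bootstrap is also underspecified --- applying the core routine to $(\by-\widehat\mu)^2$ requires control of the fourth moment of~$\by$ --- and the paper closes this loop differently, first using Hamoudi's quantile estimation to obtain a truncation level~$B$ with $\Pr[|\by|>B]\leq O(1/n^2)$, then exploiting boundedness to estimate the second moment, and only then rescaling and invoking the core routine.)
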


\begin{table}\centering
    \renewcommand{\arraystretch}{1.3}
    \begin{tabular}{@{}llcll@{}}\toprule
    \textbf{Model} & \textbf{Assumption on \emph{y}} & \textbf{Additive error} & \textbf{Samples} & \textbf{Reference} \\ \midrule
    Uniform & Bernoulli, $\mu = 0$ or $\mu = 1/n^2$ & (distinguishes${}^*$) & $O(n)$ & \cite{Gro96}\\
    Uniform & $[0,1]$-bounded & $1/n$ & $O(n \cdot \polylog\;n)$ & \cite{Gro98}\\
    General & Bernoulli & $\phantom {---} \sigma/n \ {}^{\dagger}$ \phantom{${}^{\dagger}$\ }& $O(n)$ & \cite{BHT98}\\
    General${}^\ddagger$ & $[0,1]$-bounded  & $\sqrt{\mu}/n$ & $O(n)$ & \cite{Ter99}\\
    Uniform & $\max\{\sigma, \abs{\mu}\} \leq \sigma_{\text{bound}}$ known & $\sigma_{\text{bound}} / n$ & $O(n \log^{3/2} n  \log \log n)$ & \cite{Hei02}\\
    General & $\sigma \leq \sigma_{\text{bound}}$ known & $\sigma_{\text{bound}} / n$ & $O(n \log^{3/2} n  \log \log n)$  & \cite{Mon15}\\
    General & (none) & $\sigma/n$ & $O(n \log^{3/2} n  \log \log n)$  & \cite{Ham21}\\
    \textbf{General} & (none) & $\boldsymbol{\sigma}\mathbf{/}\boldsymbol{n}$ & $\boldsymbol{O(n)}$  & \textbf{Us}\\
    \bottomrule
    \end{tabular}
    \caption{Prior quantum algorithms for  mean estimation.}
    {\small
    ${}^*$ Usually equivalently stated as using $O(\sqrt{N})$ queries to distinguish $\mu = 0$ from $\mu = 1/N$.\\
    ${}^\dagger$ Rather than just $\sigma/n = \sqrt{\mu(1-\mu)}/n$, this result is usually stated with an extra additive~$1/n^2$ error (cf.~\cite{BHMT00}).  But note that $1/n^2 \leq O(\sigma/n)$ unless $\sigma \ll 1/n$. Supposing $\sigma \ll 1/n$, we have $\mu \ll 1/n^2$ (or $1-\mu \ll 1/n^2$, but the reasoning will be similar), and inspecting the algorithm shows that it will output the estimate~$0$ (with high probability).  But an estimate of~$0$ \emph{is} within additive error~$\sigma/n$ of $\mu \sim \sigma^2$ when $\mu \ll 1/n^2$.\\
    ${}^{\ddagger}$ Terhal stated her result for the Uniform model, but it is easy to see it also works in the General model, as it is a direct reduction to the General Bernoulli result of~\cite{BHT98}.  The additive~$1/n^2$ appearing in her statement can be deleted for this reason, too.
    }
    \label{table:prior}
\end{table}

\begin{remark}  \label{rem:gatecxty}
    Regarding computational efficiency,
    in \Cref{sec:gates} we show that if the code for $\by$ is a circuit of gate complexity~$S$, then the gate complexity of our algorithm in \Cref{thm:main0} is~$O(nS)$.\footnote{Except in the rather specific and unlikely case of $\Omega(\log n) \leq S < o(\log(n) \cdot (\log \log n)^2)$, in which case there is an extra factor of at most $(\log \log n)^2$.}
\end{remark}

\Cref{thm:main0} is known to be optimal (up to constant factors, see e.g.~\cite[Thm.~4.6.2]{Ham21}), and it caps a long sequence of works that obtain similar results but with more assumptions and/or weaker parameters; see \Cref{table:prior}.  (The ``Uniform'' model in the table's first column refers to the model in \Cref{item:uniform} above.)

\vspace{5em}

\subsection{Methods}\label{sec:methods}

The centerpiece of \Cref{thm:main0} is \Cref{thm:maintask} below:
\begin{theorem} \label{thm:maintask}
    There is a computationally efficient 
    quantum algorithm that solves the following task:\\

    \textbf{Main Task.}  Given a parameter $\eps > 0$ and ``the code'' for a random variable~$\by$ promised to satisfy $\E[\by^2] \leq 1$,
    use $O(1/\eps)$ samples and distinguish (with confidence at least~$2/3$) between the cases (i)~$\abs{\mu} \leq \eps/2$ and (ii)~$\eps \leq \abs{\mu} \leq 2\eps$, where $\mu = \E[\by]$.\end{theorem}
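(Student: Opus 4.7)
The plan is to reduce the Main Task to amplifying a ``complex phase'' signal that encodes $\mu$, using a Grover-like subroutine with complex phases --- matching the paper's advertised centerpiece.

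\textbf{Step 1 (truncation and phase oracle).} First, use the second-moment bound $\E[\by^2]\le 1$: by Cauchy--Schwarz, $\abs{\E[\by\cdot\bone_{|\by|>T}]} \le 1/T$, so truncating $\by$ at $T = C/\eps$ for a sufficiently large constant $C$ changes $\mu$ by at most $\eps/C$, much smaller than the gap $\eps/2$. We may therefore assume $|\by|\le T = O(1/\eps)$. Next, using $O(1)$ queries to the code, implement the diagonal phase oracle $\Phi\colon \ket{j,y}\mapsto e^{\i\alpha y}\ket{j,y}$ with $\alpha = \Theta(\eps)$ chosen so that $|\alpha y|\le O(1)$. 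Writing $\ket{\psi} = U\ket{0}$, the central identity is
\[
    \bra{\psi}\Phi\ket{\psi} = \E[e^{\i\alpha\by}] = 1 + \i\alpha\mu - \tfrac{1}{2}\alpha^2\E[\by^2] + O(\alpha^3\E[|\by|^3]),
\]
and since $\E[|\by|^3]\le T\E[\by^2]\le T$, the imaginary part equals $\alpha\mu$ up to an $O(\alpha^3T) = O(\eps^2)$ error. The Main Task thereby reduces to deciding whether $\abs{\mathrm{Im}\,\bra{\psi}\Phi\ket{\psi}}$ is at most $\sim\alpha\eps/2$ or at least $\sim\alpha\eps$ --- a gap of scale $\Theta(\eps^2)$.

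\textbf{Step 2 (Grover with complex phases).} A direct Hadamard test or amplitude estimation on a quantity of size $\Theta(\eps^2)$ costs $\Theta(1/\eps^2)$ queries, so the $O(1/\eps)$ bound requires a genuine quadratic speedup via the paper's complex-phase Grover iterate. As a confidence-building special case, when $\by\in\{0,1\}$ is Bernoulli one can take $\alpha = \Theta(1)$ (since $\by$ is already bounded) and verify directly that the iterate $G = (2\ketbra{\psi}{\psi}-I)\Phi$ acts as a $2\times 2$ unitary on $\mathrm{span}(\ket{\psi},\ket{\psi^\perp})$ whose eigenvalues have phases $\alpha\mu$ and $\pi+\alpha(1-\mu)$; then $O(1/\eps)$ applications of standard phase estimation read off $\alpha\mu$ to the required precision. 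For general $\by$, I would design an analogous iterate built from $\Phi$ and the reflection $2\ketbra{\psi}{\psi}-I$, and argue that $\Theta(1/\eps)$ applications amplify the $\Theta(\eps^2)$-scale imaginary signal to constant size, at which point a single measurement distinguishes the two cases with constant confidence.

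\textbf{Step 3 (main obstacle).} The heart of the proof is extending the Bernoulli analysis to general $\by$. Unlike real-phase Grover, $\Phi$ is not a reflection, so $G$ does not preserve a 2-dimensional invariant subspace: $\Phi\ket{\psi^\perp}$ generically leaks out of $\mathrm{span}(\ket{\psi},\ket{\psi^\perp})$ and the Krylov subspace generated by $G$ on $\ket{\psi}$ can have large dimension. Moreover, a direct calculation for the naive iterate yields only linear growth $\bra{\psi}G^k\ket{\psi} \approx 1 + \i k\alpha\mu$, which would require $k=\Theta(1/\eps^2)$ iterations; achieving the true $O(1/\eps)$ bound must therefore come either from setting up the ``marked'' amplitude as $\sqrt{\alpha|\mu|} = \Theta(\eps)$ (say via a controlled rotation of an ancilla by an angle depending on $y$, or by splitting $\by = \by_+ - \by_-$ into nonnegative parts and handling the sign separately), or from a cleverer iterate whose spectrum encodes $\alpha\mu$ in a way readable by $O(1/\eps)$ phase-estimation applications. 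A secondary subtlety is that the signal ($\eps^2$) and the Taylor-expansion error in $\E[e^{\i\alpha\by}]$ ($\eps^2$) are of the same order, so the approximation must be propagated through $\Theta(1/\eps)$ iterations without the accumulated error spoiling the distinguishing gap.
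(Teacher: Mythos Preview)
Your proposal correctly identifies the architecture (a Grover-type iterate with complex phases, followed by phase estimation), but Step~3 is explicitly incomplete: you name the central obstacle and list candidate fixes without carrying any through. The gap is real, and it stems from the choice made in Step~1. The paper does \emph{not} use the linear phase $y\mapsto e^{\i\alpha y}$ with small $\alpha=\Theta(\eps)$; it uses $\alpha(y)=-2\arctan y$, chosen so that the \emph{exact} identity $e^{\i\alpha(y)}(1+\i y)=1-\i y$ holds for every real~$y$. This removes both the truncation step and the Taylor expansion entirely, so the ``signal and error both $\Theta(\eps^2)$'' problem you flag never arises.

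With this phase, the iterate $\calU=(2\ketbra{\bone}{\bone}-\Id)\cdot\RotF$ satisfies the one-line identity $\tfrac{1}{2}(\Id-\calU)\ket{\bone+\bii\by}=\i\mu\ket{\bone}$; thus $\ket{\bone+\bii\by}$ is an exact eigenvector of eigenvalue~$1$ when $\mu=0$, and a near-eigenvector otherwise. This is what replaces the two-dimensional invariant subspace you were looking for. Taking squared norms of this identity and of its rearrangement $(\tfrac{\Id-\calU}{2})^{-1}\ket{\bone}=\tfrac{1}{\i\mu}\ket{\bone+\bii\by}$ gives, for the eigenphase distribution $\btheta\sim\Theta_{\calU}(\ket{\bone})$, the exact relations $\E[\sin^{-2}(\btheta/2)]=(1+s^2)/\mu^2$ and (on the nearby unit state $\ket{\bonef}$) $\E[\sin^{2}(\wt{\btheta}/2)]=\mu^2/(1+s^2)$. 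Together these two moment identities force $|\btheta|$ to concentrate near $2|\mu|$ with high probability when $s\le 1$. Then standard phase estimation on~$\calU$ with start state~$\ket{\bone}$ reads off $|\btheta|\approx 2|\mu|$ to precision~$\Theta(\eps)$ using $O(1/\eps)$ applications, which distinguishes the two cases. Your diagnosis that the linear-phase iterate yields only linear growth in the signal is correct for that route; the $\arctan$ phase and the resulting exact near-eigenvector identity are the missing ingredients.
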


One thing to notice is that \Cref{thm:maintask} directly implies Grover's algorithm~\cite{Gro96} (in its distinguishing form):\footnote{It's also not hard to show our routine can be used to \emph{find} a unique marked item.}
Given~$N$ equally likely items with either zero or one of them being ``marked'', we can form the random variable~$\by$ that is~$0$ on unmarked items and~$\sqrt{N}$ on a marked item.
Then we always have $\E[\by^2] \leq 1$, and either $\abs{\mu} = 0$ or $\abs{\mu} = 1/\sqrt{N}$ depending on whether there is a marked item.
Thus we can use~$\eps = 1/\sqrt{N}$ in \Cref{thm:maintask}.

In fact,  our algorithm for \Cref{thm:maintask} essentially \emph{is} Grover's algorithm --- but with complex phases!
Recall that Grover's algorithm is composed of a product of two unitaries.
The first unitary in Grover's algorithm, the ``diffusion operator'', is a reflection about the uniform superposition over all~$N$ items.
One can think of this state as encoding the uniform distribution over~$N$ equally likely items (but not which are marked/unmarked); if $N = 2^k$, it is the output of $H^{\otimes n}\ket{0^k}$.
In our algorithm, the diffusion operator will be similar; in the \Cref{item:general} scenario from above where the code is a quantum circuit~$\calC$, it would be reflection through the code's output $\calC \ket{0^k}$.

The second unitary in Grover's algorithm, the ``phase oracle'', is a diagonal unitary that encodes which items are marked: since in the Grover setup the random variable~$\by$ takes only two distinct values (marked or unmarked), these are mapped to the two phases $-1$~and~$+1$.
In our algorithm, the phase oracle will remain a diagonal unitary, inserting a phase based on the outcome~$y$ of the random variable.
But since $y$ may now be \emph{any} value from the real line, we need a map from an arbitrary real number to a phase. We map these values to general complex phases by associating $y \in \R$ to  $e^{\i \alpha} \in \C$, where $\alpha = -2\arctan y$, which maps $\R$ to $(-\pi,+\pi)$ (conceptually, this is the phase that rotates $1+\i y$ to $1 - \i y$). 
Notice that if~$\by$ just takes on the two values $0$~and~$\sqrt{N}$, the associated phases are $+1$ and~``almost~$-1$''.

\begin{figure}\label{fig:arctan}
    \centering
    \pgfplotsset{compat=1.17}
\begin{tikzpicture}
\begin{axis}[
     width=220pt,grid style={ultra thin},every axis plot post/.append style={thick},
     x tick label style={font=\tiny},y tick label style={font=\tiny},
     scale only axis,grid=major,axis lines=middle,
     xlabel={$y$},
     ylabel={$\alpha(y)$},
     xmin=-20,
     xmax=20,
     domain=-20:21,
     ymin=-4,
     ymax=4,
     xtick={-15,-10,...,15},
     ytick={-3, -2,...,3},
     restrict y to domain=-20:20,
     legend style={at={(0.5,-0.05)},anchor=north,nodes={right}},
 ]
 \addplot[mark=none,color=blue, samples=500]{-2*rad(atan(x))};
 \addlegendentry{$\alpha(y)=-2\arctan y$};
 \addplot [mark=none, black, dashed] {3.14159};
 \addplot [mark=none, black, dashed] {-3.14159};
 \addlegendentry{$\pm \pi$}
 \end{axis}
\end{tikzpicture}
~~~~
\begin{tikzpicture}[scale=3]

\draw[->] (-1.15,0) -- (1.15,0) coordinate (x axis);
    \draw[->] (0,-1.15) -- (0,1.15) coordinate (y axis);
    
\node [right]at (1.25,0)(x){$1$};    
    \node [above]at (0,1.25){$\i$};
    
\draw (0,0) circle [
                        radius=1cm
                        ];
                        
\path [name path=upward line] (1,0) -- (1,1);
    \path [name path=sloped line] (0,0) -- (30:1.5cm);
    \draw [name intersections={of=upward line and sloped line, by=t}]
            [] 
            (1,0) -- node [right=1pt]{$\displaystyle y $} (t);
    \draw (0,0) -- (t);
    
\path [name path=upward line] (1,0) -- (1,-1);
    \path [name path=sloped line] (0,0) -- (-30:1.5cm);
    \draw [name intersections={of=upward line and sloped line, by=r}]
            [] 
            (1,0) -- node [right=1pt]{$y$} (r);
    \draw (1,0) -- (r);
    
    \node at (1,0.7){$1+\i y$};
    \node at (1,-0.7){$1-\i y$};
    
\draw (r) coordinate (A)-- 
            (0,0) coordinate (B)-- 
                (t) coordinate (C)
    pic [
            draw,
            <-,
            blue,
            very thick,
            "\raisebox{11pt}{$\alpha$}",
            angle radius=11mm,
            angle eccentricity=1.3
            ] 
            {angle};
    \node at (0,-1.4){};
\end{tikzpicture}
\caption{Representations of the function $\alpha(y) = -2\arctan y$.}
\end{figure}
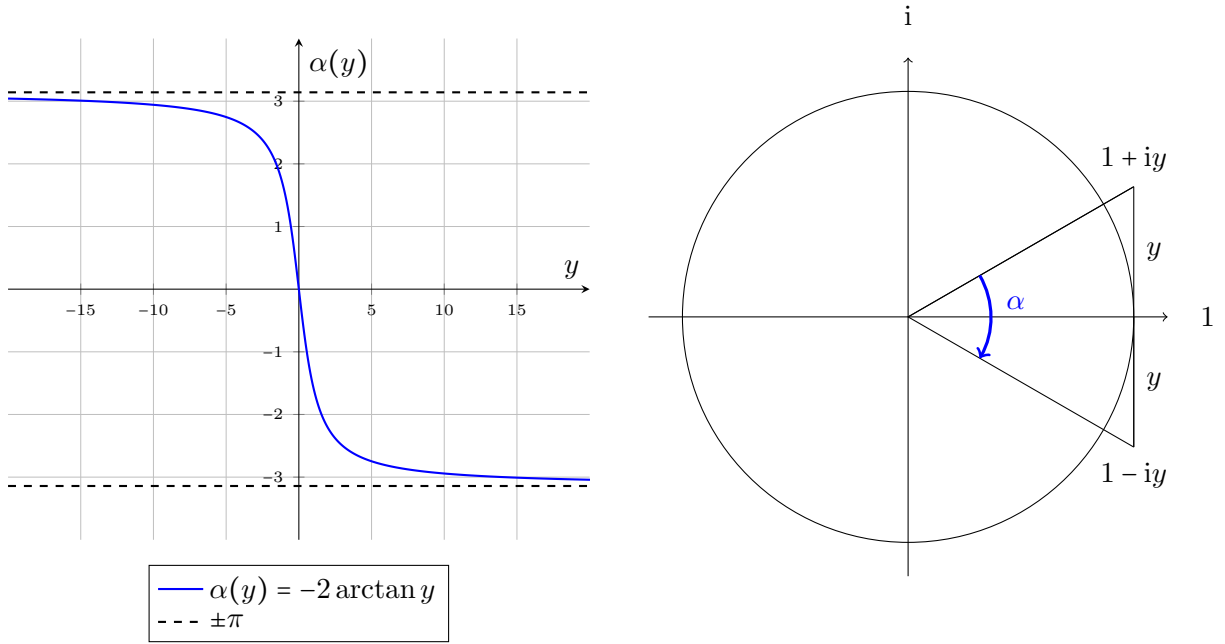

Grover's algorithm repeatedly alternates the diffusion and phase operators.  Our algorithm for the Main Task does the same, but the resulting intermediate states will contain general complex amplitudes.  (Nevertheless, the diffusion operator still acts by reflects a list of \emph{complex} amplitudes through their mean.)
This is not the first time using complex phases in Grover's has been suggested~\cite{Gro05}, but our analysis would seem to be new.  
The main challenge is to analyze the eigenvectors and eigenvalues of the unitary~$\RR$ obtained by composing the diffusion operator and the phased implementation of~$\by$.
Unlike in the Grover case, it is the composition of a reflection and a \emph{general} unitary operator, so $\RR$ is no longer essentially two-dimensional; it is fundamentally high-dimensional.

Nevertheless, we are able to show that when $\E[\by^2] \leq 1$, the natural ``starting state'' (namely, the output of the code) is mostly supported on eigenvectors of~$\RR$ with eigenvalue near $e^{\i \cdot 2|\mu|}$.
Given this analysis, we can solve the Main Task in \Cref{thm:maintask} rather easily.  
One way to finish is an immediate appeal to Quantum Phase Estimation~\cite{Kit95}. 
Alternatively, it's not too hard to show that more elementary strategies can work (provided one adjusts the constant factors in the Main Task's statement): Our analysis implies that form $T = \Theta(1/\eps)$, the starting state is close to an eigenvector of~$\RR^{T}$ with eigenvalue either close to~$+1$ (when $\abs{\mu} \ll \eps$) or close to~$-1$ (when $\abs{\mu} \approx \eps$). Then these cases can be distinguished with the simple Hadamard test.

\paragraph{Going from \Cref{thm:maintask} to \Cref{thm:main0}.}
A significant utility of our \Cref{thm:maintask} is that it applies to \emph{any} random variable with $\E[\by^2] \leq 1$, not just $\{0,1\}$-valued or even $[0,1]$-valued random variables.
As we will show in \Cref{sec:consequences}, this makes it very easy to compose with simple classical reductions.
For example, classical binary search lets us upgrade \Cref{thm:maintask} to \emph{estimate}~$\mu$ to additive $\pm \eps$ with $O(1/\eps)$ samples.
Then, with a standard classical halving trick we can achieve the Approximate Counting / Amplitude Estimation results of~\cite{BHT98,BHMT00}, as well as the~\cite{BHT98,Ter99} results from \Cref{table:prior}.
Indeed, the only extra quantum technique we use to reach our final \Cref{thm:maintask} is the recent Quantile Finding algorithm of Hamoudi~\cite{Ham21} (which itself is essentially Grover's algorithm together with classical reductions).
Thus all of these algorithms, up to and including \Cref{thm:main0},  can be obtained using nothing more than classical reductions and our elementary new quantum routine for \Cref{thm:maintask}.

\subsection{Applications of mean estimation} 

Mean estimation is used throughout algorithmic theory, and the quadratic speedup afforded by quantum computers (shown precisely herein) has many applications; see, e.g.,~\cite{Ham21} for an excellent survey.
We mention here some basic examples/applications.

\paragraph{Bernoulli random variables, and Circuit-SAT.}
Recall that a Bernoulli (i.e., $\{0,1\}$-valued) random variable~$\by$ which is~$1$ with probability~$\mu$ has mean~$\mu$ and standard deviation $\sigma = \sqrt{\mu(1-\mu)} \leq \sqrt{\mu}$.
Classically (or ``without the code''), $O(n)$~samples lets one estimate~$\mu$ to within additive error~$\sqrt{\mu}/\sqrt{n}$; so $O(1/\eps^2)$ samples always suffice for additive error~$\eps$, but also $O(1/\eps)$ samples suffice to distinguish $\mu \leq \eps/2$ from $\mu \geq \eps$.
Our \Cref{thm:main0} (and also the much earlier work on amplitude estimation~\cite{BHT98}) implies that ``with the code'', a quantum algorithm can improve these bounds to $n = O(1/\eps)$ and $n = O(1/\sqrt{\eps})$, respectively.

The simplest case of this corresponds to Grover's algorithm~\cite{Gro96} for the Unique-Circuit-SAT problem.
Suppose~$C$ is a classical $m$-input, $1$-output Boolean circuit with~$S$ gates, promised to be either unsatisfiable, or with a unique satisfying assignment.
Replacing its inputs by coin-flip gates, we thereby obtain classical ``code'' (in the model from \Cref{item:classical} above) for a Bernoulli random variable~$\by$ that either has mean $\mu = 0$ or mean $\mu = 2^{-m}$.
Grover's special case of \Cref{thm:main0} then shows that $n = O(1/\sqrt{2^{-m}}) = O(\sqrt{2}^m)$ samples --- and hence $O(\sqrt{2}^m \cdot S)$ quantum circuit complexity --- suffices to decide Unique-Circuit-SAT.

\paragraph{Distinguishing classical probability distributions.}
This application illustrates the importance of considering \emph{non}-bounded random variables, and achieving an additive guarantee that involves the standard deviation~$\sigma$.

Suppose $q$ and $r$ are fixed, known probability distributions on~$[D]$, and that an algorithm has access to samples from an unknown probability distribution~$p$ on~$[D]$, promised to be either $q$~or~$r$.
It is well known that the sample complexity needed to distinguish $p = q$ from $p = r$ (with error probability at most~$1/3$, say) is $\Theta(1/H(q,r)^2)$, where $H(q,r)^2 = \sum_{i=1}^D (\sqrt{q_i} - \sqrt{r_i})^2$ denotes the squared \emph{Hellinger distance} between~$q$ and~$r$.\footnote{Recall that $\tfrac12 H(q,r)^2 = 1-\mathit{BC}(q,r)$, where $\mathit{BC}(q,r) = \sum_{i=1}^D \sqrt{q_i}\sqrt{r_i}$ is the \emph{Bhattacharyya coefficient}, which clearly tensorizes: $\mathit{BC}(q^{\otimes n}, r^{\otimes n}) = \mathit{BC}(q,r)^n$.  Then the lower bound follows from the known total variation lower bound $\mathit{TV}(q,r) \geq \tfrac12 H(q,r)^2$.  For the upper bound, read on. \label{foot:BC}}
A recent work of Belovs~\cite{Bel19} shows that if one has ``the code'' for~$p$, there is a quantum algorithm that can distinguish $p = q$ from $p = r$ using just $\Theta(1/H(q,r))$ samples.

Here we show how Belovs's result is simply recovered from our \Cref{thm:main0}.
Writing $H = H(q,r)$, one way to distinguish $q$ and~$r$ is through mean estimation on the random variable $y : [D] \to \R$ defined by\footnote{The reader may verify that defining $0/0 = 0$ takes care of edge cases in what follows.}
\begin{equation}
    y(i) = \frac{\sqrt{q_i} - \sqrt{r_i}}{\sqrt{q_i} + \sqrt{r_i}} = \frac{(\sqrt{q_i} - \sqrt{r_i})^2}{q_i - r_i}.
\end{equation}
Writing $\mu_q = \E_q[\by]$, $\sigma^2_q = \Var_q[\by]$ and $\mu_r$, $\sigma_r$ analogously, observe that we have
\begin{equation}
    \mu_q - \mu_r = \sum_{i=1}^D (q_i - r_i) y(i) = \sum_{i=1}^D (\sqrt{q_i} - \sqrt{r_i})^2 = H^2
\end{equation}
and
\begin{equation}
    \sigma_q^2 + \sigma_r^2 \leq {\E}_q[\by^2] + {\E}_r[\by^2]  = \sum_{i=1}^D (q_i + r_i) \frac{(\sqrt{q_i} - \sqrt{r_i})^2}{(\sqrt{q_i} + \sqrt{r_i})^2} \leq H^2,
\end{equation}
where the inequality here uses $(q_i + r_i)/(\sqrt{q_i} + \sqrt{r_i})^2 \leq 1$.
From this we see that $p = q$ vs.\ $p = r$ can be distinguished by estimating the mean $\E_p[\by]$ to additive accuracy~$H^2/2$; and moreover,  that $\sigma_p \leq H$.
Classically (or without ``the code''), we need to ensure $\sigma_p/\sqrt{n}  \leq H/\sqrt{n} < H^2/2$, and can only say that  $n = O(1/H^2)$ samples suffice.
But \emph{with} the code, our quantum algorithm from \Cref{thm:main0} shows that only $H/n < H^2/2$ is needed; i.e., $n = O(1/H)$ samples suffice, matching Belovs's result.
(Indeed, careful inspection of Belovs's work shows that his algorithm can be thought of as performing mean estimation/distinguishing on~$\by$, additionally relying on the fact that~$H$ is a \emph{known} upper bound for~$\sigma$.)

\paragraph{Instance-dependent algorithms for multiplicative mean estimation.}
Consider the \cite{BHT98} entry of \Cref{table:prior}; it says that with a fixed budget of $O(n)$ samples, one can estimate the mean~$\mu$ of a Bernoulli random variable~$\by$ to additive error~$\sigma/n \leq \sqrt{\mu}/n$.
With this guarantee, one cannot even distinguish mean~$\mu$ from mean~$0$ unless $n \geq 1/\sqrt{\mu}$.
On the other hand, provided $n \geq 2/\sqrt{\mu}$, say, the estimate is accurate to within a \emph{multiplicative} factor of~$2$.
If this is one's only goal, one might wish for an algorithm that \emph{stops early} --- after only $O(1/\sqrt{\mu})$ samples --- obtaining a factor-$2$ approximation of~$\mu$ \emph{despite not knowing~$\mu$ a priori}.
This is the idea of \emph{sequential analysis}, from statistics.
Note that once~$\mu$ is known to a factor of~$2$, one can use the nonadaptive~\cite{BHT98} result to get a refined factor-$(1+\eps)$ approximation using $n = O(1/(\eps \sqrt{\mu}))$ samples.

This sort of ``instance-dependent'' guarantee was provided even for $[0,1]$-bounded random variables in the work~\cite{BHMT00}.
Precisely, their algorithm takes a parameter $\eps > 0$ and estimates the mean~$\mu$ of a $[0,1]$-bounded random variable to a factor of~$1+\eps$ using $O(1/(\eps \sqrt{\mu}))$ samples.

Following the theme of \Cref{table:prior}, one might wish to improve this result to take into account the standard deviation~$\sigma$.
Factor-$(1+\eps)$ approximation corresponds to additive error $\eps \mu$, and equating this with~$\sigma/n$ suggests that an improved instance-dependent sample complexity of $O(\sigma/(\eps \mu))$ might be possible.
Indeed, if a constant-factor upper bound on the ``coefficient of variation'' $|\sigma/\mu|$ happens to be known, this is immediate.
However if no prior assumptions are made, Hamoudi~\cite{Ham21} observes that a lower bound of Nayak~\cite{Nay99} shows that up to constant factors, no bound better than $O(\max\{\sigma/(\eps \mu), 1/\sqrt{\eps \mu}\} )$ is possible, even for Bernoulli random variables.
On the other hand, Hamoudi also uses the classical sequential analysis methods of~\cite{DKLR00} to show that the preceding bound \emph{can} be obtained, up to polylog factors, for $[0,1]$-bounded random variables.
The technique is a direct reduction to his instance-\emph{independent} result from \Cref{table:prior}.
We may apply the same reduction using our improved result, thereby obtaining:
\begin{theorem}
    In the setting of \Cref{thm:main0}, if $\by$ is $[0,1]$-bounded, there is an algorithm that, given $\eps > 0$, has the following behavior except with probability at most~$1/3$:  It obtains $O(\max\{\sigma/(\eps \mu), 1/\sqrt{\eps \mu}\})$  samples (dependent on the unknown~$\mu$), and outputs an estimate $\widehat{\bmu}$ such that
    \begin{equation}
        (1-\eps)\mu \leq \widehat{\bmu} \leq (1+\eps)\mu.
    \end{equation}
\end{theorem}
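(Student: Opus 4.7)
The proof is a two-phase classical sequential-analysis reduction to \Cref{thm:main0}, paralleling Hamoudi's~\cite{Ham21} reduction but with our polylog-free subroutine plugged in.  The strategy is to first obtain a constant-factor estimate of~$\mu$ using roughly $1/\sqrt{\mu}$ samples, and then refine it to multiplicative accuracy $1 \pm \eps$ using roughly $\sigma/(\eps\mu)$ additional samples.

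For the first phase, run \Cref{thm:main0} at geometrically growing budgets $n_k = 2^k$.  Because $\by \in [0,1]$ forces $\sigma \leq \sqrt{\mu}$, each successful call returns an estimate $\widehat{\bmu}_k$ with $\abs{\widehat{\bmu}_k - \mu} \leq \sqrt{\mu}/n_k$.  Halt at the first index $k^\circ$ with $\widehat{\bmu}_{k^\circ} \geq C/n_{k^\circ}^2$ for a suitable constant~$C$; a short case analysis shows that this rule triggers at $n_{k^\circ} = \Theta(1/\sqrt{\mu})$, so that with high probability $\widetilde{\bmu} := \widehat{\bmu}_{k^\circ}$ is within a factor of~$2$ of~$\mu$.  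By geometric summation, the total sample cost of this phase is $O(1/\sqrt{\mu}) \leq O(1/\sqrt{\eps\mu})$.

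For the second phase, we target additive accuracy $\eps\widetilde{\bmu}/4$, which is at most $\eps\mu/2$.  Following the \cite{DKLR00}-style doubling trick, invoke \Cref{thm:main0} at successive budgets $m_j = 2^{k^\circ + j}$, each time producing two independent estimates, and halt when the two agree to within $\eps\widetilde{\bmu}/8$.  Since a call at budget~$m$ has error $\sigma/m$ on success, a standard sequential-stopping analysis shows the rule triggers with high probability at $m_j = O(\max\{\sigma/(\eps\mu),\, 1/\sqrt{\eps\mu}\})$, and the returned estimate is then within additive error $\eps\mu$ of~$\mu$, as required.

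The main technical obstacle is keeping the aggregate failure probability below~$1/3$ across an \emph{a priori} unbounded sequence of doublings, without introducing a polylog sample overhead.  This is handled by assigning the $k$th iteration failure budget $O(1/k^2)$ (resp.\ $O(2^{-j})$ in the second phase) and amplifying each call to \Cref{thm:main0} by the median of a short sequence of independent repetitions; the per-iteration overhead is absorbed into the geometric growth of~$n_k$ and~$m_j$, exactly as in~\cite{DKLR00,Ham21}.
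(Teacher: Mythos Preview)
Your proposal is essentially the paper's approach: the paper gives no proof beyond the sentence ``We may apply the same reduction [Hamoudi's, based on~\cite{DKLR00}] using our improved result,'' and your two-phase doubling sketch is a faithful elaboration of exactly that reduction.

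Two small points where your write-up is looser than it should be. First, the claim that the median-amplification overhead is ``absorbed into the geometric growth'' is not literally true: with $\delta_k = O(1/k^2)$ the stage-$k$ cost is $\Theta(2^k \log k)$, and $\sum_{k \le K} 2^k \log k = \Theta(2^K \log K)$, leaving a $\log\log(1/\mu)$ factor; similarly $\delta_j = O(2^{-j})$ leaves a $\log$ factor in Phase~2. Second, the Phase~2 halting rule ``stop when two independent estimates agree to within $\eps\widetilde{\bmu}/8$'' does not by itself bound the error of the output: on a successful call each estimate is within $\sigma/m_j$ of~$\mu$, so the two can agree closely while $\sigma/m_j$ is still larger than~$\eps\mu$. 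A cleaner Phase~2 first obtains a constant-factor upper bound on~$\sigma$ (or on $\max\{\sigma,\sqrt{\eps\mu}\}$) and then sets the budget directly, as in~\cite{DKLR00,Ham21}. The paper does not spell out either point, so you match its level of detail; but be aware these are the places where the actual work lives.
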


\paragraph{Algorithms for finance.} As another application, we briefly describe some real-world use cases of the Monte Carlo method in finance. We give only an overview of the financial terms used; for more information, we refer readers to a survey on quantum algorithms in finance~\cite{OML19,BvDJKP20,HGLGSSPA22} or a textbook on the mathematics of finance~\cite{Lue14}. Using quantum computers to solve the financial problems described below has been studied in some detail in prior work; see, e.g.,~\cite{RGB18,WE19,SESZISW20,CKMSWZ21}.

In finance, a derivative is a contract that derives its value from some underlying variable, such as the price of a specific stock. A ``European call option'' is a simple example of a derivative. Let us fix an underlying stock whose price on day $i$ is denoted $y_i$. A European call option with a strike price of $K$ and a maturity date of $T$ days is a contract that on day $T$ rewards the contract holder with $\max\{0,y_T-K\}$. In words, if the stock price on day $T$ is above $K$, the contract holder is rewarded with the difference, and otherwise receives nothing. Now if we have a probabilistic model for the daily price movement of the stock, we can infer a probability distribution over the possible values $y_T$. To determine a fair price for this call option (under our model), we need to compute the expected value of $y_T$ under this probability distribution.
In practice the probability distribution is efficiently sampleable, and the computational bottleneck is to compute this expected value, which is clearly a Monte Carlo mean estimation task. For example, a very simple model might be that the stock price increases by a factor of $f_0$ with probability $p$ and decreases by a factor of $f_1$ with probability $1-p$. A more commonly used model is to assume the stock price follows geometric Brownian motion, as in the Black--Scholes--Merton model~\cite{BS73,Mer73}. Under this model, European call options can actually be priced analytically, and we don't need to use Monte Carlo methods. But more generally, an option may depend on more than one underlying asset, and the payoff function can be be a complicated function of the entire history of stock prices $y_0,\ldots,y_T$. In such cases analytical solutions may not exist, but the Monte Carlo method works just fine as long as the distribution is efficiently sampleable and the payoff function is efficiently computable.

Note also that there is no generic reason why the derivative price should be in a known bounded interval or have known standard deviation bound.
Indeed, in the analysis from~\cite{CKMSWZ21} on pricing autocallable and TARF derivatives, significant gate complexity arose due to the errors incurred by artificially truncating prices to bounded intervals.  This suggests that our methods, which don't require any such bounds, might be helpful.

Other examples from finance where Monte Carlo methods are used in practice include the computation of Value at Risk (VaR) and Conditional Value at Risk (CVaR); these give more examples where the techniques of this paper can be used to give a quadratic speedup using a quantum computer.

\paragraph{More applications in TCS.} There are innumerable additional applications of Monte Carlo mean estimation throughout theoretical computer science --- simulated annealing algorithms, approximation of partition functions,  MCMC approximate counting algorithms, subgraph count estimation algorithms, data stream estimation algorithms, etc.; see, e.g.,~\cite{Mon15,Ham21} for some illustrations.

\section{Preliminaries}     \label{sec:prelims}

In this section we formally define random variables, and what it means to ``have the code'' for them.
In short, we use the same model as Montanaro~\cite{Mon15} --- essentially, \Cref{item:general} in the scenarios from \Cref{list:code}.
(The reader may also refer to the thesis of Hamoudi~\cite{Ham21}, where the model is discussed in careful detail.)

\subsection{Probability distributions}
Before defining random variables, we discuss probability distributions.
As our random variables will be implemented by finite circuits, it suffices to discuss finite probability distributions.
\begin{definition}[Finite probability space]
    A finite probability space is a pair $(\Omega,p)$ where $\Omega$ is a finite set of ``outcomes'' and $p:\Omega \to \R$ is a probability distribution, satisfying  $p(\omega) \geq 0$ for all $\omega \in \Omega$, and $\sum_{\omega \in \Omega} p(\omega)=1$.
\end{definition}
As suggested in the scenarios of \Cref{list:code}, there are several ways a probability distribution may be implemented with a classical or quantum circuit.
(See \cite[Sec.~3]{Bel19} for a somewhat detailed discussion/comparison.)
We will prefer the most general one there, \Cref{item:general}, where a draw from the distribution is obtained by measuring the output of a quantum circuit~$\U$ and possibly discarding some of the result.
This is a very natural model, though it does not seem to have a common and succinct name; since~$\U$ can be used to produce ``synthetic data'', we will term it a ``synthesizer'':
\begin{definition}[Synthesizer]\label{def:synthesizer}
    Let $p$ be a probability distribution on~$\Omega$.
    A synthesizer for~$p$ is any unitary circuit~$\U$ that performs the map
    \begin{equation}    \label{eqn:synthoutput}
            \U \ketzero = \sum_{\omega \in \Omega} \sqrt{p(\omega)}\ket{\omega}\ket{\garbage_\omega},
    \end{equation}
    where: $\ketzero$ is any easy-to-prepare fixed state (for definiteness, say $\ketzero = \ket{0^k}$ where $k$ is the number of input qubits to~$\U$); $\ket{\omega}$ is a normalized vector representing~$\omega$; and, $\ket{\garbage_\omega}$ is any normalized ``garbage vector''.
    In a typical implementation we would have the space $\Omega = [D]$ (for $D$ a power of~$2$), with $\ket{\omega}$ being the $\log_2 D$-qubit representation of~$\omega$ in the computational basis.
\end{definition}
Observe that in the typical scenario, if we produce $\U \ketzero$ and measure the first register (ignoring/discarding the second), we indeed obtain the outcome~$\omega$ with probability~$p(\omega)$.
\begin{example}
    For $\Omega = \{0, 1, 2, \dots, 2^n-1\}$, the uniform distribution~$p$ on~$\Omega$ has, as a synthesizer, the circuit that consists of applying a Hadamard gate to each qubit of $\ketzero = \ket{0^n}$.
\end{example}
\begin{example}
    Continuing the previous example, suppose that as in Grover's algorithm, an additional circuit ``marks'' the outcomes $M \subseteq \Omega$ by producing
    \begin{equation}
        \sum_{\ell=0}^{2^n-1} \ket{y_\ell}\ket{\ell} , \quad \text{where } y_\ell = \begin{cases}
            1 & \text{if $\ell \in M$,} \\
            0 & \text{if $\ell \not \in M$.}
        \end{cases}
    \end{equation}
    (Here contrary to common convention, we put the ``flag register'' on the left rather than the right.)
    The above state may be rewritten as
    \begin{equation}    \label{eqn:grovsyn}
        \sqrt{\tfrac{2^n - |M|}{2^n}} \ket{0} \ket{\garbage_0}     + \sqrt{\tfrac{|M|}{2^n}} \ket{1} \ket{\garbage_1} 
    \end{equation}
    with ``garbage vectors''
    \begin{equation}
        \ket{\garbage_0} = \tfrac{1}{\sqrt{2^n - |M|}}\sum_{\ell \not \in M} \ket{\ell}, \qquad 
        \ket{\garbage_1} = \tfrac{1}{\sqrt{|M|}}\sum_{\ell \in M} \ket{\ell}.
    \end{equation}  
    In this way, the composite algorithm producing the state in \Cref{eqn:grovsyn} may be seen as a synthesizer for the two-outcome Bernoulli probability with parameter $p = \frac{|M|}{2^n}$.
\end{example}

\begin{definition}[Having the code for a distribution]
        When we use the phrase ``having the code'' for a distribution~$p$, we refer only to having black-box access to controlled-$\U$ and controlled-$\U^\dagger$, where $\U$~is a synthesizer for~$p$.\footnote{Grover-type algorithms use~$\U^\dagger$, but we point out it is also very reasonable to require controlled-$\U$.  Without it, there would be no way to implement a synthesizer for a simple modification of~$p$ such as ``with probability $1/2$ draw from~$p$, with probability $1/2$ output~$0$''.}
        Of course, if we have white-box access to the circuit~$\U$, we can easily produce circuits for controlled-$\U$ and controlled-$\U^\dagger$.
\end{definition}

\begin{remark}
    Allowing ``garbage vectors'' in \Cref{eqn:synthoutput} is crucial to obtain an acceptable level of generality.
    Insisting that a synthesizer produce a ``coherent'' (garbage-free) version of \Cref{eqn:synthoutput} (as in the ``uniform'' model \Cref{item:uniform} of \Cref{list:code}) would be very limiting.  Indeed, efficient coherent synthesizers are typically impossible even when there is an efficient \emph{classical} sampling algorithm for the distribution.\footnote{For example, given a graph $G$, it is easy to uniformly sample from all automorphisms (vertex-labelings) of the graph. But if we could efficiently synthesize a quantum state that is the uniform superposition over all automorphisms, then we would be able to solve  Graph Isomorphism efficiently: We would simply create the state for~$G$, the state for~$H$, and check if they are the same or orthogonal.}
\end{remark}

\paragraph{Classical algorithms.}
The above synthesizer definition also covers the case of having a classical circuit with coin-flip gates that generates draws from~$p$ (as in the \Cref{item:classical} scenario).
Such a circuit can be converted into a deterministic circuit that accepts $r$~random bits as input. Then the deterministic circuit can be made reversible (with constant-factor overhead) using classical Toffoli and NOT gates~\cite{Ben73}.
The resulting  circuit will now accept some~$a$ ancillary input bits set to~$0$, and output the original output, along with some additional garbage bits. Next, we make a quantum circuit with the same behavior on classical basis states by replacing all the classical Toffoli and NOT gates with quantum Toffoli and quantum NOT gates.
Finally, the original $r$~random input bits can be replaced by~$r$ Hadamard gates with with~$\ket{0}$ inputs.  The result is a synthesizer circuit~$\U$ for the probability distribution, with gate complexity only a constant factor larger than that of the original classical circuit.\footnote{To see this explicitly worked out, see Appendix~A of the arXiv version of \cite{WSK+21}}

\subsection{Random variables}
Now we come to the main object of study in this paper, real random variables.
Formally, a (discrete, real) random variable is just a real-valued function on a probability space:
\begin{definition}[Random variable]
    Given a finite probability space $(\Omega,p)$, a random variable~$\by$ is defined by a function $y:\Omega \to \R$.
\end{definition}

\begin{definition}[Moments of a random variable]
    For a random variable $\by$, we define the following:
    \begin{align}
        \mu &\coloneqq \E[\by] = \sum_{\omega \in \Omega} p(\omega)y(\omega) \tag{mean or expected value}\\
        s^2 &\coloneqq \E[\by^2] = \sum_{\omega \in \Omega} p(\omega) y(\omega)^2
        \tag{second moment}    \\
        \sigma^2 &\coloneqq \Var[\by] = \E[(\by-\mu)^2] = \E[\by^2]-\mu^2 = s^2 - \mu^2
        \tag{variance} \\
\sigma &= \stddev[\by] \tag{standard deviation}
    \end{align}
\end{definition}
\begin{remark}
Our notation for the raw second moment,~$s^2$, is not standard (as opposed to the standard notation~$\sigma^2$ for the variance, aka ``central second moment'').
\end{remark}

Regarding implementation of a random variable $y : \Omega \to \R$, we assume a standard quantum oracle:
\begin{definition}[Having the code for a random variable]
    When we use the phrase ``having the code'' for a random variable~$\by$ on probability space~$(\Omega,p)$, this refers to having a synthesizer for~$p$, as well as having access to controlled-$\calY$ and controlled-$\calY^\dagger$, where $\calY$ is any unitary circuit with the behavior
    \begin{equation}    \label{eqn:FF}
        \calY \ket{\omega}\ket{0^b}\ket{0^c} = \ket{\omega}\ket{y(\omega)}\ket{0^c}
    \end{equation}
    for all $\omega \in \Omega$.
    Here it is assumed the real range of~$\by$ is encoded using~$b$ bits (e.g., in fixed-point representation); the extra~$c$ bits are for ancillas.
\end{definition}
\begin{remark}
    As with classical implementations of probability distributions, given a classical circuit computing $y : \Omega \to \R$ (with appropriate input/output encoding), one can efficiently convert it to a quantum circuit~$\calY$ as above (taking care to uncompute garbage).
\end{remark}
\begin{remark}
    Some readers may find it overly fussy that we have insisted on the mathematical definition of random variables as functions on probability spaces.
    However, it will be very convenient in our work to think of them in this way.\footnote{Such readers may also recall, e.g., how much simpler it is to prove Linearity of Expectation from the definition $\E[\by] = \sum_{\omega} p(\omega) y(\omega)$ than from $\E[\by] = \sum_y \Pr[\by = y] x$.}

    Consider also the unfussy notion of a random variable~$\by$ being implemented by a quantum (or classical randomized) circuit~$\calC$, wherein measuring $\calC\ketzero$ (and discarding garbage) directly yields a draw from~$\by$.
    In this case, we can formally define $\Omega = \mathrm{range}(\by)$, define $p(\omega) = \Pr[\by = \omega]$, treat~$\calC$ as a synthesizer for~$p$, and formally take~$y : \Omega \to \R$ to be the identity map (so that $\calY = \Id$ and $b = c = 0$ in \Cref{eqn:FF}).
\end{remark}

\section{Establishing \Cref{thm:maintask} --- Grover with complex phases}   \label{sec:mainthing}
\newcommand{\sbd}{\tfrac{1}{16}}
In this section we establish \Cref{thm:maintask}. First, in \Cref{sec:alg-desc} we fully describe the algorithm, which involves setting up a unitary~$\calU$ and performing phase estimation with an initial state~$\ket{\bone}$.
The next \Cref{sec:qpe} gives some generic preliminaries on phase estimation.
Subsequently, we need to analyze the eigenvalues and eigenvectors of our particular~$\calU$ --- or at least the eigenspaces in which~$\ket{\bone}$ mostly resides.
We introduce some notation in \Cref{sec:states}; then in \Cref{sec:eigo} we derive the key eigenvalue inequalities for our analysis and do two things:
\begin{itemize}
    \item Show the inequalities easily imply that  Quantum Phase Estimation achieves \Cref{thm:maintask}, except with worse constants (which nevertheless would suffice to solve our overall Mean Estimation task).
    \item Show that a sharper analysis of the eigenvalue inequalities would lead to \Cref{thm:maintask} with its constants as stated.
\end{itemize}
Subsequently in \Cref{sec:details}, we give the sharper analysis of the eigenvalue inequalities.
In \Cref{sec:elementary}, we observe that using Quantum Phase Estimation as a black box is arguably overkill for our problem (though it makes the analysis succinct); we illustrate how one can instead complete the algorithm via measuring~$\ket{\bone}$ against~$\calU^T \ket{\bone}$ for $T = \Theta(1/\eps)$.
Finally, in \Cref{sec:rankone} we observe that, in a certain sense, all of the eigenvalues and eigenvectors of~$\calU$ can be described geometrically and somewhat simply. 
We found that this description did not seem to simplify any of our preceding analysis, though it may aid in intuition.

\subsection{Algorithm description} \label{sec:alg-desc}
Since our algorithm is essentially just Grover's algorithm with complex phases, it's easy to fully describe the algorithm and its complexity. Proving correctness of the algorithm will then be the goal of the subsequent subsections.

Let $([D], p)$ be a probability space implemented by synthesizer~$\U$, so
\begin{equation}    \label{eqn:p0}
    \U \ketzero = \sum_{\ell=1}^D \sqrt{p(\ell)} \ket{\ell} \ket{\garbage_\ell}
\end{equation}
as in \Cref{eqn:synthoutput}.  Here we have written $\ell \in [D]$ instead of $\omega \in \Omega$ to make the notation less laborious (but note that the $\ket{\ell}$ above, really $\ket{\omega}$, need not literally denote the $\ell$th standard basis vector). Let~$\by$ be a real random variable defined by $y : [D] \to \R$ and  computed by circuit~$\calY$ as in \Cref{eqn:FF}.
In this section we will write
\begin{equation}
    y_\ell \text{ instead of } y(\ell).
\end{equation}

We may now define the key unitary used by our algorithm that accomplishes the Main Task from \Cref{thm:maintask}. 
As in Grover's algorithm, it is composed of two parts:
\begin{equation}\label{eq:mainU}
    \calU = \Refl \cdot \RotF.
\end{equation}
\begin{definition}  \label{def:refl}
    The operator $\Refl$ (essentially the ``Grover diffusion operator'' vis-a-vis~$p$) is defined by
    \begin{equation}
        \Refl = \text{$\U$}(2\text{$\ketzero\!\brazero$} - \text{$\Id$})\text{$\U^\dagger$}.
    \end{equation}
\end{definition}
\begin{definition}
    The operator $\RotF$ (the ``phase oracle'') is defined by
    \begin{equation}
        \RotF \ket{\ell}\ket{\garbage_\ell} = e^{\i \alpha_\ell} \ket{\ell}\ket{\garbage_\ell} \quad \forall \ell \in [D],
    \end{equation}
    where the angles $\alpha_\ell \coloneqq -2 \arctan y_\ell$ are defined so that
    \begin{equation}    \label{eqn:keyrot}
        e^{\i \alpha_\ell}(1+\i y_\ell) = 1-\i y_\ell.
    \end{equation}
\end{definition}
\begin{remark}    \label{rem:uses}
    The operator $\Refl$ is evidently efficiently computable using two applications of ``the code'': one application of~$\U$ and one application~$\U^\dagger$.
    Additionally, as we will later use quantum phase estimation (or, at least, the Hadamard test), we will in fact require controlled-$\calU$, not just $\calU$ itself, and hence  will really need applications of controlled-$\U$ and controlled-$\U^\dagger$.

    The operator $\RotF$ is also efficiently computable using two applications of ``the code'': Given~$\calY$ as in \Cref{eqn:FF}, we adjoin~$\ket{0^b}$ and apply~$\calY$ to 
    get $\ket{\ell}\ket{\garbage_\ell}\ket{y_\ell}$.\footnote{Formally, we will also need to adjoin ancillas, but these will always be set to all-$\ket{0}$'s and restored to all-$\ket{0}$'s, and thus may be safely ignored. As is conventional, we will avoid further mention of them.}
    We may then employ a classical routine (the computational efficiency and precision of which are discussed in \Cref{sec:gates}) to compute~$\ket{\alpha_\ell}$ from~$\ket{y_\ell}$, multiply by the phase $e^{\i \alpha_\ell}$, uncompute with the help of~$\calY^\dagger$, and thus finally reach $e^{\i \alpha_\ell}\ket{\ell}{\ket{\garbage_\ell}}$.  
    Recall again that we will eventually use controlled-$\calU$, not just~$\calU$, and hence again we will really need one application each of controlled-$\calY$ and controlled-$\calY^\dagger$.

    Thus overall (controlled-)$\calU$ can be efficiently implemented with four uses of ``the code'' for~$\by$.
\end{remark}

The main claim in the later analysis is that the state $\U \ketzero$ has high overlap with the eigenvectors of $\calU$ of eigenphase approximately $2\abs{\mu}$. Then employing phase estimation with precision $\eps/6$, which requires $O(1/\eps)$ uses of~$\calU$, will allow us to distinguish the two ranges of~$\abs{\mu}$ and complete the proof of \Cref{thm:maintask}.

\subsection{Generic phase estimation setup} \label{sec:qpe}
In this subsection, let $\calU$ denote \emph{any} generic unitary operator on $\C^D$.
Suppose we perform phase estimation (or a simpler, Grover-like algorithm) with~$\calU$ and ``starting state''~$\ket{\sigma}$.
Then we will need to know about the eigenvalue(s) of~$\calU$ corresponding to the eigenvector(s) that~$\ket{\sigma}$ is close to.
Let us introduce some notation to facilitate this:
\begin{notation}    \label{not:eigen}
    Fix an eigendecomposition of $\calU$,
    \begin{equation}
        \calU = \sum_{j=1}^D e^{\i \theta_j} \ket{u_j}\!\bra{u_j},
    \end{equation}
    with $-\pi < \theta_j \leq \pi$.
    Given some $\ket{\sigma}$, we express it in $\calU$'s eigenbasis as
    \begin{equation}    \label{eqn:zeta}
        \ket{\sigma} = \sum_{j=1}^D \hat{\sigma}_j \ket{u_j}, \qquad \hat{\sigma}_j \coloneq \braket{u_j|\sigma}.
    \end{equation}
    When $\ket{\sigma}$ is a \emph{unit} vector,
    Pythagorus tells us the squared coefficients $|\hat{\sigma}_1|^2, \dots, |\hat{\sigma}_D|^2$ form a probability distribution on~$[D]$.
    In this case we will write $\bj \sim J_{\calU}(\ket{\sigma})$ to denote that~$\bj$ is drawn according to this probability distribution;
    we will also write $\btheta \sim \Theta_{\calU}(\ket{\sigma})$ to denote that $\btheta$ is the random angle formed by drawing $\bj \sim J_{\calU}(\ket{\sigma})$ and then setting $\btheta = \theta_{\bj}$.
\end{notation}
\begin{remark}  \label{rem:QPE}
    One could say the random variable $\btheta \sim \Theta_{\calU}(\ket{\sigma})$ is the output of ``Idealized Phase Estimation''; that is, phase estimation making no  error.
    Indeed, the \emph{actual} behavior of Quantum Phase Estimation~\cite{Kit95,CEMM98} when run with~$\calU$ and $\ket{\sigma}$ is that, after $O(\log(1/\delta)/\eps)$ applications of controlled-$\calU$, the output is a random variable~$\btheta'$ with the following property:
    
    \emph{There is a probabilistic coupling between $\btheta$ and $\btheta'$ under which $\Pr[|\btheta - \btheta'| > \eps] \leq \delta$.}
\end{remark}

\medskip

On the topic of closeness between $J_{\calU}(\cdot)$ distributions, the following fact relates the fidelity between two different starting states and the Hellinger distance between their associated~$J_{\calU}(\cdot)$'s:
\begin{proposition} \label{prop:bhatt}
    Given $\calU$ as in \Cref{not:eigen}, suppose $\ket{\sigma}, \ket{\tau} \in \C^d$ are unit vectors. Write $q_1, \dots, q_d$ (respectively, $r_1, \dots, r_d$) for the probabilities of $J_{\calU}(\ket{\sigma})$ (respectively $J_{\calU}(\ket{\tau})$).
    Then we have the following Bhattacharyya coefficient / Hellinger-squared bound:
    \begin{equation}
        \mathit{BC}(q,r) \geq \abs{\braket{\sigma|\tau}}; \qquad\text{in other words,}\qquad H(q,r)^2 \leq 2(1-\abs{\braket{\sigma|\tau}}).
    \end{equation}
\end{proposition}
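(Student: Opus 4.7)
The plan is to expand everything in $\calU$'s eigenbasis and apply the triangle inequality. Write $\hat{\sigma}_j = \braket{u_j|\sigma}$ and $\hat{\tau}_j = \braket{u_j|\tau}$, so by definition $q_j = |\hat{\sigma}_j|^2$ and $r_j = |\hat{\tau}_j|^2$. Since the $\ket{u_j}$ form an orthonormal basis, $\braket{\sigma|\tau} = \sum_j \overline{\hat{\sigma}_j}\hat{\tau}_j$.

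Now I would simply apply the triangle inequality for complex numbers:
\begin{equation*}
    \abs{\braket{\sigma|\tau}} = \Bigl|\sum_{j=1}^d \overline{\hat{\sigma}_j}\,\hat{\tau}_j\Bigr| \leq \sum_{j=1}^d \abs{\hat{\sigma}_j}\abs{\hat{\tau}_j} = \sum_{j=1}^d \sqrt{q_j}\sqrt{r_j} = \mathit{BC}(q,r),
\end{equation*}
which establishes the first claim.

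For the second claim I would just use the standard identity relating Hellinger distance to the Bhattacharyya coefficient. Since $\sum_j q_j = \sum_j r_j = 1$ (both $\ket{\sigma}$ and $\ket{\tau}$ are unit vectors, so the $q_j$'s and $r_j$'s really are probability distributions), expanding the square gives
\begin{equation*}
    H(q,r)^2 = \sum_j (\sqrt{q_j} - \sqrt{r_j})^2 = 2 - 2\,\mathit{BC}(q,r) \leq 2\bigl(1 - \abs{\braket{\sigma|\tau}}\bigr),
\end{equation*}
using the first part in the last step.

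There is really no hard step here; the only thing worth double-checking is that the unit-norm hypothesis on $\ket{\sigma}$ and $\ket{\tau}$ is exactly what we need so that $(q_j)$ and $(r_j)$ are bona fide probability distributions, which makes the Hellinger identity $H^2 = 2 - 2\mathit{BC}$ valid. The proof is a one-line application of the triangle inequality, making clear that going through an arbitrary unitary's eigenbasis cannot \emph{decrease} overlap in the Bhattacharyya sense — measuring in $\calU$'s eigenbasis only loses information (phases) compared to the original inner product.
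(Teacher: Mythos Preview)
Your proof is correct and is essentially the same as the paper's: both expand $\braket{\sigma|\tau}$ in the eigenbasis of $\calU$, apply the triangle inequality to get $\mathit{BC}(q,r)\ge|\braket{\sigma|\tau}|$, and then use the identity $H(q,r)^2 = 2-2\mathit{BC}(q,r)$. The only cosmetic difference is that the paper introduces the auxiliary quantities $\lambda_j = \sqrt{r_j/q_j}$ along the way (these are reused later in the sharper eigenvalue analysis), whereas you go straight to the bound.
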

\begin{proof}
    Writing $\lambda_j = \sqrt{r_j/q_j}$ (and taking $\lambda_j = 0$ if $q_j = 0$), we have
    \begin{align}    \label[ineq]{ineq:hqq}
        H(q,r)^2 = \sum_j q_j (1 - \lambda_j)^2 
= 2\parens*{1 - \sum_j \sqrt{q_j} \sqrt{r_j} }
        &= 2\parens*{1 - \sum_j \abs{\braket{u_j|\sigma}} \cdot \abs{\braket{u_j|\tau}} }
        \\
        &\leq 2 \parens*{1 - \abs*{\sum_j \braket{\sigma|u_j} \cdot \braket{u_j|\tau}} }
        = 2 \parens*{1 - \abs*{\braket{\sigma|\tau}}}.  \qedhere
    \end{align}
\end{proof}

When it comes to analyzing eigenvalues $e^{\i \theta}$ of~$\calU$, we will use the following quantity --- quaintly called the \emph{haversine} of angle~$\theta$ --- to measure how ``nontrivial'' rotation-by-$\theta$ is:
\begin{notation}
        For any $\theta \in \R$ we may write
        $\displaystyle
            \hav \theta \coloneqq \abs*{\frac{1-e^{\i \theta}}{2}}^2 = \frac{1-\cos\theta}{2} = \parens*{\sin \frac{\theta}{2}}^2  \in [0,1].
        $
\end{notation}
From \Cref{eqn:zeta} we  have $\parens*{\frac{\text{$\Id$}-\calU}{2}}^{\pm 1} \ket{\sigma} = \sum_j \hat{\sigma}_j \parens*{\frac{1-e^{\i\theta_j}}{2}}^{\pm 1} \ket{u_j}$, and thus the above fact implies:
\begin{proposition}     \label{prop:sin2-dilation}
    Let $\ket{\sigma}$ be a unit vector, and assume for $\btheta \sim \Theta_{\calU}(\ket{\sigma})$ that $\btheta$ is never~$0$.  Then
    \begin{equation}
        \norm*{ \parens*{\frac{\text{$\Id$} - \calU}{2}}^{\pm 1} \ket{\sigma}}^2 
        = \E_{\btheta \sim \Theta_{\calU}(\ket{\sigma})}\bracks*{(\hav \btheta)^{\pm 1}}.
    \end{equation}
\end{proposition}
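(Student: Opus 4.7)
The proof is a direct bookkeeping calculation from the formula displayed just above the proposition together with the definitions in \Cref{not:eigen}. Here is my plan.

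First, I would start from the identity
\begin{equation}
\parens*{\frac{\Id-\calU}{2}}^{\pm 1} \ket{\sigma} = \sum_j \hat{\sigma}_j \parens*{\frac{1-e^{\i\theta_j}}{2}}^{\pm 1} \ket{u_j},
\end{equation}
which is what the text tells us to take for granted. For the ``$+1$'' case this is immediate from $\calU \ket{u_j} = e^{\i\theta_j}\ket{u_j}$ and linearity; for the ``$-1$'' case, the assumption that $\btheta \sim \Theta_{\calU}(\ket{\sigma})$ is never~$0$ means $\hat{\sigma}_j = 0$ whenever $\theta_j = 0$, so the inverse is well-defined on the relevant subspace (the terms with $\theta_j = 0$ contribute nothing to the sum, so the formula is unambiguous).

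Next I would take the squared norm of both sides. Since $\{\ket{u_j}\}$ is orthonormal, Pythagoras yields
\begin{equation}
\norm*{\parens*{\frac{\Id-\calU}{2}}^{\pm 1} \ket{\sigma}}^2 = \sum_j |\hat{\sigma}_j|^2 \cdot \abs*{\frac{1-e^{\i\theta_j}}{2}}^{\pm 2}.
\end{equation}
By the very definition of the haversine, $\abs{(1-e^{\i\theta_j})/2}^{2} = \hav \theta_j$, so $\abs{(1-e^{\i\theta_j})/2}^{\pm 2} = (\hav \theta_j)^{\pm 1}$. Meanwhile, by \Cref{not:eigen}, the numbers $|\hat{\sigma}_j|^2$ are exactly the probabilities defining the distribution $\Theta_{\calU}(\ket{\sigma})$ (with $\btheta = \theta_{\bj}$). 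Therefore the right-hand side is precisely $\E_{\btheta \sim \Theta_{\calU}(\ket{\sigma})}[(\hav \btheta)^{\pm 1}]$, and we are done.

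There is no real obstacle here; the only point requiring a moment of care is the well-definedness of the $-1$ power, which is arranged exactly by the hypothesis on~$\btheta$.
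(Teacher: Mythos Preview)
Your proposal is correct and follows exactly the approach the paper takes: the proposition is stated as an immediate consequence of the displayed identity $\parens*{\frac{\Id-\calU}{2}}^{\pm 1} \ket{\sigma} = \sum_j \hat{\sigma}_j \parens*{\frac{1-e^{\i\theta_j}}{2}}^{\pm 1} \ket{u_j}$ together with the definition of the haversine, and you have simply written out the Pythagoras step and the identification with the distribution~$\Theta_{\calU}(\ket{\sigma})$ explicitly. Your remark on why the $-1$ power is well-defined under the hypothesis is also the right justification.
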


\subsection{Quantum states corresponding to complex random variables} \label{sec:states}

We now return to our particular $\calU = \Refl \cdot \RotF$ as described in \Cref{eq:mainU}. Let us introduce some notation that allows us to conveniently talk about states on which this unitary acts.

\begin{notation}
    Let $z_1, \dots, z_D$ be any complex numbers.
    We may think of this list as defining a \emph{complex-valued} random variable~$\bz$ on~$([D],p)$.
    (To draw from~$\bz$, first choose $\bell \in [D]$ according to~$p$ and then set $\bz = z_{\bell}$.)
    Then we will also define the (not necessarily unit) vector
    \begin{equation}
        \ket{\bz} = \sum_{\ell=1}^D z_\ell \sqrt{p(\ell)} \ket{\ell} \ket{\garbage_\ell}.
    \end{equation}
    In particular, referring to \Cref{eqn:p0} we have
    \begin{equation}    \label{eqn:Uz}
        \U \ketzero = \ket{\bone},
    \end{equation}
    where $\bone$ denotes the random variable on~$([D],p)$ that is constantly~$1$.
\end{notation}
It is easy to compute the following:
\begin{fact}    \label{fact:ip}
    For complex-valued random variables $\bw,\bz$ on $([D],p)$ we have 
    \begin{equation}
        \braket{\bw|\bz} = {\E}_p[\overline{\bw} \bz] = \sum_{\ell=1}^D p(\ell) \ol{w}_\ell z_\ell.
    \end{equation}
    In particular, $\braket{\bone|\bz} = \E_p[\bz]$.
\end{fact}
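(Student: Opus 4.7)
The plan is straightforward: unfold the definitions of $\ket{\bw}$ and $\ket{\bz}$ using the notation just introduced, and then exploit the orthonormality of the computational-basis register to collapse the resulting double sum. Concretely, I would write
$$\ket{\bw} = \sum_{\ell=1}^D w_\ell \sqrt{p(\ell)} \ket{\ell}\ket{\garbage_\ell}, \qquad \ket{\bz} = \sum_{\ell'=1}^D z_{\ell'} \sqrt{p(\ell')} \ket{\ell'}\ket{\garbage_{\ell'}},$$
take the adjoint of the first to obtain $\bra{\bw}$, and expand the inner product as
$$\braket{\bw|\bz} = \sum_{\ell,\ell'=1}^D \overline{w_\ell}\, z_{\ell'}\, \sqrt{p(\ell)\, p(\ell')}\, \braket{\ell|\ell'}\, \braket{\garbage_\ell|\garbage_{\ell'}}.$$

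The key step is then to collapse the double sum to a diagonal one. Because the $\ket{\ell}$'s are distinct computational-basis states representing distinct outcomes in $[D]$, we have $\braket{\ell|\ell'} = \delta_{\ell\ell'}$, which pins $\ell = \ell'$; on the diagonal, $\ket{\garbage_\ell}$ is a normalized vector by the definition of a synthesizer, so $\braket{\garbage_\ell|\garbage_\ell} = 1$. Everything therefore reduces to $\sum_{\ell=1}^D p(\ell)\, \overline{w_\ell}\, z_\ell$, which by definition is $\E_p[\overline{\bw}\bz]$. The ``in particular'' clause follows immediately by setting $\bw = \bone$ (so each $\overline{w_\ell} = 1$), giving $\braket{\bone|\bz} = \sum_\ell p(\ell)\, z_\ell = \E_p[\bz]$.

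There is really no obstacle to flag. The only subtlety worth noting is that the garbage vectors $\ket{\garbage_\ell}$ for different $\ell$ are in general \emph{not} mutually orthogonal and may even coincide; this is harmless because orthogonality in the first register already annihilates every off-diagonal contribution. This is, in fact, the same feature that makes the whole synthesizer framework consistent: measuring the first register of $\U\ketzero$ in the computational basis returns outcome $\ell$ with probability $p(\ell)$ regardless of the structure of the garbage, which is precisely why \Cref{def:synthesizer} is the ``right'' notion to adopt.
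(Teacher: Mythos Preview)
Your proof is correct and is exactly the computation the paper has in mind; indeed, the paper does not even spell out a proof, merely prefacing the fact with ``It is easy to compute the following.'' Your unfolding of the definitions and use of the orthogonality of the outcome register is the intended (and essentially only) way to verify this.
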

\begin{remark}
    We will often consider \emph{non}-unit vectors~$\ket{\bz}$. 
    The vector $\ket{\bz}$ is only a properly normalized quantum state if $\E_p[|\bz|^2] = 1$.
    (For example, $\ket{\bone}$ \emph{is} a valid quantum state.)
\end{remark}
\begin{remark}
    Even if $\ket{\bw}, \ket{\bz}$ are unit vectors, and thus may be considered quantum states, one should \emph{not} consider them to be identical if they are equal up to a global phase.
    The reason is our algorithm will eventually introduce a control qubit (for phase estimation purposes), which will make global phases into relative phases.
\end{remark}

With this notation for states, we can now examine what the reflection operator does to a state. Recalling our new notation (particularly \Cref{eqn:Uz}), our reflection operator defined in \Cref{def:refl} is
\begin{equation}
    \Refl = \text{$\U$}(2\text{$\ketzero\!\brazero$} - \text{$\Id$})\text{$\U^\dagger$} = 2\ket{\bone}\!\bra{\bone} - \text{$\Id$}.
\end{equation}
    From \Cref{fact:ip} we see that
    $
        \Refl \ket{\bz} = \ket{\bz_{\mathrm{refl}}},
    $
    where
    \begin{equation}    \label{eqn:reff}
        \bz_{\mathrm{refl}} = 2{\E}_p[\bz] - \bz
    \end{equation}
    is the random variable in which  each~$z_\ell$ is replaced with its reflection through the ``barycenter''~$\E_p[\bz]$.

\subsection{Eigenvalue analysis}    \label{sec:eigo}
In this section we do the eigenvector/eigenvalue analysis of our operator $\calU = \Refl \cdot \RotF$, as a function of the random variable~$\by$.
We will use the notation
\begin{equation}
    \mu = {\E}_p[\by], \qquad s^2 = {\E}_p[\by^2].
\end{equation}
\noindent The key vectors for our analysis are the ``starting state''~$\ket{\bone}$, and the following vector:
\begin{equation}
    \ket{\bone + \bii \by} \coloneqq \sum_{\ell=1}^D (1+\i y_\ell)\sqrt{p(\ell)} \ket{\ell}\ket{\garbage_\ell}.
\end{equation}
Using \Cref{fact:ip}, we have:
\begin{fact}    \label{fact:ourip}
    $\displaystyle \braket{\bone + \bii \by|\bone + \bii \by} = {\E}_p[\bone^2 + \by^2] = 1+s^2$; and, $\braket{\bone|\bone + \bii \by} = 1 + \i \mu$, so $\displaystyle \abs*{\braket{\bone|\bone + \bii \by}} = \sqrt{1 + \mu^2} \geq 1$.
\end{fact}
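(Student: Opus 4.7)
The plan is to apply Fact~\ref{fact:ip} twice, once to each inner product, and then take absolute values. No new ideas are needed; this is a direct unpacking of the definition of $\ket{\bone + \bii \by}$ together with the explicit formula $\braket{\bw|\bz} = \E_p[\overline{\bw}\bz]$.

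First I would recognize that $\ket{\bone + \bii \by}$ is of the form $\ket{\bz}$ for the complex-valued random variable $\bz = \bone + \bii \by$, whose complex conjugate is $\overline{\bz} = \bone - \bii \by$. Plugging into Fact~\ref{fact:ip}, the norm-squared is
\begin{equation}
    \braket{\bone+\bii\by|\bone+\bii\by} = \E_p[(\bone - \bii \by)(\bone + \bii \by)] = \E_p[\bone^2 + \by^2] = 1 + s^2,
\end{equation}
using the definition $s^2 = \E_p[\by^2]$ and the fact that multiplying a purely real expression gives no imaginary part (the cross terms $\pm \bii \by$ cancel).

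Next I would compute the cross inner product. Taking $\bw = \bone$ (so $\overline{\bw} = \bone$) and $\bz = \bone + \bii\by$, Fact~\ref{fact:ip} yields
\begin{equation}
    \braket{\bone|\bone+\bii\by} = \E_p[\bone \cdot (\bone + \bii \by)] = 1 + \i \mu,
\end{equation}
since $\E_p[\bone] = 1$ and $\E_p[\by] = \mu$. Taking modulus, $|1 + \i \mu| = \sqrt{1 + \mu^2}$, which is at least $1$ since $\mu^2 \geq 0$.

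Since every step is a routine one-line calculation directly from the definitions, there is no real obstacle. The only thing worth being careful about is tracking the complex conjugation in Fact~\ref{fact:ip} (it's $\overline{\bw}$, not $\bw$), which is what makes the $\braket{\bone+\bii\by|\bone+\bii\by}$ computation come out to $1 + s^2$ rather than something with $\mu$ in it.
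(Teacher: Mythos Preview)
Your proposal is correct and matches the paper's approach exactly: the paper simply writes ``Using \Cref{fact:ip}, we have:'' before stating the fact, and your argument is precisely the routine unpacking of that citation, including the careful handling of the complex conjugation.
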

From this we see that if~$s$ is small (as we will assume), then $\ket{\bone + \bii \by}$ is close to being a unit vector, and this unit vector is close to~$\ket{\bone}$.
Let us introduce a normalized version of the vector:
\begin{notation}
    We write $\ket{\bonef} = \frac{1}{\sqrt{1+s^2}}\ket{\bone + \bii \by}$,  a unit vector.
\end{notation}

Perhaps the key intuition behind the analysis is that if $\mu = 0$, then the vector $\ket{\bone + \bii \by}$ is \emph{fixed} by~$\calU$ (i.e., it is an eigenvector of eigenvalue~$1$).
The following proposition generalizes this fact:
\begin{proposition} \label{prop:whogoeswhere}
$\displaystyle \frac{\Id - \RR}{2} \ket{\bone + \bii \by} = \i \mu \ket{\bone}.$
\end{proposition}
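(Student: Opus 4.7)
The plan is to compute $\calU \ket{\bone + \bii \by}$ directly by applying $\RotF$ first and then $\Refl$, and then algebraically rearrange to isolate $\frac{\Id - \calU}{2} \ket{\bone + \bii \by}$. The key observation is that the angles $\alpha_\ell$ were designed precisely so that $\RotF$ flips the imaginary part of $\bone + \bii \by$.

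First, I would apply $\RotF$. By the defining identity $e^{\i \alpha_\ell}(1 + \i y_\ell) = 1 - \i y_\ell$ from \Cref{eqn:keyrot}, acting diagonally on each term of $\ket{\bone + \bii \by} = \sum_\ell (1 + \i y_\ell)\sqrt{p(\ell)}\ket{\ell}\ket{\garbage_\ell}$ yields
\begin{equation}
    \RotF \ket{\bone + \bii \by} = \ket{\bone - \bii \by}.
\end{equation}
Next, I would apply $\Refl = 2\ket{\bone}\!\bra{\bone} - \Id$. By \Cref{fact:ip}, $\braket{\bone | \bone - \bii \by} = \E_p[\bone - \bii \by] = 1 - \i \mu$, so
\begin{equation}
    \Refl \RotF \ket{\bone + \bii \by} = 2(1 - \i \mu)\ket{\bone} - \ket{\bone - \bii \by}.
\end{equation}

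Finally, I would subtract and divide by two: using $\tfrac{1}{2}(\ket{\bone + \bii \by} + \ket{\bone - \bii \by}) = \ket{\bone}$, we get
\begin{equation}
    \frac{\Id - \calU}{2}\ket{\bone + \bii \by} = \tfrac{1}{2}\ket{\bone + \bii \by} - (1 - \i\mu)\ket{\bone} + \tfrac{1}{2}\ket{\bone - \bii \by} = \ket{\bone} - (1 - \i\mu)\ket{\bone} = \i\mu \ket{\bone}.
\end{equation}

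There is no real obstacle here; the proof is a two-line computation once the definitions are unwound. The only slightly non-obvious step is recognizing that the whole point of the phase $\alpha_\ell = -2\arctan y_\ell$ is to swap $1 + \i y_\ell$ with $1 - \i y_\ell$, so that $\RotF$ acts as complex conjugation on the amplitudes of $\ket{\bone + \bii \by}$. After that, $\Refl$ mixes the result back with $\ket{\bone}$ via the inner product $1 - \i \mu$, and the $\Id$ term cancels the real part, leaving only the imaginary $\i \mu \ket{\bone}$ contribution — which is why the statement is clean and proportional to $\mu$.
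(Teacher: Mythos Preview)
Your proof is correct and essentially identical to the paper's. The only cosmetic difference is that the paper applies $\Refl$ via the ``reflect through the barycenter'' formula \Cref{eqn:reff}, writing the result as the single ket $\ket{\bone + \bii(\by - \boldsymbol{2\mu})} = \ket{\bone + \bii\by} - 2\i\mu\ket{\bone}$, whereas you use the equivalent $2\ket{\bone}\!\bra{\bone} - \Id$ form and then combine $\tfrac12\ket{\bone + \bii\by} + \tfrac12\ket{\bone - \bii\by} = \ket{\bone}$ at the end; both routes are the same two-line computation.
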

\begin{proof}
    This follows from
    \begin{align}
        \RR \ket{\bone + \bii \by} &= \Refl \cdot \RotF \ket{\bone + \bii \by} \\
        &= \Refl \ket{\bone - \bii \by} \qquad \textrm{(using \Cref{eqn:keyrot})}\\
&= \ket{\bone + \bii(\by - \boldsymbol{2}\boldsymbol{\mu})} \qquad\textrm{\,(using \Cref{eqn:reff})}\\
        &= \ket{\bone + \bii \by}  - 2i\mu \ket{\bone}. 
        \qedhere
    \end{align}
\end{proof}

\begin{proposition} \label{prop:verse}
    Writing $\wt{\btheta} \sim \Theta_{\RR}(\ket{\bonef})$, we have
    $\E[\hav \wt{\btheta}] = \mu^2/(1+s^2)$.
\end{proposition}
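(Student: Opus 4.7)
The plan is to directly chain together the two preceding results: \Cref{prop:sin2-dilation} (with the $+1$ exponent) and \Cref{prop:whogoeswhere}, together with the elementary fact that $\ket{\bone}$ is a unit vector.

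First I would apply \Cref{prop:sin2-dilation} with $\ket{\sigma} = \ket{\bonef}$ and the exponent $+1$. Noting that the formula is valid even when some $\btheta_j = 0$ (since $\hav 0 = 0$ contributes nothing to either side), this yields
\begin{equation*}
    \E_{\wt{\btheta} \sim \Theta_{\RR}(\ket{\bonef})}[\hav \wt{\btheta}] = \norm*{\frac{\Id - \RR}{2}\ket{\bonef}}^2.
\end{equation*}
Next I would pull out the normalization factor $1/\sqrt{1+s^2}$ from $\ket{\bonef} = \tfrac{1}{\sqrt{1+s^2}}\ket{\bone+\bii\by}$, giving
\begin{equation*}
    \norm*{\frac{\Id - \RR}{2}\ket{\bonef}}^2 = \frac{1}{1+s^2}\norm*{\frac{\Id - \RR}{2}\ket{\bone + \bii \by}}^2.
\end{equation*}

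Now I would invoke \Cref{prop:whogoeswhere}, which tells us that $\frac{\Id - \RR}{2}\ket{\bone + \bii \by} = \i \mu \ket{\bone}$. Since $\E_p[\bone^2] = 1$, the state $\ket{\bone}$ is a genuine unit vector (as noted after \Cref{fact:ip}), so its squared norm is $1$, and hence $\norm{\i\mu\ket{\bone}}^2 = \mu^2$. Combining gives
\begin{equation*}
    \E[\hav \wt{\btheta}] = \frac{\mu^2}{1+s^2},
\end{equation*}
as required.

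There is really no obstacle here---the proposition is a two-line corollary of \Cref{prop:whogoeswhere} once one recognizes that \Cref{prop:sin2-dilation} is the right ``phase estimation $\leftrightarrow$ operator norm'' dictionary to invoke. The only mild subtlety is remembering that \Cref{prop:sin2-dilation} is stated under the assumption that $\btheta \neq 0$, but this hypothesis is only needed for the $-1$ exponent (to avoid division by zero); the $+1$ identity holds unconditionally because $\hav(0) = 0$. It is worth flagging this in the proof so the reader does not worry about the case $\mu = 0$, where $\ket{\bone + \bii \by}$ is actually a fixed point of $\RR$ and both sides of the identity are $0$.
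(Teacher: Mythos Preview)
Your proof is correct and follows exactly the same approach as the paper's own proof, which says only that the result is ``immediate by taking the squared-length of both sides in \Cref{prop:whogoeswhere} and then applying \Cref{prop:sin2-dilation} with exponent~$+1$.'' Your additional remark that the hypothesis $\btheta \neq 0$ in \Cref{prop:sin2-dilation} is unnecessary for the $+1$ exponent is a nice clarification that the paper omits.
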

\begin{proof}
    This is immediate by taking the squared-length of both sides in \Cref{prop:whogoeswhere} and then applying \Cref{prop:sin2-dilation} with exponent~$+1$.
\end{proof}

Somewhat peculiarly, for $\btheta \sim \Theta_{\calU}(\ket{\bone})$ we can also determine the expected \emph{reciprocal} of~$\hav \btheta$:
\begin{proposition} \label{prop:inverse}
    Writing $\btheta \sim \Theta_{\RR}(\ket{\bone})$, we have $\E[(\hav \btheta)^{-1}] = (1+s^2)/\mu^2$.

    \noindent (Technically, we must assume that $\btheta$ is never~$0$ and that $\mu \neq 0$.)
\end{proposition}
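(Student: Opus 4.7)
The plan is to mirror the proof of \Cref{prop:verse} but apply \Cref{prop:sin2-dilation} with exponent $-1$ instead of $+1$. That is, the target identity is equivalent to
\begin{equation}
    \norm*{\parens*{\tfrac{\Id - \RR}{2}}^{-1} \ket{\bone}}^2 = \frac{1+s^2}{\mu^2},
\end{equation}
so I want to explicitly compute the vector $\parens*{\tfrac{\Id - \RR}{2}}^{-1} \ket{\bone}$.

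The key input is \Cref{prop:whogoeswhere}, which already tells us that $\tfrac{\Id - \RR}{2}$ sends $\ket{\bone + \bii \by}$ to $\i\mu \ket{\bone}$. Assuming $\mu \neq 0$, I can invert this relation (which is legitimate once we know $\tfrac{\Id - \RR}{2}$ is invertible on the relevant subspace, i.e., that $\btheta$ is never~$0$) to get
\begin{equation}
    \parens*{\tfrac{\Id - \RR}{2}}^{-1} \ket{\bone} = \frac{1}{\i \mu} \ket{\bone + \bii \by}.
\end{equation}
Taking the squared norm of both sides and invoking the computation $\braket{\bone + \bii \by|\bone + \bii \by} = 1 + s^2$ from \Cref{fact:ourip}, I obtain $(1+s^2)/\mu^2$. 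Then \Cref{prop:sin2-dilation} (with exponent $-1$) converts this norm into $\E[(\hav \btheta)^{-1}]$, giving the proposition.

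There is essentially no obstacle: the only subtlety is justifying the inversion, which is exactly why the statement carries the caveats that $\mu \neq 0$ and that $\btheta$ is never~$0$ (so $\tfrac{\Id - \RR}{2}$ restricted to the support of $\ket{\bone}$ in $\RR$'s eigenbasis has no zero eigenvalue). With those assumptions, the argument is a two-line calculation parallel to \Cref{prop:verse}, and the surprising ``reciprocal'' phenomenon is ultimately just a reflection of the fact that $\ket{\bone + \bii \by}$ and $\ket{\bone}$ happen to be related by a single application of $\tfrac{\Id - \RR}{2}$ (up to the scalar $\i\mu$).
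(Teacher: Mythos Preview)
Your proposal is correct and follows essentially the same approach as the paper: rearrange \Cref{prop:whogoeswhere} to obtain $\parens*{\tfrac{\Id - \RR}{2}}^{-1}\ket{\bone} = \tfrac{1}{\i\mu}\ket{\bone + \bii\by}$, take the squared norm using \Cref{fact:ourip}, and apply \Cref{prop:sin2-dilation} with exponent~$-1$. Your added remarks on why the caveats $\mu \neq 0$ and $\btheta \neq 0$ are needed to justify the inversion are apt and match the paper's treatment.
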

\begin{proof}
    Rearranging the statement of \Cref{prop:whogoeswhere} gives
    \begin{equation}
        \parens*{\frac{\Id - \calU}{2}}^{-1}\ket{\bone} = \frac{1}{\i \mu} \ket{\bone + \bii \by}.
    \end{equation}
    The proof is completed by taking the squared-length on both sides and then applying \Cref{prop:sin2-dilation} with exponent~$-1$.
\end{proof}

Recall that, assuming~$s$ is small, we have that $\ket{\bone}$ is close to~$\ket{\bonef}$, and hence~${\btheta}$ should be similar in distribution to~$\wt{\btheta}$.
The preceding two propositions therefore suggest that $\hav \btheta = \sin^2(\btheta/2)$ ought to concentrate around $\mu^2/(1+s^2)$; i.e., $|\btheta|$ ought to concentrate around $2|\mu|$ (again, when assuming~$s$ is small).
Indeed, we can use them to establish the following:
\begin{theorem}   \label{thm:multi}
    For certain constants $s_0, c_0, 1/C_0, \delta_0 > 0$, the following holds:  Provided $s \leq s_0$, for $\btheta \sim \Theta_{\calU}(\ket{\bone})$ we have
    \begin{equation} \label[ineq]{ineq:new}
        \Pr\Bigl[c_0 \cdot 2|\mu| \leq |\btheta| \leq C_0 \cdot 2|\mu|\Bigr] \geq 1-\delta_0.
    \end{equation}
    In particular (see \Cref{cor:done}), we may take $s_0 = \frac{1}{16}$, $c_0 = \frac45$, $C_0 = \frac54$, $\delta_0 = \frac29$.
\end{theorem}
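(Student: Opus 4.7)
The plan is to apply a one-sided Chebyshev (Cantelli) bound to the random variable $X \coloneqq 1/\sqrt{\hav\btheta}$ drawn from the distribution $q = J_{\calU}(\ket{\bone})$ on eigenphases. Two moment bounds govern the argument. The upper bound $\E_q[X] \leq \sqrt{1+s^2}/|\mu|$ is a one-liner: Jensen's inequality applied to $\sqrt{\cdot}$ together with \Cref{prop:inverse} gives $\E_q[X] \leq \sqrt{\E_q[1/\hav\btheta]} = \sqrt{1+s^2}/|\mu|$. The matching lower bound $\E_q[X] \geq \sqrt{1+\mu^2}/|\mu|$ is the heart of the proof. Combined with the exact identity $\E_q[X^2] = \E_q[1/\hav\btheta] = (1+s^2)/\mu^2$ (again \Cref{prop:inverse}), these produce the key variance bound
\begin{equation*}
\Var_q(X) \;\leq\; \frac{1+s^2}{\mu^2} - \frac{1+\mu^2}{\mu^2} \;=\; \frac{\sigma^2}{\mu^2}.
\end{equation*}
The happy cancellation — the variance features $\sigma$, not $s$ — is precisely what lets the result survive regimes in which $|\mu|$ is much smaller than $s$. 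Throughout the argument we assume $\mu \neq 0$; the case $\mu = 0$ is immediate because \Cref{prop:whogoeswhere} then shows $\ket{\bone+\bii\by}$ is a $+1$-eigenvector of $\calU$, placing mass $|\braket{\bone|\bonef}|^2 = 1/(1+s^2) \geq 1 - \delta_0$ of $\ket{\bone}$ on eigenphase~$0$, so the conclusion holds trivially.

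For the lower bound the first step is to derive the eigenbasis density ratio. Writing $\ket{\bone} = \sum_j \alpha_j \ket{u_j}$ and $\ket{\bonef} = \sum_j \beta_j \ket{u_j}$ and rearranging \Cref{prop:whogoeswhere} in this basis gives $\alpha_j = \sqrt{1+s^2}\,(1-e^{\i\theta_j})\beta_j/(2\i\mu)$, whose squared modulus is $q_j = \bigl((1+s^2)/\mu^2\bigr)\cdot\hav\theta_j\cdot r_j$, where $r_j = |\beta_j|^2$. Setting $c \coloneqq (1+s^2)/\mu^2$, both the Bhattacharyya coefficient and $\E_q[X]$ factor through the same sum: $\mathit{BC}(q,r) = \sqrt{c}\sum_j r_j\sqrt{\hav\theta_j}$ and $\E_q[X] = c\sum_j r_j\sqrt{\hav\theta_j} = \sqrt{c}\cdot\mathit{BC}(q,r)$. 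Invoking \Cref{prop:bhatt} together with $|\braket{\bone|\bonef}| = \sqrt{(1+\mu^2)/(1+s^2)}$ (read off from \Cref{fact:ourip}) yields $\mathit{BC}(q,r) \geq \sqrt{(1+\mu^2)/(1+s^2)}$, hence $\E_q[X] \geq \sqrt{1+\mu^2}/|\mu|$ as claimed.

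To finish, apply Cantelli's inequality to each tail. The event $|\btheta| < 2c_0|\mu|$ is equivalent to $X > 1/\sin(c_0|\mu|) \geq 1/(c_0|\mu|) = 5/(4|\mu|)$, placing $X$ at distance at least $(5/4 - \sqrt{1+s_0^2})/|\mu|$ above $\E_q[X]$; the event $|\btheta| > 2C_0|\mu|$ is equivalent to $X < 1/\sin(C_0|\mu|)$, which (using $\sin x \geq x - x^3/6$ for $x = C_0|\mu| \leq 5/64$) puts $X$ at distance at least roughly $(1 - 4/5)/|\mu|$ below $\E_q[X] \geq 1/|\mu|$. With $\sigma^2 \leq s^2 \leq 1/256$, Cantelli yields tail probabilities of roughly $1/17$ and $1/11$ respectively, summing comfortably below $\delta_0 = 2/9$. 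The conceptual obstacle is the eigenbasis/Bhattacharyya derivation that produces the $\sigma^2/\mu^2$ variance bound; the remainder is a direct Cantelli application plus careful tracking of the small-angle corrections at scale $C_0 s_0 = 5/64$.
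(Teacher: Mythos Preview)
Your proof is correct and takes a genuinely different (and somewhat cleaner) route than the paper's.  Both arguments rest on \Cref{prop:whogoeswhere} and the Bhattacharyya bound of \Cref{prop:bhatt}, but they combine these differently.  The paper introduces $h_j = \sqrt{(1+s^2)/\mu^2}\sqrt{\hav\theta_j}$ and $\lambda_j = \sqrt{r_j/q_j}$, proves the ad~hoc algebraic inequality \Cref{lem:weird} to obtain $\E_q[(1-\bh^{-1})^2] \leq 2s^2$, and then applies Markov to this centered square.  You instead exploit the \emph{exact} density relation $q_j = c\,\hav\theta_j\,r_j$ coming from \Cref{prop:whogoeswhere} (equivalently $\lambda_j h_j \equiv 1$, which renders \Cref{lem:weird} superfluous) and observe that it forces the identity $\E_q[X] = \sqrt{c}\cdot\mathit{BC}(q,r)$; feeding in \Cref{prop:bhatt} then gives $\E_q[X] \geq \sqrt{1+\mu^2}/|\mu|$ and hence $\Var_q(X) \leq \sigma^2/\mu^2$.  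This variance bound is sharper than the paper's (it features~$\sigma$ rather than~$s$), and your two Cantelli tails ($\approx 1/17 + 1/11$) land comfortably below~$2/9$.  One small point worth stating explicitly: in the upper-tail step you need that $|\btheta|/2 \leq \pi/2$ so that $|\btheta| > 2C_0|\mu|$ genuinely implies $\sin(|\btheta|/2) > \sin(C_0|\mu|)$ by monotonicity; this holds since $|\btheta| \leq \pi$ by convention.
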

With these specific ``in particular'' constants, we can complete the proof of \Cref{thm:maintask} almost immediately by using phase estimation. 
On the other hand, achieving these constants is slightly fiddly; hence, we defer this to \Cref{sec:details}. 
For now, we illustrate how \Cref{thm:multi} can be proven in a very simple way, allowing for worse constants.  
(As we note in \Cref{rem:constants0,rem:constants} and \Cref{sec:elementary}, these worse constants are still sufficient for giving elementary proofs of our main results \Cref{thm:main0,thm:maintask}.)
\begin{proof}[Proof of \Cref{thm:multi} with worse constants.]
    We establish the theorem with 
    \begin{equation}    \label{eqn:simple}
        s_0 = .001, \quad c_0 = .05, \quad C_0 = 50, \quad \delta_0 = .005.
    \end{equation}
    Beginning with a technicality, note that the conclusion of our theorem is continuous with respect to infinitesimally perturbing~$\calU$; thus we may assume without loss of generality that $\mu, \btheta \neq 0$ always holds.  
    Now applying Markov's inequality to \Cref{prop:inverse} we get
    \begin{equation}
        \Pr[(\hav \btheta)^{-1} > 350(1+s^2)/\mu^2] \leq \frac{1}{350} < .003.
    \end{equation}
    Assuming $s \leq .001$, we conclude that except with probability less than $.003$ we have
    \begin{equation}
        (\hav \btheta)^{-1} \leq 350(1+s^2)/\mu^2 
        ~~\implies~~ \sin^{-2}(\btheta/2) \leq 400/\mu^2 ~~\implies~~ (\abs{\mu}/20)^2 \leq \sin^2(\btheta/2) \leq (\btheta/2)^2.
    \end{equation}
    Hence
    \begin{equation} \label[ineq]{ineq:a1}
        \Pr[\abs{\btheta} < .05 \cdot 2\abs{\mu}] \leq .003.
    \end{equation}
    
    On the other hand, applying Markov's inequality to \Cref{prop:verse} gives
    \begin{equation} 
        \Pr[\hav \wt{\btheta} > 1000 \mu^2/(1+s^2)] \leq .001,
    \end{equation}
    and we conclude that except with probability at most $.001$ we have
    \begin{equation}
        \hav \wt{\btheta} \leq 1000 \frac{\mu^2}{1+s^2} \leq 1000 \mu^2 ~~\implies~~ \abs{\sin(\btheta/2)} \leq \sqrt{1000} \abs{\mu} ~~\implies~~ \abs{\btheta/2} \leq \tfrac{\pi}{2}\sqrt{1000} \abs{\mu} < 50 \cdot 2\abs{\mu}.
    \end{equation}
    Hence
    \begin{equation}
        \Pr[\abs{\wt{\btheta}} > 50 \cdot 2\abs{\mu}] \leq .001.
    \end{equation}
    Finally, \Cref{fact:ourip} implies the fidelity bound $\abs{\braket{\bone|\bonef}}^2 \geq (1+\mu^2)/(1+s^2) \geq 1/(1+s^2)$, from which it is not hard to deduce 
    \begin{equation}    \label[ineq]{ineq:a2}
        \Pr[\abs{\btheta} > 50 \cdot 2\abs{\mu}] \leq .001 + s \leq .002
    \end{equation}
    under the assumption $s \leq .001$.
    (One can, e.g., use Helstrom's theorem~\cite{Hel76} for this.)
    Combining \Cref{ineq:a1,ineq:a2} completes the proof.
\end{proof}

As mentioned, by using the version of \Cref{thm:multi} with good constants (proved in \Cref{sec:details}), we can easily complete the proof of \Cref{thm:maintask} by appealing to phase estimation.  In fact, with this method we do not even need the ``$\abs{\mu} \leq 2\eps$'' part of ``case~(ii)'' in the theorem statement; just $\abs{\mu} \geq \eps$ is sufficient.
Thus we have the following slightly stronger form of \Cref{thm:maintask}:
\begin{theorem}[A stronger form of \Cref{thm:maintask}]  \label{thm:equivm}
    There is a computationally efficient quantum algorithm with the following properties:
    Given a parameter $\eps > 0$ and the code for a random variable~$\by$, promised to satisfy $s = \sqrt{\E[\by^2]} \leq \frac{1}{16}$,
    the algorithm uses $O(1/\eps)$ samples and distinguishes (with confidence at least~$2/3$) between the cases (i)~$\abs{\mu} \leq \eps/2$ and (ii)~$\abs{\mu} \geq \eps$, where $\mu = \E[\by]$.
\end{theorem}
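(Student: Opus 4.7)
The plan is to run Quantum Phase Estimation on the unitary $\calU = \Refl \cdot \RotF$ with starting state $\ket{\bone} = \U\ketzero$, tuned to resolve eigenphases to additive precision $\eps/8$ with failure probability at most $1/9$. By Remark~\ref{rem:uses}, each (controlled) application of $\calU$ costs only $O(1)$ uses of the code for $\by$, so by Remark~\ref{rem:QPE} the total sample complexity is $O(\log(9)/\eps) = O(1/\eps)$. Let $\btheta'$ denote the QPE output and $\btheta \sim \Theta_\calU(\ket{\bone})$ the coupled idealized phase, so that $|\btheta - \btheta'| \leq \eps/8$ fails with probability at most $1/9$.

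Next I would invoke Theorem~\ref{thm:multi} with its ``in particular'' constants $s_0 = \tfrac{1}{16}$, $c_0 = \tfrac{4}{5}$, $C_0 = \tfrac{5}{4}$, $\delta_0 = \tfrac{2}{9}$; the hypothesis $s \leq \tfrac{1}{16}$ is exactly the standing assumption. Except with probability $\tfrac{2}{9}$, this gives $\tfrac{8}{5}|\mu| \leq |\btheta| \leq \tfrac{5}{2}|\mu|$. In case~(i), $|\mu| \leq \eps/2$ forces $|\btheta| \leq \tfrac{5}{4}\eps$ and hence $|\btheta'| \leq \tfrac{11}{8}\eps$. In case~(ii), $|\mu| \geq \eps$ forces $|\btheta| \geq \tfrac{8}{5}\eps$ and hence $|\btheta'| \geq \tfrac{59}{40}\eps$. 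Since $\tfrac{11}{8} < \tfrac{59}{40}$, I can pick any threshold $T$ strictly between them (e.g.\ $T = \tfrac{7}{5}\eps$) and decide case~(i) when $|\btheta'| < T$ and case~(ii) otherwise. A union bound over the two bad events (QPE coupling failure and the $\Theta_\calU(\ket{\bone})$ event from Theorem~\ref{thm:multi}) gives total failure probability at most $\tfrac{1}{9} + \tfrac{2}{9} = \tfrac{1}{3}$, matching the required $2/3$ confidence.

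The hard part of this argument does not live in the sketch above but in Theorem~\ref{thm:multi} with the tight ``in particular'' constants. The cheap version of Theorem~\ref{thm:multi} proved at the end of Section~\ref{sec:eigo} only achieves $c_0 = 0.05$ and $C_0 = 50$, which is nowhere near enough separation for any constant-factor QPE precision to distinguish $|\mu| \leq \eps/2$ from $|\mu| \geq \eps$. The sharpening deferred to Section~\ref{sec:details} --- tightening the Markov-style tail bounds on $\hav\btheta$ and $\hav\wt{\btheta}$ so that both sides of the window sit within small constant multiplicative factors of $2|\mu|$ --- is therefore the real obstacle; the current theorem then reduces to the arithmetic check $\tfrac{11}{8}\eps < \tfrac{59}{40}\eps$ and the union bound above.
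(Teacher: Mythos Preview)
Your proposal is correct and follows essentially the same approach as the paper: run QPE on $\calU$ with starting state $\ket{\bone}$ at precision $\Theta(\eps)$ and confidence $1/9$, invoke \Cref{thm:multi} with the sharp constants $c_0=\tfrac45$, $C_0=\tfrac54$, $\delta_0=\tfrac29$, and separate the two cases by a numerical threshold on $|\btheta'|$, union-bounding the two failure modes to get total error $\leq 1/3$. The paper uses precision $\eps/6$ and threshold $1.42\eps$ (working with $|\btheta'/2|$), you use $\eps/8$ and $\tfrac75\eps$; the arithmetic checks out either way, and your closing remark that the real work lies in the sharpened \Cref{thm:multi} is exactly right.
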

\begin{remark}  \label{rem:constants0}
    The reader will notice that we have taken $s \leq \frac{1}{16}$ as a hypothesis here, whereas \Cref{thm:maintask} has $s \leq 1$. 
    However one may observe that the theorem's statement is insensitive to multiplying both~$\by$ and~$\eps$ by any fixed constant $0 < s_0 < 1$ (such as $s_0 = \frac{1}{16}$\footnote{For the sake of implementation it is nicer if the constant is a power of~$2$ so that adjusting the oracle~$\calY$ is simple.}); this only affects the sample complexity by a constant factor.
\end{remark}
\begin{proof}[Proof of \Cref{thm:equivm}]
    We perform Quantum Phase Estimation on the unitary~$\calU$ and starting state~$\ket{\bone}$, with accuracy parameter~$\eps/6$ and confidence parameter~$\delta = 1/9$, producing output~$\btheta'$.
    As described in \Cref{rem:QPE}, this can be done with $O(1/\eps)$ uses of controlled-$\calU$, which implies $O(1/\eps)$ uses of the code for~$\by$ (\Cref{rem:uses})
    The result is that, for $\btheta \sim \Theta_{\calU}(\ket{\bone})$ being the ``Idealized Phase Estimation'' output, there is a coupling such that $|\btheta - \btheta'| \leq \eps/6$ except with probability at most~$1/9$.
    Thus from \Cref{thm:multi}, we get that
    \begin{equation}
        \tfrac45 \abs{\mu} - \eps/12 \leq  |\btheta'/2|  \leq \tfrac54 \abs{\mu} + \eps/12
    \end{equation}
    except with probability at most~$\frac{1}{9} + \frac{2}{9} = 1/3$.
    Now on one hand, in case~(i) we have
    \begin{equation}
        \abs{\mu} \leq \eps/2  \implies |\btheta'/2| \leq (5/8)\eps + \eps/12  < .71\eps.
    \end{equation}
    On the other hand, in case~(ii) we have
    \begin{equation}
        \abs{\mu} \geq \eps  \implies |\btheta'/2| \geq (4/5) \eps - \eps/12 > .71 \eps.
    \end{equation}
    Thus we can distinguish the two cases with confidence at least~$2/3$ by deciding whether $\btheta' \gtrless 1.42 \eps$.
\end{proof}

\subsection{Sharper eigenvalue analysis} \label{sec:details}
In this section, we establish \Cref{thm:multi} with the explicit good constants.
The intuition behind the analysis is the following observation: Suppose for a moment that $\wt{\btheta}$~and~$\btheta$ from \Cref{prop:verse,prop:inverse} were identically distributed.  Then, writing $\bh = \frac{\sqrt{1+s^2}}{\abs{\mu}} \sqrt{\hav \btheta}$, these would imply 
\begin{equation}
    \E[(\bh - \bh^{-1})^2] = \frac{1+s^2}{\mu^2}\E[\bh] + \frac{\mu^2}{1+s^2}\E[\bh^{-1}] - 2\E[1] = 1 + 1 - 2 = 0,
\end{equation}  
and hence
\begin{equation}
    \bh - \bh^{-1} \equiv 0 \quad\implies\quad \bh \equiv 1 \quad\implies\quad \sqrt{\hav \btheta} = \abs{\sin(\btheta/2)} \equiv \frac{\abs{\mu}}{\sqrt{1+s^2}}.
\end{equation}
The next theorem (of which \Cref{thm:multi} is a corollary) makes this idea rigorous by taking care of the fact that we don't quite have $\wt{\btheta} \not \equiv \btheta$:
\begin{theorem} \label{thm:a1}
    Fix any $C \geq 1$ and assume  $s \leq \frac{1}{C}$.
    Then for $\btheta \sim \Theta_{\RR}(\ket{\bone})$,
    \begin{equation}
        \Pr\bracks*{\abs{\sin(\btheta/2)} \not \in \frac{\abs{\mu}}{\sqrt{1+s^2}} \cdot \bracks*{\frac{1}{1+Cs}, \frac{1}{1-Cs}}} \leq \frac{2}{C^2}.
    \end{equation}
\end{theorem}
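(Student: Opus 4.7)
The plan is to make rigorous the heuristic preceding the theorem by working in an eigenbasis of $\RR$. Fix an eigendecomposition $\RR = \sum_j e^{\i\theta_j} \ketbra{u_j}{u_j}$ as in \Cref{not:eigen}, and write $\ket{\bone} = \sum_j a_j \ket{u_j}$ and $\ket{\bone + \bii\by} = \sum_j b_j \ket{u_j}$; then $\btheta \sim \Theta_{\RR}(\ket{\bone})$ is the random angle $\theta_{\bj}$ with $\Pr[\bj = j] = |a_j|^2$. The first key step is to pin down the $b_j$'s in terms of the $a_j$'s: reading \Cref{prop:whogoeswhere} off in the eigenbasis and using the identity $\tfrac{1-e^{\i\theta}}{2} = -\i e^{\i\theta/2} \sin(\theta/2)$, one obtains
\begin{equation*}
    b_j \;=\; -\frac{\mu\, e^{-\i \theta_j/2}}{\sin(\theta_j/2)}\, a_j.
\end{equation*}
(We may assume $\mu \neq 0$ by an infinitesimal perturbation; then this formula also forces $a_j = 0$ whenever $\sin(\theta_j/2) = 0$. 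The $\mu=0$ case is trivial: $\ket{\bone + \bii \by}$ is then fixed by $\RR$, so $\Pr[\btheta \neq 0] \leq \|\ket{\bone} - \ket{\bonef}\|^2 \leq s^2 \leq 2/C^2$ by the bound in the next paragraph.)

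Next I compute the distance from $\ket{\bone}$ to its near-unit neighbor $\ket{\bonef} = \ket{\bone + \bii \by}/\sqrt{1+s^2}$. Using $\braket{\bone | \bone + \bii \by} = 1 + \i \mu$ from \Cref{fact:ourip},
\begin{equation*}
    \|\ket{\bone} - \ket{\bonef}\|^2 \;=\; 2 - 2\Re\braket{\bone | \bonef} \;=\; 2 - \frac{2}{\sqrt{1+s^2}} \;\leq\; s^2,
\end{equation*}
where the last step uses the elementary bound $1/\sqrt{1+s^2} \geq 1 - s^2/2$. Expanding the left-hand side in the eigenbasis using the formula for $b_j$ translates this into
\begin{equation*}
    \sum_j |a_j|^2 \, |1 + \eta_j|^2 \;\leq\; s^2, \qquad \text{where } \eta_j \;\coloneqq\; \frac{\mu\, e^{-\i \theta_j/2}}{\sqrt{1+s^2}\, \sin(\theta_j/2)}.
\end{equation*}

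The finish is a one-line Markov argument. The reverse triangle inequality $||\eta_j| - 1| \leq |1 + \eta_j|$ shows that $\E_{\bj \sim |a_j|^2}[(|\eta_{\bj}| - 1)^2] \leq s^2$, and then Markov gives $\Pr[(|\eta_{\bj}| - 1)^2 > C^2 s^2] \leq 1/C^2 \leq 2/C^2$. Since $|\eta_{\bj}| = |\mu|/(\sqrt{1+s^2}\,|\sin(\theta_{\bj}/2)|)$, the complementary event $||\eta_{\bj}| - 1| \leq Cs$ rearranges (using $Cs < 1$) to precisely
\begin{equation*}
    |\sin(\theta_{\bj}/2)| \;\in\; \frac{|\mu|}{\sqrt{1+s^2}}\cdot\left[\frac{1}{1+Cs},\,\frac{1}{1-Cs}\right],
\end{equation*}
which is the desired event. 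I do not anticipate a serious obstacle: the only step with any chance of being fiddly is recognizing that $\|\ket{\bone} - \ket{\bonef}\|^2$ is simultaneously (i) explicitly computable in closed form via \Cref{fact:ourip}, and (ii) when expanded in $\RR$'s eigenbasis, precisely the $|a_j|^2$-moment of $|1 + \eta_j|^2$ required for Markov. The factor-of-two slack between the $1/C^2$ this argument delivers and the $2/C^2$ in the statement simply absorbs the edge cases above.
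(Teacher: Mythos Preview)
Your proof is correct, and in fact tighter and more direct than the paper's. The paper works with the \emph{probabilities} $q_j = |a_j|^2$ and $r_j = |b_j|^2/(1+s^2)$ (equivalently with $\lambda_j = \sqrt{r_j/q_j}$), thereby discarding the relative phase between $a_j$ and $b_j$. Having lost that information, it must combine \emph{two} moment identities (\Cref{prop:verse} and \Cref{prop:inverse}) together with the auxiliary numerical inequality of \Cref{lem:weird} to reach $\E[(1-\bh^{-1})^2] \leq 2s^2$ and hence the bound $2/C^2$. You instead keep the amplitudes and read \Cref{prop:whogoeswhere} coordinatewise to get $b_j/\sqrt{1+s^2} = -\eta_j a_j$; then $\|\ket{\bone} - \ket{\bonef}\|^2 = \sum_j |a_j|^2\,|1+\eta_j|^2$ directly, and a single reverse triangle inequality plus Markov gives $\E[(|\eta_{\bj}|-1)^2] \leq s^2$, i.e.\ the same conclusion with the sharper bound $1/C^2$. (Note $|\eta_j|$ is exactly the paper's $h_j^{-1}$, so the two arguments land on the same quantity.) What the paper's approach buys is that it uses only the \emph{distributions} $J_{\RR}(\ket{\bone})$ and $J_{\RR}(\ket{\bonef})$ and the fidelity/Hellinger machinery of \Cref{prop:bhatt}, which is more modular; what your approach buys is a shorter, lemma-free argument with a constant-factor improvement.
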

The proof will use a numerical lemma:
\begin{lemma}   \label{lem:weird}
    For real numbers $h, \lambda$ with $h  > 0$, it holds that
    $\displaystyle
            (1-h^{-1})^2 \leq (1-\lambda)^2 + (\lambda h - h^{-1})^2.
    $
\end{lemma}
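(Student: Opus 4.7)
The plan is to prove this as a purely algebraic inequality by reducing it to something manifestly nonnegative. The right-hand side minus the left-hand side is a quadratic function in $\lambda$ (with $h$ fixed), so the natural strategy is either to minimize it over $\lambda$ and verify the minimum is $\geq 0$, or to recognize a Cauchy-Schwarz structure and exploit it. I will opt for the Cauchy-Schwarz route, which is slicker.

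The key observation is that the two quantities being squared on the right-hand side can be combined linearly to produce something close to $1 - h^{-1}$. Specifically, with weights $(1, h^{-1})$ we have
\begin{equation}
    1 \cdot (1-\lambda) + h^{-1} \cdot (\lambda h - h^{-1}) = 1 - h^{-2}.
\end{equation}
Cauchy-Schwarz then gives
\begin{equation}
    \bigl((1-\lambda)^2 + (\lambda h - h^{-1})^2\bigr)\bigl(1 + h^{-2}\bigr) \geq (1 - h^{-2})^2 = (1-h^{-1})^2 (1+h^{-1})^2,
\end{equation}
so to conclude it suffices to verify $(1+h^{-1})^2 \geq 1 + h^{-2}$, which expands to $2h^{-1} \geq 0$ — true by the hypothesis $h > 0$.

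The main (and essentially only) obstacle is spotting the correct pair of weights $(1, h^{-1})$ that makes the linear combination collapse to something proportional to $1 - h^{-1}$; once this is found, everything else is one line of Cauchy-Schwarz followed by a trivial positivity check. As a sanity cross-check I would also verify the result by direct expansion: the difference $(1-\lambda)^2 + (\lambda h - h^{-1})^2 - (1-h^{-1})^2$ simplifies to a quadratic in $\lambda$ whose discriminant analysis (or completion of the square) yields a minimum value of $\tfrac{2}{h} \cdot \tfrac{(1-h)^2}{1+h^2} \geq 0$, confirming the bound and also showing where equality is approached. Either route is short and should yield the lemma cleanly without any surprises.
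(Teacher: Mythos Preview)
Your proof is correct, but it takes a different route from the paper's. The paper dispatches the lemma in a single line by exhibiting an explicit sum-of-squares decomposition: the difference of the two sides equals $\lambda^2(1-h)^2 + 2h(\lambda - h^{-1})^2$, which is manifestly nonnegative since $h>0$. Your Cauchy--Schwarz argument with weights $(1,h^{-1})$ is equally valid and arguably more discoverable --- one is naturally led to search for weights that kill the $\lambda$-dependence --- whereas the paper's decomposition is slicker once spotted but requires a bit of inspiration (or a computer algebra system) to find. Your cross-check via minimizing the quadratic in~$\lambda$, yielding minimum value $\tfrac{2}{h}\cdot\tfrac{(1-h)^2}{1+h^2}$, is in fact equivalent to the paper's decomposition after completing the square, so you essentially rediscovered their argument as well.
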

\begin{proof}
    The difference of the two sides is $\lambda^2(1-h)^2 + 2h(\lambda - h^{-1})^2  \geq 0$.
\end{proof}
\begin{proof}[Proof of \Cref{thm:a1}.]
    As in our proof of \Cref{thm:multi} with worse constants, we may assume without loss of generality that $\mu, \btheta \neq 0$ always holds.
    We use the notation of \Cref{not:eigen} and \Cref{prop:bhatt}, letting 
    $q_1, \dots, q_D$ (respectively, $r_1, \dots, r_D$) denote the probabilities of $J_{\RR}(\ket{\bone})$ (respectively, $J_{\RR}(\ket{\bonef})$) and $\lambda_j = \sqrt{r_j/q_j}$. 
    Now combining \Cref{fact:ourip} with \Cref{ineq:hqq} from \Cref{prop:bhatt} gives
    \begin{equation}    \label[ineq]{ineq:hqq2}
        \sum_j q_j (1 - \lambda_j)^2
        = 2 \parens*{1 - \sum_j q_j \lambda_j} 
        \leq 2(1 - \abs{\braket{\bone|\bonef}}) 
        = 2(1 - \sqrt{1+\mu^2}/\sqrt{1+s^2})
        \leq 2(1 - 1/\sqrt{1+s^2}) \leq s^2.
    \end{equation}
    At the same time, if we define
    \begin{equation}
        h_j = \frac{\sqrt{1+s^2}}{\abs{\mu}} \sqrt{\hav \theta_j},
    \end{equation}
    we can restate \Cref{prop:inverse} (which has its technical assumption satisfied) and \Cref{prop:verse}   as 
    \begin{equation}    \label{eqn:yo}
        \sum_j q_j h_j^{-2} = 1, \quad \text{and} \quad \sum_j r_j h_j^2 = 1  \iff \sum_j q_j \lambda_j^2 h_j^2 = 1.
    \end{equation}
    Let $\bj  \sim J_{\calU}(\ket{\bone})$, so $\Pr[\bj = j] = q_j$.
    We may express $\btheta = \theta_{\bj}$, and also write $\bh = h_{\bj}$ and $\blambda = \lambda_{\bj}$.
    Then we may summarize \Cref{eqn:yo} and \Cref{ineq:hqq2}  as
    \begin{equation}
        \E[\bh^{-2}] = \E[\blambda^2 \bh^2] = 1, \qquad \E[(1-\blambda)^2] = 2(1-\E[\blambda]) \leq s^2
    \end{equation}
    (where we used $\E[\blambda^2] = 1$). 
    These imply
    \begin{equation}
        \E[(\blambda \bh - \bh^{-1})^2] = \E[\blambda^2 h^2] + \E[\bh^{-2}] - 2\E[\blambda] = 2(1 - \E[\blambda]) \leq s^2.
    \end{equation}
    Now applying \Cref{lem:weird} in expectation and using the above facts, we get
    \begin{equation}
        \E[(1-\bh^{-1})^2] \leq \E[(1-\blambda)^2] + \E[(\blambda \bh - \bh^{-1})^2] \leq s^2 + s^2 = 2s^2.
    \end{equation}
    Thus by Markov's inequality, except with probability at most~$2/C^2$, we have
    \begin{equation}
        (1 - \bh^{-1})^2 \leq (Cs)^2  \implies \abs{1-\bh^{-1}} \leq Cs \implies \bh \in  \bracks*{\frac{1}{1+Cs}, \frac{1}{1-Cs}},
    \end{equation}
    (recall $Cs \leq 1$).
    Putting in the definition of $\bh = \frac{\sqrt{1+s^2}}{|\mu|} \sqrt{\hav \btheta}$ and recalling $\hav \theta = \sin^2(\theta/2)$ completes the proof.
\end{proof}

Note that given the $1 \pm O(s)$ error range of the preceding theorem, the distinctions between $\abs{\sin(\btheta/2)}$ and $\abs{\btheta}^2$ and between $\mu^2/(1+s^2)$ and $\mu^2$ are more minor.
Thus the preceding theorem essentially gives that $|\btheta/2| = |\mu| \pm O(|\mu| s)$ with high probability.
With some slightly tedious numeric estimates, we can get the following more usable corollary; it immediately implies \Cref{thm:multi} with its strong constants by taking $t = 1$ (and using $1 - \frac{3}{16} \geq \frac45$):
\begin{corollary}   \label{cor:done}
    Fix any $t \geq 1$.  
    Then \Cref{thm:multi} holds with $s_0 = \frac{1}{16t}$, $c_0 = 1 - 3ts_0$, $C_0 = 1+4ts_0$, and $\delta_0 = \frac{2}{9t^2}$. 
\end{corollary}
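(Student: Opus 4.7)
The plan is to apply \Cref{thm:a1} with the parameter choice $C = 3t$. Since $s \leq s_0 = \tfrac{1}{16t} \leq \tfrac{1}{3t} = 1/C$, the hypothesis is satisfied, and the failure probability $2/C^2 = 2/(9t^2)$ matches the desired $\delta_0$. So except on an event of probability at most $\delta_0$,
\[
    \abs{\sin(\btheta/2)} \in \frac{\abs{\mu}}{\sqrt{1+s^2}} \bracks*{\frac{1}{1+3ts},\; \frac{1}{1-3ts}}.
\]
The remaining work is purely a deterministic conversion from this bound on $\abs{\sin(\btheta/2)}$ into the target bound $c_0 \cdot 2|\mu| \leq |\btheta| \leq C_0 \cdot 2|\mu|$, where $c_0 = 1 - 3ts_0 = 13/16$ and $C_0 = 1 + 4ts_0 = 5/4$ (both $t$-independent after cancellation).

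For the lower bound I use the elementary $|\btheta/2| \geq |\sin(\btheta/2)|$ together with the left endpoint of the interval above. Since $s \leq s_0$, it suffices to verify $\frac{1}{(1+3ts_0)\sqrt{1+s_0^2}} \geq 1 - 3ts_0$, equivalently $(1 - 9 t^2 s_0^2)^2(1 + s_0^2) \leq 1$. Plugging in $3ts_0 = 3/16$ and $s_0^2 = 1/(256 t^2)$, this reduces to a routine numerical inequality which is comfortably valid for all $t \geq 1$. For the upper bound, I use $|\btheta/2| \leq \pi/2$ (which holds by \Cref{not:eigen}, since $\theta_j \in (-\pi, \pi]$), so $|\btheta/2| = \arcsin(|\sin(\btheta/2)|)$, and then the standard inequality $\arcsin(x) \leq x/\sqrt{1-x^2}$. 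This requires an a priori bound on $b := |\sin(\btheta/2)|$; using $|\mu| \leq s \leq s_0$ (since $\mu^2 \leq s^2$) and $3ts \leq 3/16$, I get $b \leq \tfrac{16 s_0}{13} = \tfrac{1}{13t}$. Feeding this into $\arcsin(b) \leq b/\sqrt{1-b^2}$ yields $|\btheta/2| \leq \tfrac{16t|\mu|}{\sqrt{169 t^2 - 1}}$, and verifying this is at most $(5/4)|\mu|$ reduces to the polynomial inequality $25 \leq 129\, t^2$, valid for all $t \geq 1$.

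The only real obstacle is the bookkeeping on the upper bound: three distortions (the $\arcsin$-to-identity correction, the $\sqrt{1+s^2}$ normalization, and the $(1 - 3ts)^{-1}$ denominator) all inflate $|\btheta/2|$ above $|\mu|$, and one must check the slack $C_0 - 1 = 1/4$ comfortably absorbs all of them. Once the a priori bound $b = O(1/t)$ is extracted from $|\mu| \leq s \leq s_0$, everything else is a short numerical computation that exploits the factor of $16$ gap between $s_0$ and $1/C$.
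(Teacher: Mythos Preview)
Your proof is correct and follows essentially the same route as the paper: apply \Cref{thm:a1} with $C = 3t$, then convert the bounds on $\abs{\sin(\btheta/2)}$ into bounds on $\abs{\btheta}$ via elementary inequalities, using $\abs{\mu} \leq s \leq s_0$ to get the needed a priori smallness of $\abs{\sin(\btheta/2)}$. The only cosmetic difference is in the upper-bound conversion: the paper uses the Taylor estimate $\abs{\btheta} - \tfrac{1}{24}\abs{\btheta}^3 \leq 2\abs{\sin(\btheta/2)}$ and bounds the cubic term separately, whereas you use $\arcsin x \leq x/\sqrt{1-x^2}$ (equivalently $\theta \leq \tan\theta$), which is arguably a bit cleaner here.
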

\begin{proof}
Given $t$, write $\eta = ts$ and note that $\eta \leq \frac{1}{16}$ assuming $s \leq s_0$.  
    Now selecting $C = 3t$ in \Cref{thm:a1},
    we get that  except with probability at most~$\frac{2}{9t^2}$:
    \begin{align} 
        2 \cdot \abs{\sin({\btheta}/2)} &\geq 2\abs{\mu} \cdot \frac{1}{\sqrt{1+(\eta/t)^2}}\cdot \frac{1}{1+3\eta},  &
        2 \cdot \abs{\sin({\btheta}/2)} &\leq 2\abs{\mu} \cdot \frac{1}{\sqrt{1+(\eta/t)^2}}\cdot \frac{1}{1-3\eta} \\
        &\geq 2\abs{\mu} \cdot \frac{1}{\sqrt{1+\eta^2}}\cdot \frac{1}{1+3\eta} \geq 2\abs{\mu} \cdot (1-3\eta) & & \leq 2\abs{\mu} \cdot \frac{1}{1-3\eta} \label[ineq]{ineq:yuck}
    \end{align}
    Since $\abs{\btheta} \geq 2 \cdot \abs{\sin(\btheta/2)}$, we now have the needed lower bound for \Cref{ineq:new},
    \begin{equation}
        \Pr[\abs{\btheta} < (1-3ts) \cdot 2\abs{\mu}] \leq \frac{2}{9t^2}.
    \end{equation}
    
    As for the upper bound, let us first weakly observe that $\abs{\mu} \leq s \leq s_0 \leq \frac{1}{16}$, which together with $\eta \leq \frac{1}{16}$ means \Cref{ineq:yuck} implies $2 \cdot \abs{\sin(\btheta/2)} \leq 2\cdot \frac{1}{16} \cdot \frac{1}{1-3(1/16)} = \frac{2}{13}$.  
    On this range of~$\abs{\btheta}$, it holds that
    $\abs{\btheta} \leq (13 \sin^{-1} \tfrac{1}{13}) \cdot 2 \cdot \abs{\sin(\btheta/2)}$.
    Combined with \Cref{ineq:yuck} and $\eta \leq \frac{1}{16}$ this yields
    \begin{equation}    \label[ineq]{ineq:theta3}
        \abs{\btheta} \leq 32 \sin^{-1}(\tfrac{1}{13})  \cdot \abs{\mu}  \quad \implies \quad \tfrac{1}{24}\abs{\btheta}^3 \leq .63 \abs{\mu}^3 \leq .63 \cdot \frac{1}{16} \abs{\mu} ts \leq .04 \abs{\mu} \eta,
    \end{equation}
    where we used $\abs{\mu} \leq s \leq s_0 \leq \frac{1}{16} \leq \frac{1}{16}t$.
    Finally we come to the main use of \Cref{ineq:yuck}:
    \begin{equation}
        \abs{\btheta} - \tfrac{1}{24}\abs{\btheta}^3 \leq 2 \cdot \abs{\sin(\btheta/2)} \leq 2\abs{\mu} \cdot \frac{1}{1-3\eta} \leq 2\abs{\mu} \cdot (1 + \tfrac{48}{13} \eta)
    \end{equation}
    using $\eta \leq \frac{1}{16}$ again.
    From this, \Cref{ineq:theta3}, and $\frac{48}{13} + .02 \leq 4$, we deduce 
    \begin{equation}
        \Pr[\abs{\btheta} > (1+4ts) \cdot 2\abs{\mu}] \leq \frac{2}{9t^2},
    \end{equation}
    the needed upper bound in \Cref{ineq:new}.
\end{proof}

\subsection{A more elementary algorithm}    \label{sec:elementary}
One could argue that our algorithm's use of Quantum Phase Estimation is overkill: we have high-probability bounds for the location of~$\btheta$ (so doing estimations in superposition is not needed), and this location is restricted to two narrow, separated regions: $\btheta \approx 2\abs{\mu}$ or $\btheta \approx 0$. 
Thus it is possible to use a more ``elementary'', Grover-like method to go from \Cref{thm:multi} to \Cref{thm:maintask}, as we now demonstrate.

We first show this by appealing to the very strong constants achieved in \Cref{cor:done}; we then sketch how even the simply obtained constants from \Cref{eqn:simple} suffice.

So suppose first we take $t = 10$ in \Cref{cor:done}, leading to $\delta_0 = 2/900 \leq .003$.  
We can also take our upper bound~$s_0$ on~$s$ as small as we please (see \Cref{rem:constants0}); let us therefore take it small enough that $4t s_0 \leq .01$.
We thereby obtain from \Cref{cor:done} that
\begin{equation}    \label[ineq]{ineq:calcs}
    \Pr_{\btheta \sim \Theta_{\calU}(\ket{\bone})}[.99 \cdot 2\abs{\mu} \leq \abs{\btheta} \leq 1.01 \cdot 2 \abs{\mu}] \geq .997.
\end{equation}
Recall we are trying to distinguish the cases (i)~$\abs{\mu} \leq \eps/2$ and (ii)~$\eps \leq \abs{\mu} \leq 2\eps$.
Note also that $\abs{\mu} \leq s \leq s_0 < .0003$ by assumption, meaning we can assume~$\eps \leq .0003$ without loss of generality. 
Suppose we now take 
\begin{equation}
    T = \lfloor \pi/(3\eps) \rfloor    
\end{equation}
(noting that $\eps \leq .0003$ means the floor changes~$T$'s value by a factor of at most~$1.0003$).
Then \Cref{ineq:calcs} implies
\begin{equation}
    \Pr_{\btau \sim \Theta_{\calU^T}(\ket{\bone})}[.989 \cdot \pi/(3\eps) \cdot 2\abs{\mu} \leq \abs{\btau} \leq 1.01 \cdot \pi/(3\eps) \cdot 2\abs{\mu}] \geq .997.
\end{equation}
(We changed $.99$ to $.989$ to account for the floor on~$T$.)
So except with probability at most $.003$ we have the following:
\begin{align}
    \text{case (i)} &\quad \implies\quad \abs{\btau} \leq 1.01 \cdot \pi/3 &\implies\quad \cos \btau &\geq +.49, \\
    \text{case (ii)} &\quad \implies\quad .989 \cdot 2\pi/3 \leq \abs{\btau} \leq 1.01 \cdot 4\pi/3 &\implies\quad \cos \btau &\leq -.46.
\end{align}
Now suppose we perform the Hadamard Test on~$\ket{\bone}$ with the unitary~$\calU^T$, whose application uses the code for~$\by$ only $O(T) = O(1/\eps)$ times.  (Recall this means adjoining~$\ket{+}$ to $\ket{\bone}$, applying \mbox{controlled-$\calU^T$}, and then measuring the new qubit in the $\ket{\pm}$ basis.)
With probability at least~$.997$, we get back a random variable~$\bb \in \{\pm 1\}$ with expectation~$\cos \btau$.  
Thus in case~(i) we get $\bb = +1$ with probability at least $\frac12 + \frac12(+.49) - .003 \geq 2/3$, and in case~(ii) we get $\bb = +1$ with probability at most $\frac12 + \frac12(-.46) + .003 \leq 1/3$.
Thus we can distinguish the two cases with confidence~$2/3$ based on the measurement outcome~$\bb$ of the Hadamard Test.
An example depiction of the process underlying this more elementary algorithm is shown in \Cref{fig:aa,fig:bb}.\\

We now sketch how one can go from \Cref{thm:multi} to \Cref{thm:maintask} in a similarly elementary way even with the worse constants from \Cref{eqn:simple}. 
For this, we will need to weaken the statement of \Cref{thm:maintask} so that case~(i) is ``$\abs{\mu} \leq c_1 \eps$'' for a very small constant~$c_1 > 0$.  This is not without loss of generality, as changing the upper bound on~$s$ is.  Nevertheless,  as we will show (see \Cref{rem:constants}), this does not affect our ability to deduce \Cref{thm:main0} from \Cref{thm:maintask}.

Recall that with the constants from \Cref{eqn:simple} we have
\begin{equation}
    \Pr_{\btheta \sim \Theta_{\calU}(\ket{\bone})}[.05 \cdot 2\abs{\mu} \leq \abs{\btheta} \leq 50 \cdot 2\abs{\mu}] \geq .995
\end{equation}
provided $s \leq .001$.
Now suppose we are in case~(ii), $\eps \leq \abs{\mu} \leq 2\eps$, so that $.1\eps \leq \abs{\btheta} \leq 200\eps$ except with probability~$.005$.  The trick is to divide this range of multiplicative-width~$2000$ into, say, $\lceil \log_2 2000 \rceil = 11$ intervals of multiplicative-width at most~$2$:  say, $[.1 \eps, .2\eps], [.2 \eps, .4 \eps]$, \dots $[100\eps, 200\eps]$.
The most frequently encountered such interval for~$\btheta$ occurs with probability at least $.995/11 \geq .09$.  Now suppose we take $11$ different values of~$T = \Theta(1/\eps)$ so that the scaled-by-$T$ intervals approximate~$[2\pi/3, 4\pi/3]$.  (Note that each~$T$ is at most $22/\eps$.)  Then for at least one such~$T$, the Hadamard Test applied to $\ket{\bone}$ and $\calU^T$ will output~$\bb = +1$ with probability slightly bounded away from~$1$; at most $.09 \cdot q + .91$ for $q \approx \frac12 + \frac12 \max\{\cos(2\pi/3), \cos(4\pi/3)\} = .25$, in particular, at most~$.94$.

On the other hand, suppose we take case~(i) in \Cref{thm:maintask} to be ``$\abs{\mu} \leq .0001\eps$''.
Then one can infer that except with probability at most~$.995$ over the outcome of $\btheta \sim \Theta_{\calU}(\ket{\bone})$ we will have $\abs{\btheta} \leq .01 \eps$ and hence $\abs{\btau} \leq .01T \leq .22$ for \emph{each} of the~$11$ choices of~$T$.  Thus $\cos \btau \geq \frac12 + \frac12 \cos(.22) \geq .987$ and we conclude the Hadamard Test will report~$\bb = +1$ with probability at least $.987 - .005 \geq .98$.

In summary, we have $11$ different ``coins''~$\bb$, with the guarantee that in case~(i) all come up Heads ($+1$) with probability at least~$.98$, and in case~(ii) at least one comes up Heads with probability at most~$.94$.  The two cases may therefore be distinguished with confidence at least~$2/3$ using a constant number of ``coin flips'' (each of which uses $O(T) = O(1/\eps)$ samples).

\begin{remark}
    We observe that the fundamental feature of the constants from \Cref{eqn:simple} that make them acceptable is that $\delta_0 \ll 1/\log(C_0/c_0)$.  Indeed, our simpler proof of \Cref{thm:multi} achieves $\delta_0 \leq O(c_0 + 1/C_0)$.
\end{remark}

\begin{figure}[H] \centering
    \includegraphics[width=.32\textwidth]{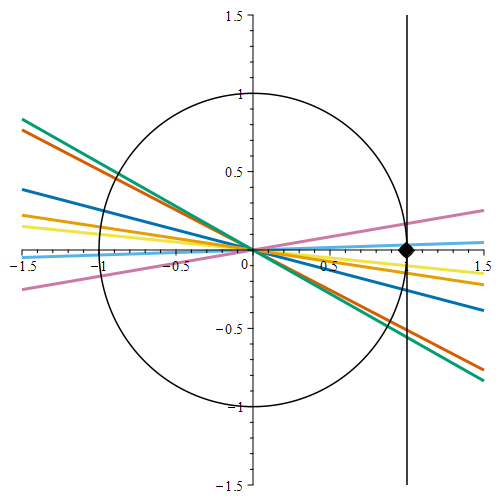}\\
    \includegraphics[width=.32\textwidth]{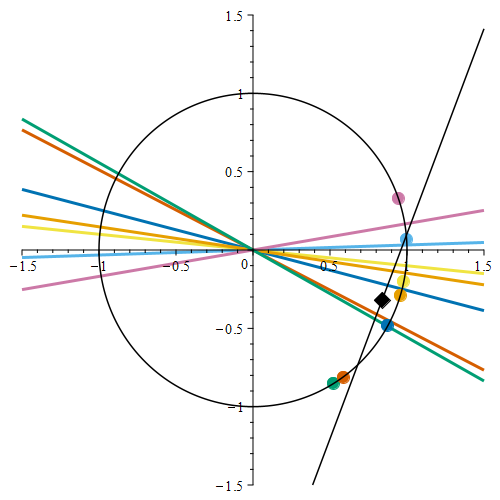}
    \qquad
    \includegraphics[width=.32\textwidth]{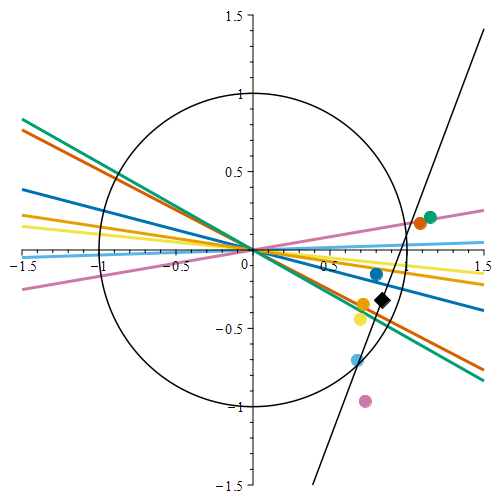}\\
    \includegraphics[width=.32\textwidth]{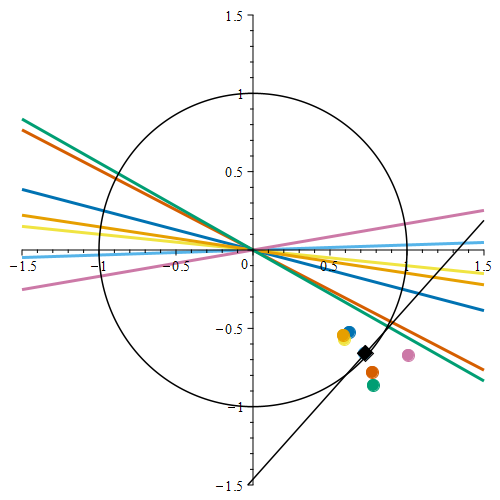}
    \qquad
    \includegraphics[width=.32\textwidth]{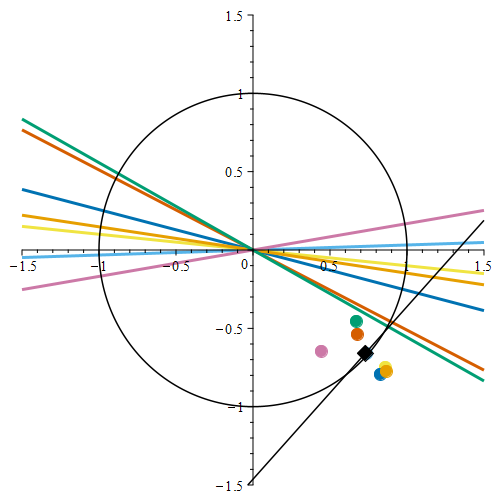}
        \caption{
        An illustration of $\calU$'s action. 
        We have the uniform distribution~$p$ over $D = 7$ outcomes,  $y_1, \dots, y_7$, with values approximately $-.169$, $-.032$, $.101$, $.148$, $.258$, $.511$, $.557$ (pink, light blue, light yellow, orange, dark blue, dark orange, green).
        We have $\mu = \E_p[\by] \approx .196 \approx \pi/(2 \cdot 8)$, and $s^2 = \E_p[\by^2] = .1$. 
        The colored line corresponding to~$y_\ell$ is at angle~$\alpha_\ell$; i.e., it passes through $1 - \i y_\ell$ in the complex plane.
        The five diagrams above show $\ket{\bone}$, $\RotF \ket{\bone}$, $\Refl \cdot \RotF \ket{\bone} = \calU \ket{\bone}$, $\RotF \cdot \calU \ket{\bone}$, and $\Refl \cdot \RotF \cdot \calU \ket{\bone} = \calU^2 \ket{\bone}$ (from top to bottom, left to right).
        The black diamond shows the barycenter of the points (through which the reflection occurs), and the black line is orthogonal to it.
        \label{fig:aa}
    }
\end{figure}
\begin{figure}[H] \centering
    \includegraphics[width=.32\textwidth]{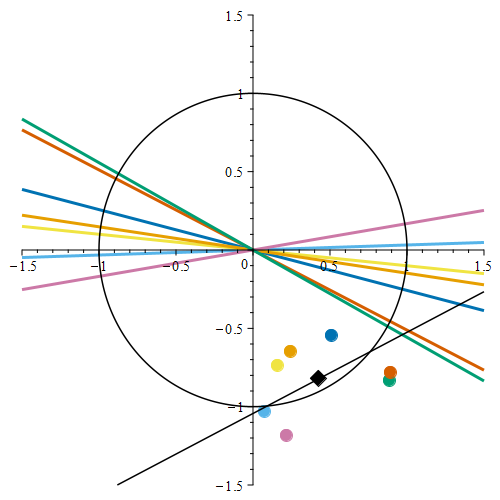}
    \includegraphics[width=.32\textwidth]{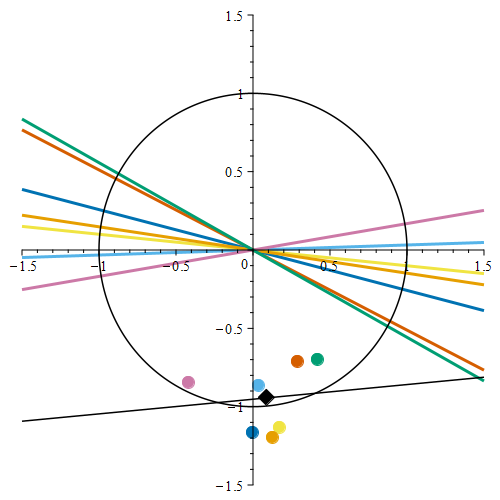}
    \includegraphics[width=.32\textwidth]{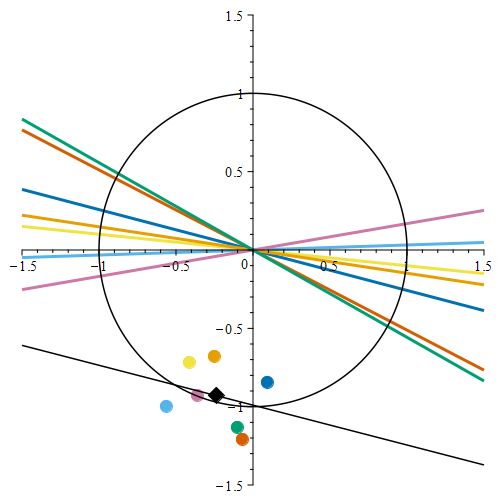}\\
    \includegraphics[width=.32\textwidth]{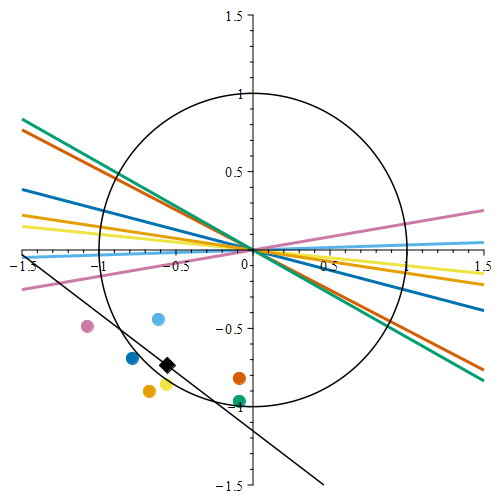}
    \includegraphics[width=.32\textwidth]{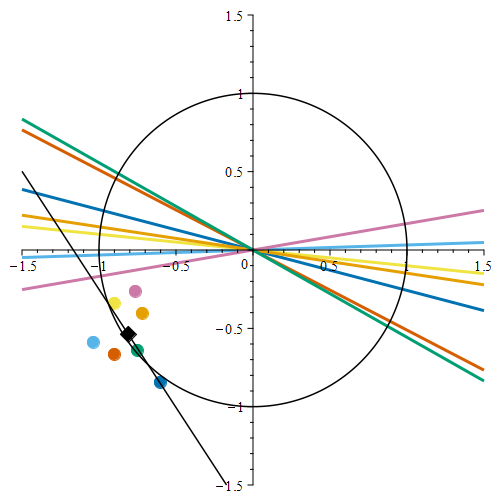}
    \includegraphics[width=.32\textwidth]{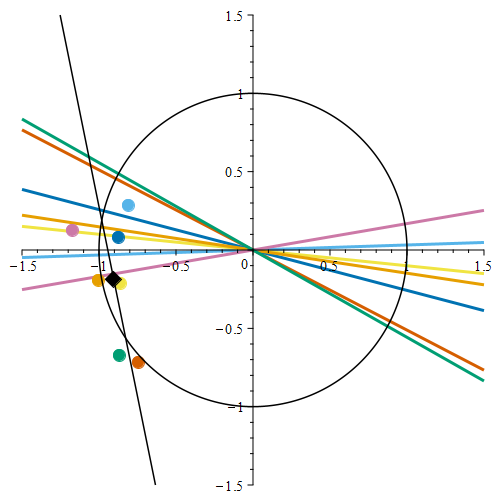}
    \caption{Continuing \Cref{fig:aa}, we illustrate $\calU^t \ket{\bone}$ for $t = 3 \dots 8$.
    The result of applying the Hadamard Test at any time is a $\{\pm 1\}$-valued random variable with mean equal to the horizontal displacement of the black diamond.
    After~$8 \approx  \pi/(2\abs{\mu})$ applications of~$\calU$,  this displacement is close to~$-1$.
        \label{fig:bb}
    }
\end{figure}

\subsection{Describing all eigenvalues and eigenvectors} \label{sec:rankone}
Here we give a geometric description of all the eigenvalues and eigenvectors of~$\calU$.
Their description is simple enough that one might discover them through intuition.
An alternative route (to the eigenvalues, at least) is to observe that 
\begin{equation}
    \calU =  \Refl \cdot \RotF = \text{$\U$}(2\text{$\ketzero\!\brazero$} - \text{$\Id$})\text{$\U^\dagger$} \cdot \RotF = 2\text{$\U$}\text{$\ketzero\!\brazero$} \cdot \RotF - \RotF
\end{equation}
is a rank-one update of the diagonal matrix~$\RotF$ (up to a minus sign). 
As such, one can give an explicit rational expression (see e.g.~\cite{Ion01}) whose roots are the eigenvalues of~$\calU$. 
By working through the details one can arrive at the below geometric description of the eigenvalues.

It will actually be slightly more convenient to consider $\calU^\dagger = \RotF^\dagger \cdot \Refl$, which has the same eigenvectors as~$\calU$ and the complex-conjugated eigenvalues.
To seek the eigenvectors of $\calU^\dagger$, consider the random variable~$\ket{\bone + \bii \by}$, which we saw is an eigenvector of~$\calU$ (and~$\calU^\dagger$) of eigenvalue~$1$ if $\mu = \E_p[\by] = 0$.
Plot each value of~$\ket{\bone + \bii \by}$ (i.e., $1+y_\ell$) as a point in the complex plane, along with the line passing through it and the origin, as in the diagram on the left in \Cref{fig:eigs}. 
Now imagine slowly rotating all the lines, always marking the points where they cross the vertical line corresponding to real-part~$1$.
Also, keep track of the mean of these points (that is, the weighted mean under~$p$), which will naturally also be on the same vertical line. 
Pause rotation whenever this barycenter touches the real axis (i.e., becomes~$1 + 0\i$), as in the diagram on the right in \Cref{fig:eigs}.
Say that after pausing we have rotated by~$\theta$ and the current points form the random variable~$\ket{\bone + \bii \by'}$. 
Then we claim that $\ket{\bone + \bii \by'}$ is an eigenvector of~$\calU^\dagger$ with eigenvalue $e^{-\i \cdot 2\theta}$ (hence an eigenvector of~$\calU$ with eigenvalue $e^{\i \cdot 2\theta}$).

\begin{figure}[H]\centering
    \includegraphics[width=.48\textwidth]{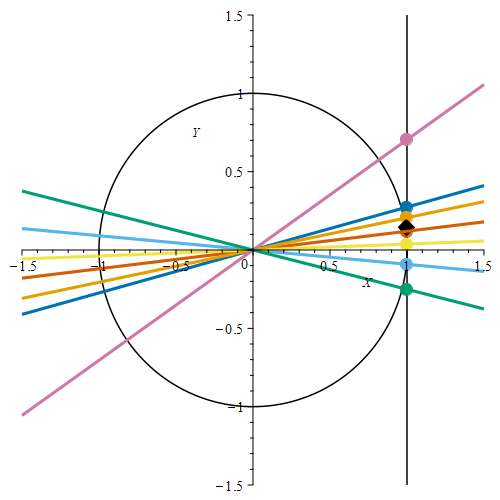}\quad
    \includegraphics[width=.48\textwidth]{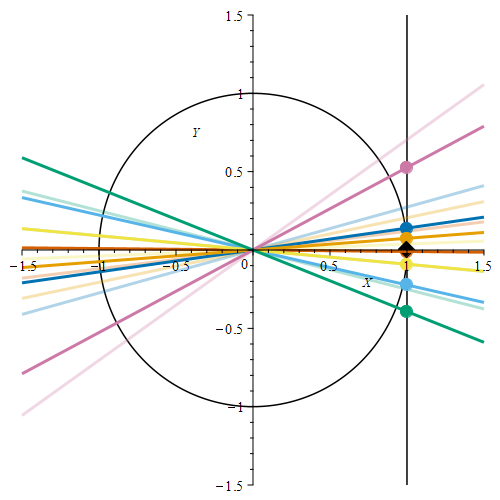}
    \caption{An example with the uniform distribution~$p$ on $D = 7$ outcomes again. 
    The $y_\ell$ values are $.560$, $.258$, $.057$, $-.045$, $-.088$, $-.250$, $-.494$ 
    (yellow, light blue, dark blue, green, pink, light orange, dark orange).
    In contrast to \Cref{fig:aa,fig:bb}, here the associated colored lines pass through $1 + \i y_\ell$ (rather than $1 - \i y_\ell$).
    As  before, the black diamond depicts the barycenter.
    On the left, we have the initial points $\ket{\bone + \bii \by}$.
    On the right, after rotating the lines slightly (with the original lines shown with light color), the new points' barycenter is on the real axis.
    Twice the angle of this rotation is an eigenphase of~$\calU$, with the point locations on the right forming the associated eigenvector.
    \label{fig:eigs} }
\end{figure}

To verify this claim, first observe that $\Refl \ket{\bone + \bii \by'} = \ket{\bone - \bii \by'}$, since $\E_p[\by'] = 0$. 
Next, recall that $\RotF^\dagger$ rotates the $\ell$th point by $2\arctan y_\ell$, so that $1 - \i y_\ell$ moves to $1 + \i y_\ell$.
Since $1 - \i y'_\ell$ is at angle $-\theta$ from~$1-\i y_\ell$, it follows that $\RotF^\dagger$ moves the $\ell$th point so that it is at angle~$-\theta$ from $1+\i y_\ell$, and hence angle $-2\theta$ from its starting location of $1 + \i \by'$.  
Thus we see that the composition $\calU^\dagger = \RotF^\dagger \cdot\Refl$ indeed multiplies $\ket{\bone + \bii \by'}$ by $e^{-\i 2\theta}$, as claimed.

So far we have explained how to find one eigenvector/value of~$\calU^\dagger$.  
To find more, we simply keep rotating the lines, waiting for ``black diamond'' to cross the real axis; we show the next two such occurrences in \Cref{fig:eigs2}.  
Note that as we rotate (clockwise, in the figures), the barycenter moves monotonically downward until such time as one of the colored lines rotates to a vertical position (taking the associated colored point's height to~$-\infty$); at this point, the black diamond's height ``wraps around'' to~$+\infty$, and then continues monotonically downward.
From this observation, it is easy to see that if $\by$'s $D$ values are all distinct, we will get the full complement of~$D$ distinct eigenvectors.  
(Otherwise eigenvalues will occur with multiplicity, but one can reduce to the distinct case by infinitesimal perturbations.)

\begin{figure}[H] \centering
    \includegraphics[width=.48\textwidth]{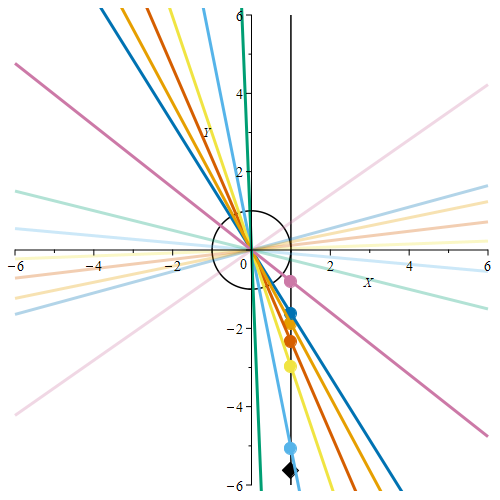}\quad
    \includegraphics[width=.48\textwidth]{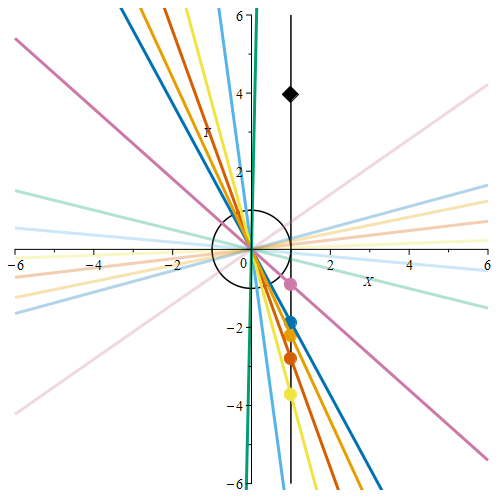} \\
    \includegraphics[width=.48\textwidth]{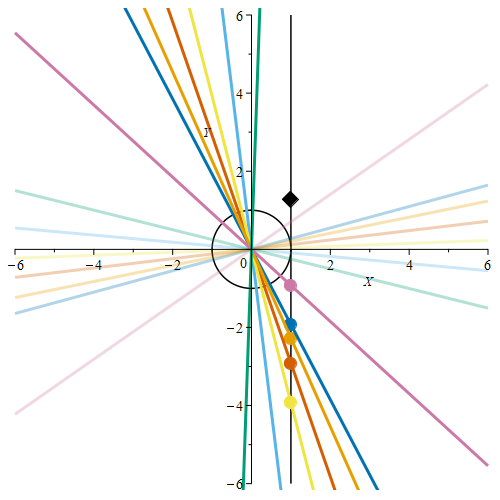}\quad
    \includegraphics[width=.48\textwidth]{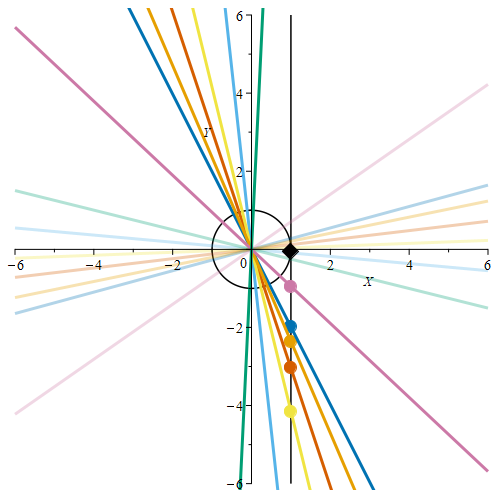} \\
    \caption{\label{fig:eigs2} Continuing from \Cref{fig:eigs}, we depict further rotations (with the diagram zoomed out by a factor of~$4$).  As the green line approaches vertical, the height of the barycenter (black diamond) approaches $-\infty$. As the green line crosses vertical, the black diamond's height becomes~$+\infty$ and begins descending again, until we reach a second eigenvector in the final diagram.
    The light blue point is outside the picture in the final three diagrams, and the green point is outside the picture in all four diagrams.}
\end{figure}

We observe that each of the eigenvectors $\ket{\bone + \bii \by'}$  we have described is unnormalized; the unit-length version of it is $\ket{\bone + \bii \by'}/\sqrt{1 + \E_p[(\by')^2]}$.  
Hence the overlap of our algorithm's starting vector, $\ket{\bone}$, with the eigenvalue is $1/\sqrt{1 + \E_p[(\by')^2]}$.
In the figures above, we have depictions of unnormalized eigenvectors as colored points on the $\Re = 1$ line with black diamond on the real axis.
Thus the overlap of the \emph{normalized} eigenvector with $\ket{\bone}$ is high if and only if the colored points' vertical heights are not ``too extreme'' on average.
In fact, using this viewpoint, it's not too hard to design initial random variables~$\by$ for which \emph{no single}  eigenvector of eigenphase $|\theta| \approx 2\mu$ has large overlap with~$\ket{\bone}$; only the whole subspace of them does.
This justifies why our analysis in \Cref{sec:eigo} cannot work (as in Grover's algorithm) by simply identifying one or two eigenvectors of eigenphase around $2|\mu|$ with which $\ket{\bone}$ has large overlap.

\section{Consequences}\label{sec:consequences}
\newcommand{\loweps}{c}

Having established \Cref{thm:maintask} which achieves the ``Main Task'', we will now layer on top a sequence of improvements that will culminate in our  
solution to the Mean Estimation problem from \Cref{thm:main0}. 
All but the last of these improvements is simply a classical reduction based on ``standard'' ideas (binary search, successive halving, etc.).
By the end of these classical improvements we will have achieved the results in \Cref{table:prior} up through~\cite{Mon15}, but with the improved (optimal) sample complexity of~$O(n)$.
(We remark that several of the aforementioned ``standard'' ideas appeared earlier in the works from \Cref{table:prior}.)
The final step, which gets us to the optimal Mean Estimation algorithm, requires combining several of the preceding reductions with the quantum Quantile Finding algorithm of Hamoudi~\cite{Ham21}.
As discussed in \Cref{rem:gatecxty} and \Cref{sec:gates}, all of our algorithms are also gate efficient; however for clarity of exposition, we will focus only on query (sample) complexity in this section.\\

We describe the improvements below as a sequence of problems to be solved.
In all of these problems, the setup remains the same: We have access to ``the code'' for a random variable~$\by$, 
we write
\begin{equation}
    \mu = {\E}_p[\by], \qquad \sigma^2 = {\Var}_p[\by], \qquad s^2 = {\E}_p[\by^2],
\end{equation}
and we are trying to solve the given problem with success probability at least~$2/3$.

\paragraph{The confidence parameter, and the ``log\,log trick''.} 
All of the following problems have the usual feature that the arbitrarily selected confidence parameter of~$2/3$ can be boosted to $1-\delta$ at the expense of $O(\log(1/\delta))$ repetitions (followed by taking the majority/median answer).
This allows us to chain together constantly many solutions at constant expense; we will omit explicit mention of this standard technique. However, in two cases we will need the following ``log\,log trick'' (which has certainly been used before, but doesn't seem to have a standard name).
 
Assume we plan to solve a sequence of problems with decreasing ``accuracy'' parameters $1 \geq \eta_1 \geq \eta_2 \geq \cdots \geq \eta_{t} \geq \eta^*$.
Here the values of~$\eta_1$ and $\eta^*$ should be fixed in advance, but we do not require that the other~$\eta_j$'s are; it is acceptable if~$\eta_{j+1}$'s value is chosen only after the solution for accuracy~$\eta_j$ is found. 
However we \emph{do} always require that $\eta_{j+1} \leq \eta_j/R$ for some fixed constant~$R < 1$.
It is also assumed that that solving a problem with accuracy~$\eta_j$ and confidence $1-\delta_j$ can be done at a ``cost'' of  $O(1/\eta_j) \cdot \log(1/\delta_j)$.

If we could ignore the issue of confidence, the costs would be upper-bounded by~$O(\cdot)$ of a geometric series of ratio~$R > 1$, beginning at~$1$ and ending just past~$1/\eta^*$.
Hence the total cost would be bounded by the \emph{final} cost of $O(1/\eta^*)$, up to a constant factor depending only on~$R-1$.  
Our goal is to achieve this cost, while properly taking into account the confidence parameter.
If, naively, we decided to take $\delta_j = \delta$ for all~$j$, then we would have to take $\delta \leq 1/(3T)$, where $T = O(\log(1/\eta^*))$ is an upper bound on the number of problems solved.  This would incur an extra multiplicative cost of~$O(\log(1/\delta)) = O(\log T) = O(\log \log (1/\eta^*))$.

To evade this extra ``log\,log'' factor, we may solve the~$j$th problem with confidence parameter, say,
\begin{equation}
    \delta_j = \exp(-C (\eta_j/\eta^*)^{1/2}), 
\end{equation}
where $C = C(R)$ is a certain constant.
(Here the exponent~$1/2 \in (0,1)$ was chosen arbitrarily.)
Note that the algorithm only needs to know~$\eta_j, \eta^*$ to set this~$\delta_j$, not the values of $\eta_{j+1}, \eta_{j+2}, \dots$.
Now on one hand, if the number of stages ends up being~$t$, the union bound implies the total failure probability is at most
\begin{equation}
    \sum_{j=1}^t \delta_j 
    = \sum_{j=1}^t \exp(-C (\eta_j/\eta^*)^{1/2})
    \leq \exp(-C) + \exp(-C R^{1/2}) + \exp(-C R^{2/2}) + \exp(-C R^{3/2}) + \cdots \leq 1/3,
\end{equation}
provided~$C = C(R)$ is large enough.  
On the other hand, the total cost is~$O(\cdot)$ of
\begin{equation}
    \sum_{j=1}^t (1/\eta_j) \cdot \tfrac12 C (\eta_j/\eta_t)^{1/2}
    = O((1/\eta_t)^{1/2}) \cdot \sum_{j=1}^t (1/\eta_j)^{1/2}
    \leq O((1/\eta_t)^{1/2}) \cdot O((1/\eta_t)^{1/2}) = O(1/\eta_t),
\end{equation}
where the inequality used that the sum is bounded by a geometric series (of ratio $R^{1/2} > 1$) with final term~$1/\eta_t$.

\subsection{The classical reductions}
We begin with the Main Task:
\begin{prob}[Main Task]  \label{prob:problem1}
    \begin{tabbing}
        Input: \quad\= Parameter $0 < \eps \leq 1$.\\
        Promise: \> $\second \leq 1$; and, either (i)~$\abs{\mu} \leq \loweps\eps$ or else (ii)~$\eps \leq \abs{\mu} \leq 2\eps$ holds.\\
        Output: \>  Which of (i) or~(ii) holds.
      \end{tabbing}
\end{prob}
Our \Cref{thm:maintask} shows that this problem, with $c = 1/2$, can be solved with $O(1/\eps)$ uses of the code for~$\by$.
\begin{remark} \label{rem:constants}
    Here we wrote ``$c$'' more generally so we can illustrate that any universal constant $0 < c < 1$ will be acceptable.
\end{remark}

Next we show that a solution to the above problem can be used to solve a slightly more general problem where we have to decide if $\mu$ is close to some general target $\widehat\mu$, not necessarily~$0$:

\begin{prob}[Main Task given target $\widehat{\mu}$]\label{prob:problem2}
    \begin{tabbing}
        Input: \quad\= Parameter $0 < \eps \leq 1$, and preliminary estimate~$\wh{\mu} \in [-1,1]$.\\
        Promise: \> $\second \leq 1$; and, either (i)~$\abs{\widehat{\mu} - \mu} \leq \loweps\eps$ or else (ii)~$\eps \leq \abs{\widehat{\mu} - \mu} \leq 2\eps$ holds.\\
        Output: \>  Which of (i) or~(ii) holds.
      \end{tabbing}
\end{prob}

\begin{lemma}  
    We can solve \Cref{prob:problem2} with $O(1/\eps)$ queries  to the code for~$\by$.
\end{lemma}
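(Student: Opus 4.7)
The plan is to reduce \Cref{prob:problem2} back to \Cref{prob:problem1} by the natural shift-and-scale. Given the code for $\by$ and the fixed target $\widehat{\mu}$, define the new random variable
\begin{equation}
    \by' = \frac{\by - \widehat{\mu}}{K}
\end{equation}
for a suitable constant $K$ chosen below. One sample of $\by'$ costs one sample of $\by$: we run the circuit~$\calY$ for $\by$ to obtain $\ket{y}$ in an ancilla register, then post-process classically (inside a quantum oracle) to produce $\ket{(y-\widehat{\mu})/K}$, and uncompute $\ket{y}$ via $\calY^{\dagger}$. Thus having the code for $\by'$ is free.

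Next I would verify the hypotheses needed to invoke \Cref{prob:problem1} on $\by'$. We have $\E[\by']=(\mu-\widehat{\mu})/K$, and
\begin{equation}
    \E[(\by')^2] = \frac{1}{K^2}\bigl(\E[\by^2] - 2\widehat{\mu}\mu + \widehat{\mu}^2\bigr) = \frac{s^2 - 2\widehat{\mu}\mu + \widehat{\mu}^2}{K^2}.
\end{equation}
Using $s\leq 1$, $|\widehat{\mu}|\leq 1$, and the Cauchy--Schwarz bound $|\mu|\leq s\leq 1$, the numerator is at most $1+2+1=4$, so $K=2$ makes $\E[(\by')^2]\leq 1$.

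Finally I would apply the algorithm of \Cref{thm:maintask} (i.e., our solution to \Cref{prob:problem1}) to $\by'$ with parameter $\eps' = \eps/K = \eps/2$. In case~(i) of \Cref{prob:problem2} we have $|\E[\by']| = |\mu-\widehat{\mu}|/2 \leq c\eps/2 = c\eps'$, and in case~(ii) we have $\eps' \leq |\E[\by']| \leq 2\eps'$; these are precisely the two cases of \Cref{prob:problem1}. The sample complexity is $O(1/\eps') = O(1/\eps)$, as required. There is no real obstacle here --- the only thing that needs even minor care is the a priori bound on $\E[(\by')^2]$, and choosing $K=2$ (or rounding up to any convenient constant) dispatches it; the "$0<c<1$" constant from \Cref{rem:constants} is preserved by the reduction since we never perturb the target, only rescale it uniformly.
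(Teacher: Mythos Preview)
Your proposal is correct and essentially identical to the paper's own proof: both shift by~$\widehat{\mu}$, bound the new second moment by~$4$, divide by~$2$, and invoke the solution to \Cref{prob:problem1} with~$\eps/2$. The only cosmetic difference is that the paper bounds $\E[(\by-\widehat{\mu})^2]$ via $(a-b)^2\le 2a^2+2b^2$ whereas you expand and bound termwise, arriving at the same constant.
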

\begin{proof}
    To solve \Cref{prob:problem2} for general~$\wh{\mu} \in [-1,1]$, let $\by' = \by - \wh{\mu}$.\footnote{Given the code for~$\by$, it is easy to produce the new code for~$\by'$. We defer all similar such observations to \Cref{sec:gates} on gate complexity.}  
    This has
    \begin{equation}
        (\second')^2 \coloneqq {\E}_p[(\by')^2] \leq 2{\E}_p[\by^2] + 2\wh{\mu}^2 \leq 2\cdot 1 + 2\cdot 1^2 = 4,
    \end{equation}
    where we used $(a-b)^2 \leq 2a^2 + 2b^2$.
    So if we further define $\by'' = \by'/2$, we get $(\second'')^2 \coloneqq \E_p[(\by'')^2] \leq 1$.
    Now it suffices to apply our solution for \Cref{prob:problem1} to~$\by''$, with~$\eps/2$ in place of its~$\eps$.
\end{proof}

The next upgrade is to actually estimate the mean of~$\by$, using use our solution to the general decision problem above.  
The idea is to use a form of binary search; this will in addition need the log\,log trick.

\begin{prob}[Mean estimation, promised second moment at most 1] \label{prob:problem3}
    \begin{tabbing}
      Input: \quad\= Parameter $0 < \eps \leq 1$.\\
      Promise: \> $\second  \leq 1$.\\
      Output: \> An estimate $\widehat{\mu}$ such that $\abs{\widehat\mu - \mu} \leq \eps$.
    \end{tabbing}
\end{prob}

\begin{lemma}   \label{lem:us}
    We can solve \Cref{prob:problem3} with $O(1/\eps)$ queries to the code for~$\by$.
\end{lemma}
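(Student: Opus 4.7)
The plan is classical binary search on $\mu$, using \Cref{prob:problem2} as an oracle. Since $s \leq 1$ forces $|\mu| \leq 1$, we may start with the interval $I_0 = [-1, 1]$ known to contain $\mu$, and proceed in stages $j = 0, 1, 2, \ldots$, maintaining a confidence interval $I_j$ of length $L_j$ containing $\mu$, which we shrink by a fixed constant factor (say $2$) per stage. After $t = O(\log(1/\eps))$ stages we reach $L_t \leq 2\eps$, and we return the midpoint of $I_t$ as the estimate $\widehat{\mu}$.

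The core task is a single refinement: given $I_j = [a_j, b_j]$ of length $L_j$, produce $I_{j+1}$ of length $L_j/2$ containing $\mu$, using $O(1)$ calls to \Cref{prob:problem2} each at accuracy $\eps_j = \Theta(L_j)$. I will place a constant number of candidate targets $\widehat\mu^{(1)},\dots,\widehat\mu^{(K)}$ evenly spaced in $I_j$ (for $K$ a sufficiently large absolute constant) and query \Cref{prob:problem2} at each with accuracy $\eps_j$. The spacing is chosen so that the candidate closest to $\mu$ satisfies $|\widehat\mu^{(k)} - \mu| \leq c\eps_j$, hence falls strictly into case~(i) and is accepted with high probability. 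Any candidate with $|\widehat\mu^{(k)} - \mu| \geq \eps_j$ (case~(ii)) is rejected. We simply pick any accepted candidate $\widehat\mu^{(*)}$ and take $I_{j+1} = [\widehat\mu^{(*)} - \eps_j, \widehat\mu^{(*)} + \eps_j]$; since acceptance rules out case~(ii), this interval contains $\mu$, and $2\eps_j = L_j/2$ by construction. One has to arrange $\eps_j$ so that ``far'' candidates are covered by the case~(ii) bound $|\widehat\mu - \mu| \leq 2\eps_j$ as well --- this is the reason for using a constant number of candidates rather than just one, and forces $\eps_j$ to be a constant fraction of $L_j$ rather than arbitrarily small.

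The main obstacle is the ``gap'' in the promise of \Cref{prob:problem2}: candidates with $c\eps_j < |\widehat\mu^{(k)} - \mu| < \eps_j$ violate the promise and the subroutine may answer either way. The resolution is that this is benign: we rely only on (a)~the closest candidate being accepted (guaranteed by case~(i)) and (b)~acceptance implying $|\widehat\mu^{(*)} - \mu| < \eps_j$ (i.e., ``not in the case~(ii) shell''), which still certifies the shrunken $I_{j+1}$. Thus whatever the subroutine does on gap candidates is immaterial to correctness.

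Finally, the cost analysis: stage $j$ makes $O(1)$ calls to \Cref{prob:problem2} at accuracy $\Theta(L_j)$, each costing $O(1/L_j)$ queries, and $L_j = 2^{-j} \cdot L_0$, so summing across stages gives a geometric series totaling $O(1/\eps)$. The subtle point is success probability: naively boosting each stage's confidence to $1 - 1/\Theta(\log(1/\eps))$ would cost a multiplicative $O(\log\log(1/\eps))$ overhead, which is exactly what the ``log-log trick'' introduced at the start of this section is designed to evade. Applying it with $\eta_j = L_j$ and $\eta^* = \eps$, I set $\delta_j = \exp(-C\sqrt{L_j/\eps})$; the failure probabilities form a doubly-rapidly-decaying series summing to at most $1/3$ for $C$ sufficiently large, while the per-stage cost becomes $O((1/L_j)\log(1/\delta_j)) = O(1/\sqrt{L_j\eps})$, a geometric series in $j$ dominated by its last term $O(1/\eps)$, as required.
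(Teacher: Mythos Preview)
Your handling of the ``gap'' region ($c\eps_j < |\widehat\mu^{(k)} - \mu| < \eps_j$) is fine; the problem is on the other side of the promise.  With your choice $2\eps_j = L_j/2$, i.e.\ $\eps_j = L_j/4$, a candidate at distance greater than $2\eps_j$ from~$\mu$ lies outside \emph{both} cases of \Cref{prob:problem2}, so the subroutine's output on it is completely unspecified --- in particular it may be ``accepted'', and your rule ``pick any accepted candidate'' would then return an $I_{j+1}$ not containing~$\mu$.  You correctly flagged the relevant constraint (``far candidates must satisfy $|\widehat\mu - \mu| \leq 2\eps_j$''), but it is incompatible with the pair of requirements you also impose, namely $\eps_j = L_j/4$ and ``some candidate within $c\eps_j$ of every $\mu \in I_j$''.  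The latter forces candidates near both endpoints of~$I_j$, so the farthest candidate from~$\mu$ can be at distance at least $L_j - 2c\eps_j = (4-2c)\eps_j > 2\eps_j$.  There is no choice of~$K$ that rescues this.

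The paper avoids the issue with a \emph{single} query per stage, using $\eps_j = |I_j|/2$ and an \emph{asymmetric} target $\widehat\mu = a_j + c\eps_j$.  Then every $\mu \in I_j$ satisfies $|\widehat\mu - \mu| \leq |I_j| = 2\eps_j$ automatically, so the far-side promise violation never arises.  If the answer is ``(ii)'' one learns $\mu > a_j + 2c\eps_j$ and trims the left $c$-fraction of~$I_j$; if the answer is ``(i)'' one learns $|\widehat\mu - \mu| < \eps_j$ (the alternative $|\widehat\mu - \mu| > 2\eps_j$ being impossible by construction) and trims from the right.  Either way the interval shrinks by a fixed constant factor, and together with the log-log trick --- which you handled correctly --- this yields the claimed $O(1/\eps)$ total.
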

\begin{proof}
    Given our algorithm for \Cref{prob:problem2}, repeating it $O(\log (1/\delta))$ times and taking the majority answer yields the ability to do the following:
    \begin{multline}
        \text{For any~$\wh{\mu}$, assuming $\abs{\widehat{\mu} - \mu} \leq c\eps'$ or $\eps' \leq \abs{\widehat{\mu} - \mu} \leq 2\eps'$,} \\
        \text{with $O(\log(1/\delta)/\eps')$ queries we can distinguish, except with probability at most~$\delta$.} \label{eqn:guar}
    \end{multline}
    Note this is of the form needed for the log\,log trick, with the ``accuracy'' parameter being~$\eps'$. We will be repeatedly using \Cref{eqn:guar} in a kind of binary search, with a sequence of $\eps'$ values starting at~$1$, decreasing by a factor of $1-c$ or less at each stage, and terminating with a value at least~$\eps/2$.  Thus the log\,log trick tells us the final query complexity will be~$O(1/\eps)$, as desired.  

    Our binary search will aim to ensure that in its $j$th stage, $\mu$~is guaranteed to be in the interval~$I_j = [a_j, b_j]$.
    We may start with $I_1 = [-1,1]$; this is guaranteed to contain~$\mu$ because 
    \begin{equation}
        \mu^2 = {\E}_p[\by]^2 \leq {\E}_p[\by^2] = s^2 \leq 1,
    \end{equation}
    the last inequality by the promise of \Cref{prob:problem3}.

    In the $j$th stage of the binary search, we employ~\eqref{eqn:guar}, with
    \begin{equation}
        \eps' = \eps_j \coloneq |I_j|/2 \quad\text{and}\quad\wh{\mu} = a_j + c\eps_j.
    \end{equation}
    As mentioned, the initial $\eps'$ value is~$1$.
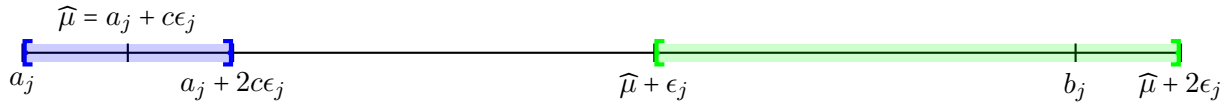
\begin{figure}[h]
    \centering

    \begin{tikzpicture}[scale=7]
\draw[-, thick] (-1,0) -- (1.2,0);
        \foreach \x/\xtext in {-1/$a_j$,-0.8/,-0.6/$a_j+2c\eps_j$,0.2/$\wh{\mu}+\eps_j$,1/$b_j$,1.2/$\wh{\mu}+2\eps_j$}
            \draw[thick] (\x,0.5pt) -- (\x,-0.5pt) node[below] {\xtext};
        \draw (-0.8,0.5pt) node[above] {$\wh{\mu}=a_j+c\eps_j$};
        \draw[[-, ultra thick, blue] (-1,0) -- (-.99,0);
        \draw[{-]}, ultra thick, blue] (-0.61,0) -- (-0.6,0);
        \draw[{[-}, ultra thick, green] (0.2,0) -- (0.21,0);
        \draw[{-]}, ultra thick, green] (1.19,0) -- (1.2,0);
        \fill[opacity = 0.2, blue] (-1,-.10ex) -- (-0.6, -.10ex) -- (-0.6, .10ex) -- (-1,.10ex) -- cycle;
        \fill[opacity = 0.2, green] (0.2,-.10ex) -- (1.2, -.10ex) -- (1.2, .10ex) -- (0.2,.10ex) -- cycle;
\end{tikzpicture}
    \caption{In each step of the binary search procedure, we eliminate either the left interval  or the right interval from consideration, which reduces the interval we're left with by a constant factor.}
\end{figure}

    Assuming (by virtue of the log\,log trick) that all results obtained from~\eqref{eqn:guar} are correct, let us describe how~$I_2, I_3, \dots$ may be chosen.
    \begin{itemize}
        \item Supposing that in the $j$th stage we get from~\eqref{eqn:guar} that $\eps_j \leq \abs{\widehat{\mu} - \mu} \leq 2\eps_j$.
        Then it must be that $\abs{\widehat{\mu} - \mu} \not \leq c \eps_j$.  Thus we may take
            \begin{equation}
                I_{j+1} = [a_{j} + 2c\eps_j, b_j] = [a_j + c|I_j|, b_j].
            \end{equation}
        \item Supposing that in the $j$th stage we get from~\eqref{eqn:guar} that $\abs{\widehat{\mu} - \mu} \leq c\eps_j$.
        Then it must be that $\eps_j \leq \abs{\widehat{\mu} - \mu} \leq 2\eps_j$ fails to hold.
        It can't be that $\abs{\widehat{\mu} - \mu} > 2\eps_j = |I_j|$, and hence it must be that $\abs{\widehat{\mu} - \mu} < \eps_j$.
        Thus we may take
        \begin{equation}
            I_{j+1} = [a_{j}, a_j + (3/4)|I_j|] \supseteq [a_j, a_j + (c + 1/2)|I_j|] = [a_j, \wh{\mu} + \eps_j].
        \end{equation}
    \end{itemize}
    In either case, observe that $|I_{j+1}| \leq (1-c)|I_j|$. That is, the widths of our intervals become smaller by a factor of $1-c$ or less at each stage, as promised for the log\,log trick.
    We may terminate the search once we reach some~$I_{t+1}$ with $|I_{t+1}| \leq \eps$; thus indeed the final use of \Cref{eqn:guar} has accuracy parameter at least~$\eps/2$ so we get total query cost~$O(1/\eps)$.
\end{proof}

Our next upgrade will be to achieve the result of~\cite{BHT98} from \Cref{table:prior}, namely optimal mean estimation for Bernoulli random variables.
This uses the standard trick of ``successive halving''.
Besides illustrating that it can be achieved via our methods, we will actually \emph{need} this result as a lemma for our final mean estimation algorithm.  
\begin{prob}[Bernoulli mean estimation] \label{prob:problem4a}
    \begin{tabbing}
      Input: \quad\= Positive integer~$n$.\\
      Promise: \> $\by \in \{0,1\}$ always.\\
      Output: \> An estimate $\widehat{\mu}$ such that $\abs{\widehat\mu - \mu} \leq \sigma/n$.
    \end{tabbing}
\end{prob}
\begin{lemma}   \label{lem:sim}
    We can solve \Cref{prob:problem4a} with $O(n)$ queries  to the code for~$\by$.
\end{lemma}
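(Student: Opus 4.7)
The plan is a two-phase strategy: a ``coarse'' phase that locates~$\mu$ on a multiplicative scale via successive halving using the Main Task of \Cref{prob:problem1}, followed by a ``fine'' phase that rescales $\by$ to bring its second moment to~$\leq 1$ and then applies \Cref{lem:us}. We may assume $\mu \leq 1/2$: run the whole procedure in parallel on both $\by$ and $1-\by$ at a factor-$2$ overhead, keeping the branch whose coarse estimate is~$\leq 1/2$ (if both are $1$ then $\mu \in [1/4, 3/4]$ and $\sigma \geq 1/4$, so either branch's refinement suffices).

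\textbf{Phase~1 (coarse estimation).} Set $M_k := 2^{-k}$ for $k = 0, 1, \ldots, k_{\max} := \lceil 2\log_2 n\rceil$, maintaining the invariant $\mu \leq M_k$ (trivial at $k=0$ since $\by \in \{0,1\}$). Define $\bz_k := 2^{k/2}\by$, so $\E[\bz_k^2] = 2^k\mu \leq 1$ under the invariant. Invoke \Cref{prob:problem1} on $\bz_k$ with $\eps_k := 2^{-k/2}/2$ at cost $O(2^{k/2})$ queries, distinguishing $\mu \leq M_k/4$ from $M_k/2 \leq \mu \leq M_k$. On ``large'', halt and set $\wt{\mu} := M_k$. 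On ``small'', continue (with $\mu \leq M_k/4 < M_{k+1}$, so the invariant passes). If $k_{\max}$ is reached without halting, output $\wh{\mu} := 0$; one checks that for $\mu \leq 1/(4n^2)$ and $\mu \leq 1/2$ the error $\mu$ is at most $\sigma/n$ (since $\mu^2 \leq \mu/(2n^2) \leq \mu(1-\mu)/n^2 = \sigma^2/n^2$).

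\textbf{Phase~2 (refinement).} Given $\wt{\mu} = M_k$ with $\mu \in [\wt{\mu}/4,\wt{\mu}]$, let $\bz := \by/\sqrt{\wt{\mu}}$, so $\E[\bz^2] = \mu/\wt{\mu} \leq 1$. By \Cref{lem:us}, $O(n)$ queries suffice to produce $\wh{m}$ with $|\wh{m} - \E[\bz]| \leq c/n$ for any fixed small constant $c$. Return $\wh{\mu} := \sqrt{\wt{\mu}}\cdot \wh{m}$; the error is $\sqrt{\wt{\mu}}\,|\wh{m}-\E[\bz]| \leq c\sqrt{\wt{\mu}}/n \leq 2c\sqrt{\mu}/n \leq \sigma/n$ for $c$ sufficiently small (using $\sigma \geq \sqrt{\mu/2}$ for $\mu \leq 1/2$).

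\textbf{Cost, confidence, and main obstacle.} Phase~1 has $O(\log n)$ stages with geometrically growing costs $O(2^{k/2})$; the ``$\log\log$ trick'' of this section (with accuracy parameter $\eta_k := 2^{-k/2}$) bounds the total Phase~1 cost by $O(2^{k^*/2}) \leq O(n)$, where $k^*$ is the halting stage (at most $k_{\max}$). Combined with Phase~2's $O(n)$ cost, the total is $O(n)$. The main subtlety is the ``ambiguous region'' $\mu \in (M_k/4, M_k/2)$ in Phase~1, where \Cref{prob:problem1}'s promise is not strictly satisfied and its output is arbitrary; one verifies either outcome is acceptable --- a ``large'' output yields $\wt{\mu} = M_k$ with $\mu \in [M_k/4, M_k]$, still a constant-factor (within~$4$) approximation that suffices for Phase~2, while a ``small'' output still implies $\mu < M_{k+1}$, preserving the invariant for the next stage.
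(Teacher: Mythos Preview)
Your proof is correct and follows the paper's strategy: successive halving to obtain a constant-factor approximation of~$\mu$, then rescaling and applying \Cref{lem:us} for the refinement, with the log-log trick controlling the confidence budget. The differences from the paper are cosmetic (you invoke \Cref{prob:problem1} directly in the coarse phase whereas the paper uses \Cref{prob:problem3}, and you handle the reduction to small~$\mu$ by running two parallel branches rather than a single preliminary estimate); note only that your intermediate claim of Phase-1 cost $O(2^{k^*/2})$ is loose when $k^* \ll k_{\max}$ --- the confidence boost with $\eta^*\approx 1/n$ inflates early-halting costs --- but the final $O(n)$ bound is unaffected.
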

\begin{proof}
    Let $p = \Pr[\by = 1]$, so $\mu = p$ and $\sigma = \sqrt{p(1-p)}$.
    It is convenient to first reduce to the case of small~$p$, say $p \leq 3/4$.
    We can do this using~$O(1)$ queries by employing our solution to \Cref{prob:problem3} with~$\eps = 1/4$. 
    If this yields an estimate~$\wh{\mu} \geq 1/2$, then we can be confident~$\mu  = p \geq 1/4$. 
    In this case, we simply replace~$\by$ with~$1-\by$ and subtract our final estimate from~$1$; this leaves~$\sigma$ unchanged and achieves~$p \leq 3/4$.
    Note that $\sigma \geq \frac12 \sqrt{p}$, so it suffices for our algorithm to estimate~$p$ to within an additive~$\sqrt{p}/(2n)$.
    Because of this, if ever the algorithm determines that $p \leq 1/(4n^2)$, it may acceptably output the estimate~$\wh{\mu} = 0$. 

    Our algorithm now proceeds in stages, always maintaining an upper bound~$\ol{p}$ on~$p$.
    Initially, $\ol{p} = 3/4$.  
    So long as $\ol{p} > 1/(4n^2)$, the algorithm attempts to lower~$\ol{p}$ by a constant factor.
    It does this by applying the algorithm for \Cref{prob:problem3} on the random variable $\by' \coloneqq \by/\sqrt{\ol{p}}$, with its error parameter~$\eps$ set to~$\tfrac14 \sqrt{\ol{p}}$.
    We will later observe that the log\,log trick applies, and for now assume all estimates are accurate. 
    So given an estimate~$\wh{\mu}'$ of~$\E_p[\by']$ that is correct to an additive~$\tfrac14 \sqrt{\ol{p}}$, multiplying it by~$\sqrt{\ol{p}}$ gives an estimate~$p'$ of~$p$ that is correct to an additive $\tfrac14{\ol{p}}$.
    
    If $p' \leq \tfrac12{\ol{p}}$, the algorithm may infer that~$p \leq \tfrac34{\ol{p}}$, and therefore lower~$\ol{p}$ by a factor of~$\frac34$ for the next stage.  
    On the other hand, if $p' \geq \tfrac12{\ol{p}}$, the algorithm may infer that $p \geq \tfrac14{\ol{p}}$, and thus $\tfrac12{\ol{p}}$ is within a factor~$2$ of~$p$. 
    
    The algorithm proceeds in this way until either $\ol{p} \leq 1/(4n^2)$ (at which point it may safely output~$\wh{\mu} = 0$) or else it knows a factor-$2$ approximation~$\wh{p} \geq 1/(8n^2)$ of~$p$.
    In the latter case, the algorithm uses the solution to \Cref{prob:problem3} one more time, on the random variable $\by/\sqrt{2\wh{p}}$ (which has second moment at most~$1$, as needed), with error parameter $\eps = \frac{1}{2n}$.
    This requires $O(n)$ uses of the code for~$\by$, and --- multiplying the estimate by~$\sqrt{2\wh{p}}$ --- yields an estimate for~$p$ that is within additive error $\sqrt{2\wh{p}} \cdot \frac{1}{2n}  \leq \sqrt{p}/n = \sigma/n$, as desired.

    It remains to remark that we can use the log\,log trick as before to ensure high confidence in all stages succeeding; when our current bound on~$p$ is $\ol{p}$, we can define the ``accuracy parameter'' to be $\eps = \tfrac14 \sqrt{\ol{p}}$.
    Then as in the preceding proof, we can achieve this accuracy and~$\delta$ confidence using $O(1/\eps) \log(1/\delta)$ queries.
    Since $\ol{p}$ decreases by a factor of~$3/4$ in each stage, the accuracy parameter decreases by a factor~$\sqrt{3/4} < 1$.
    And since $\ol{p}$ never goes below~$1/(4n^2)$, our final accuracy parameter~$\eps^*$ may be set to~$1/(8n)$, meaning the total query cost will be~$O(n)$, as desired.
\end{proof}

Next we observe that the result of~\cite{Ter99} from \Cref{table:prior} follows almost immediately:
\begin{prob}[Mean estimation (suboptimal) for bounded random variables] \label{prob:problem4b}
    \begin{tabbing}
      Input: \quad\= Positive integer~$n$.\\
      Promise: \> $\by \in [0,1]$ always.\\
      Output: \> An estimate $\widehat{\mu}$ such that $\abs{\widehat\mu - \mu} \leq \sqrt{\mu}/n$.
    \end{tabbing}
\end{prob}
\begin{lemma}   \label{lem:ter}
    We can solve \Cref{prob:problem4b} with $O(n)$ queries  to the code for~$\by$.
\end{lemma}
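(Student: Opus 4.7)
My plan is to reduce \Cref{prob:problem4b} to the Bernoulli case \Cref{prob:problem4a} via the classical ``randomized rounding'' trick: given $\by \in [0,1]$, form a new Bernoulli random variable $\by'$ whose mean equals~$\mu$, then apply \Cref{lem:sim}. The standard-deviation bound $\sigma_{\by'} = \sqrt{\mu(1-\mu)} \leq \sqrt{\mu}$ will give the desired $\sqrt{\mu}/n$ accuracy.

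The key construction is producing ``the code'' for~$\by'$ given the code for~$\by$. Define $\by'$ on the extended probability space $[D] \times \{0,1\}$ by drawing $\bell \sim p$ and then independently flipping a coin with bias~$y_{\bell}$; the value of~$\by'$ is this coin. Clearly $\E[\by'] = \E_p[\by] = \mu$, and $\by' \in \{0,1\}$. To synthesize $\by'$ quantumly, I would start from $\ketzero$, apply the synthesizer~$\U$ for~$p$ to obtain $\sum_\ell \sqrt{p(\ell)}\ket{\ell}\ket{\garbage_\ell}$, then use~$\calY$ to write $y_\ell$ into a fresh register. Next, adjoin a qubit in state $\ket{0}$ and apply a controlled rotation that sends $\ket{y_\ell}\ket{0}$ to $\ket{y_\ell}\bigl(\sqrt{1-y_\ell}\ket{0} + \sqrt{y_\ell}\ket{1}\bigr)$ (this is implementable since $y_\ell \in [0,1]$). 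Finally, uncompute the $\ket{y_\ell}$ register via $\calY^\dagger$. The resulting state has the form $\sqrt{1-\mu}\,\ket{0}\ket{\garbage'_0} + \sqrt{\mu}\,\ket{1}\ket{\garbage'_1}$ after swapping the new bit to the front, so it is exactly a Bernoulli synthesizer for~$\by'$ with parameter~$\mu$. Since $y'(b) = b$ is trivially computable, this gives complete code for~$\by'$, and one use of it takes $O(1)$ uses of the code for~$\by$ (including the controlled versions needed).

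Having code for the Bernoulli random variable~$\by'$, I would invoke \Cref{lem:sim} on~$\by'$ with the same parameter~$n$. This returns an estimate~$\wh{\bmu}$ with
\begin{equation}
    |\wh{\bmu} - \mu| \leq \sigma_{\by'}/n = \sqrt{\mu(1-\mu)}/n \leq \sqrt{\mu}/n,
\end{equation}
using $O(n)$ queries to the code for~$\by'$, hence $O(n)$ queries to the code for~$\by$.

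The main ``obstacle'' is just the quantum implementation of the controlled rotation that maps a computed value $y_\ell \in [0,1]$ to the correct single-qubit state; this is a standard subroutine (requiring only the gate-complexity discussion deferred to \Cref{sec:gates}), so no real difficulty arises. Everything else is bookkeeping: verifying that the constructed circuit satisfies \Cref{def:synthesizer}, that one query to $\by'$ costs $O(1)$ queries to~$\by$, and that the Bernoulli guarantee from \Cref{lem:sim} translates directly into the claimed $\sqrt{\mu}/n$ bound via $\sqrt{\mu(1-\mu)} \leq \sqrt{\mu}$.
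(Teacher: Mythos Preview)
Your proposal is correct and takes essentially the same approach as the paper's proof: both reduce to the Bernoulli case (\Cref{prob:problem4a}) by replacing the $[0,1]$-valued $\by$ with a $\{0,1\}$-valued $\by'$ of the same mean via randomized rounding, then invoke \Cref{lem:sim} and use $\sigma_{\by'} \leq \sqrt{\mu}$. Your write-up is actually more explicit than the paper's about how to implement the code for~$\by'$ (the controlled rotation and uncomputation), details the paper defers to \Cref{sec:gates}.
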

\begin{proof}
    We reduce from the $\{0,1\}$-valued case essentially as Terhal~\cite{Ter99}.
    Given the code for a random variable~$\by \in [0,1]$, we can tack on 
    a small amount of additional classical randomness, forming code for a related random variable~$\by' \in \{0,1\}$ as follows: Draw $\by$, and if the outcome is~$y$, let~$\by'$ be a $\{0,1\}$-valued random variable with expectation~$y$.
    In this way, $\E_p[\by'] = \mu$, and $\stddev_p[\by'] \leq \sqrt{\E_p[(\by')^2]} = \sqrt{\mu}$.
    Thus we may apply our solution to \Cref{prob:problem4a} to~$\by'$ to complete the proof.
\end{proof}
 
Finally we show how to achieve the results due to \cite{Hei02,Mon15} from \Cref{table:prior} with query complexity~$O(n)$: essentially, optimal Mean Estimation in terms of a \emph{known} upper bound on the standard deviation.
Aside from a trivial scaling issue, the difference between this and our \Cref{prob:problem3} is that we only wish to assume a bound on~$\sigma$ rather than~$s$.
Since $\sigma^2 = \second^2 - \mu^2$, we can only have $\second$ significantly larger than $\sigma$ if $\mu$ is very large compared to~$\sigma$.  
As noted by Montanaro~\cite{Mon15}, such a situation can easily be fixed by subtracting one ``typical'' value of~$\by$ from each subsequent draw.
\begin{prob}[Mean estimation in terms of a standard deviation bound] \label{prob:problem5}
    \begin{tabbing}
      Input: \quad\= Positive integer~$n$, and parameter~$\sigma_{\text{bound}} \geq 0$.\\
      Promise: \> $\sigma \leq   \sigma_{\text{bound}}$.\\
      Output: \> An estimate $\widehat{\mu}$ such that $\abs{\widehat\mu - \mu} \leq \sigma_{\text{bound}}/n$.
    \end{tabbing}
\end{prob}
\begin{lemma}   \label{lem:mont}
    We can solve \Cref{prob:problem5} with $O(n)$ queries  to the code for~$\by$.
\end{lemma}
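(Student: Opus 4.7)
The plan is to reduce \Cref{prob:problem5} to \Cref{prob:problem3} (Lemma~\ref{lem:us}) by first shifting $\by$ so that the large-mean contribution to $\second^2 = \sigma^2 + \mu^2$ is killed, leaving a second moment comparable to the variance. By the rescaling $\by \mapsto \by/\sigma_{\text{bound}}$ I would first assume without loss of generality that $\sigma_{\text{bound}} = 1$, so that the goal becomes $|\widehat\mu - \mu| \leq 1/n$ under the promise $\sigma \leq 1$.

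The first step is to draw a single classical sample $\widetilde\mu$ of $\by$ (run $\calP$, measure, apply $\calY$); this uses $O(1)$ queries, and by Chebyshev's inequality satisfies $|\widetilde\mu - \mu| \leq 10\sigma$ except with probability at most $1/100$. Next, I would form the shifted-and-scaled random variable $\by' = (\by - \widetilde\mu)/C$ for a suitable absolute constant $C$ (e.g.~$C = 11$), noting that the code for $\by'$ is easy to produce from the code for $\by$ as in the proof of \Cref{prob:problem2} (implementation details deferred to \Cref{sec:gates}). On the good event that the initial sample is close to the mean,
\begin{equation}
    \E_p[(\by')^2] = \frac{\sigma^2 + (\mu - \widetilde\mu)^2}{C^2} \leq \frac{101\sigma^2}{C^2} \leq 1,
\end{equation}
so the promise of \Cref{prob:problem3} holds for $\by'$. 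The second step is to apply Lemma~\ref{lem:us} to $\by'$ with error parameter $\eps' = 1/(Cn)$, which costs $O(1/\eps') = O(n)$ queries and returns $\widehat\mu'$ with $|\widehat\mu' - \E[\by']| \leq \eps'$. Outputting $\widehat\mu = C\widehat\mu' + \widetilde\mu$ then gives $|\widehat\mu - \mu| \leq C\eps' = 1/n$, and undoing the initial rescaling recovers the required $\sigma_{\text{bound}}/n$ bound.

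The only real thing to be careful about is confidence. The Chebyshev step fails with at most $1/100$ probability, and Lemma~\ref{lem:us} can be boosted to any constant confidence at constant overhead; since we chain only $O(1)$ subroutines, a plain union bound keeps total failure probability well below $1/3$ without needing the log-log trick used in earlier reductions. I do not anticipate any deeper obstacle here --- the whole argument is essentially Montanaro's observation that one free classical sample suffices to shift an unknown-mean distribution into the regime already handled by \Cref{prob:problem3}.
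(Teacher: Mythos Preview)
Your proposal is correct and follows essentially the same route as the paper: draw one classical sample, shift $\by$ by it so that (via Chebyshev) the second moment is $O(\sigma^2)$, rescale, and invoke \Cref{lem:us}. The only cosmetic difference is that the paper normalizes to $\sigma_{\text{bound}}=1/4$ (absorbing your constant~$C$ into the rescaling) and separately disposes of the degenerate case $\sigma_{\text{bound}}=0$, which your division-by-$\sigma_{\text{bound}}$ would need to handle as a one-line aside.
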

\begin{proof}
    The idea essentially appears in~\cite{Mon15}.
    If~$\sigma_{\text{bound}} = 0$ then~$\by$ is constant and one draw suffices to get~$\mu$ exactly. 
    Otherwise, by scaling we may assume that~$\sigma_{\text{bound}}$ is, say,~$1/4$; then our  target additive error is~$1/(4n)$.
    
    The algorithm first uses the code for~$\by$ to draw a single sample --- call the sample $\boldm$, and say its outcome is~$m$.
    Then the algorithm forms (the code for) a new random variable~$\by' = \by - m$.
    We have $\mu' \coloneqq \E_p[\by'] = \mu - m$, so it suffices to estimate~$\mu'$ to an additive~$1/(4n)$.

    By applying Chebyshev's inequality to~$\boldm$, we get that that $\abs{m - \mu} \leq 2\sigma$ except with probability at most~$1/4$.  
    Assuming this happens, we have
    \begin{multline}
        (s')^2 \coloneqq {\E}_p[(\by')^2] = {\E}_p[(\by - \mu + \mu - m)^2] \leq 2{\E}_p[(\by - \mu)^2] + 2{\E}_p[(\mu - m)^2] \\
        \leq 2\sigma^2 + 2(2\sigma)^2 = 10 \sigma^2 \leq 10 \sigma_{\text{bound}}^2 = 10/4^2 \leq 1.
    \end{multline}  
    Thus the promise of \Cref{prob:problem3} is satisfied for~$\by'$, and by taking~$\eps = 1/(4n)$ (and repeating our algorithm for \Cref{prob:problem3} a few times to get failure probability at most~$1/12$), we get the necessary estimate for~$\mu'$.
\end{proof}

\subsection{The final upgrade: handling an unknown standard deviation}
With our solution to \Cref{prob:problem5} in hand, the last remaining challenge is to avoid assuming a known upper bound on the standard deviation~$\sigma$ of~$\by$.

To begin, we return to our solution to \Cref{prob:problem4b} concerning $[0,1]$-valued random variables.
To make it look more like our final goal, we achieve error~$s/n = \sqrt{\E_p[\by^2]}/n$ rather than the larger $\sqrt{\mu} = \sqrt{\E_p[\by]}/n$.
However we will have to assume that~$s \geq 1/n$.
\begin{prob} \label{prob:problem6}
    \begin{tabbing}
      Input: \quad\= Positive integer $n$.\\
      Promise: \> $\by \in [-1,1]$ and $\second \geq 1/n$.\\
      Output: \> An estimate $\widehat{\mu}$ such that $\abs{\widehat\mu - \mu} \leq \second/n$.
    \end{tabbing}
\end{prob}
\begin{lemma}
    We can solve \Cref{prob:problem6} with $O(n)$ queries  to the code for~$\by$.
\end{lemma}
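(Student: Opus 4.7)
My plan is to reduce \Cref{prob:problem6} to \Cref{prob:problem5} (\Cref{lem:mont}) by first producing, in $O(n)$ queries, a constant-factor upper bound $\widehat{s}$ on~$s$, meaning $s \leq \widehat{s} \leq O(s)$. Once such a $\widehat{s}$ is in hand, since $\sigma \leq s \leq \widehat{s}$, a single invocation of \Cref{lem:mont} with $\sigma_{\text{bound}} = \widehat{s}$ yields an estimate of~$\mu$ with additive error $\widehat{s}/n = O(s/n)$ using $O(n)$ more queries, which suffices. Thus the real task is producing~$\widehat{s}$ in $O(n)$ queries.

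To produce~$\widehat{s}$, I would run a successive-halving search on an upper bound $\bar{s}_j = 2^{-j}$ for~$s$, applying \Cref{lem:mont} to the auxiliary random variable $\bz \coloneqq \by^2 \in [0,1]$. The key observations are that $\E_p[\bz] = s^2$ and, using $\by^2 \leq 1 \Rightarrow \by^4 \leq \by^2$, that $\stddev_p[\bz] \leq \sqrt{\E_p[\by^4]} \leq \sqrt{\E_p[\by^2]} = s$. Starting from $\bar{s}_0 = 1$ (valid since $\by \in [-1,1]$), at stage $j \geq 1$, assuming inductively $s \leq \bar{s}_{j-1}$, I would apply \Cref{lem:mont} to $\bz$ with $\sigma_{\text{bound}} = \bar{s}_{j-1}$ and $n_j = \Theta(1/\bar{s}_{j-1})$ queries, obtaining an estimate $\widehat{s^2}$ accurate to additive $\bar{s}_{j-1}^2/16$. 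If the estimate indicates $s^2 \leq \bar{s}_{j-1}^2/4$, I halve to $\bar{s}_j = \bar{s}_{j-1}/2$ and continue; otherwise I halt, concluding $s \in [\bar{s}_{j-1}/4,\, \bar{s}_{j-1}]$ and setting $\widehat{s} = \bar{s}_{j-1}$. The search terminates in $j_{\max} = O(\log n)$ stages, either by rejection at some stage or by reaching $\bar{s}_{j_{\max}} \asymp 1/n$; in the latter case the promise $s \geq 1/n$ forces $s = \Theta(1/n)$, so I set $\widehat{s} = \Theta(1/n)$.

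The raw query total is $\sum_j n_j = \Theta(2^{j_{\max}}) = O(n)$ by the geometric series, so the main obstacle I expect is amplification across the $\Theta(\log n)$ stages: naively bounding the failure probability via a union bound with uniform per-stage confidence would add an undesirable $O(\log\log n)$ factor. The fix is the log-log trick from the opening of \Cref{sec:consequences}, which fits our situation exactly: the ``accuracy'' parameters $\eta_j = 1/n_j$ decrease geometrically by factor~$2$, and the per-stage cost has the required form $O((1/\eta_j)\log(1/\delta_j))$. Setting $\delta_j = \exp(-C\sqrt{n/n_j})$ for a suitable constant~$C$ keeps the total failure probability below~$1/3$ while preserving the $O(n)$ query budget. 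A minor bookkeeping point is that the inductive hypothesis $s \leq \bar{s}_{j-1}$ used at stage~$j$ holds conditioned on all earlier stages' tests succeeding, which is exactly the event on whose complement the log-log trick bounds the total probability.
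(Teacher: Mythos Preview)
Your argument is correct, but the paper obtains the factor-$2$ estimate of~$s$ much more directly. Instead of running a successive-halving search on~$\bz=\by^2$ with \Cref{lem:mont} and the log\,log trick, the paper simply applies the already-established \Cref{lem:ter} (\Cref{prob:problem4b}) to~$\bz$ once: since $\bz\in[0,1]$ and $\E_p[\bz]=s^2$, one call with parameter~$n$ yields $\widehat{s^2}$ with $|\widehat{s^2}-s^2|\le \sqrt{s^2}/(2n)=s/(2n)$, and the promise $s\ge 1/n$ turns this additive error into $\le s^2/2$, i.e.\ a multiplicative factor-$2$ estimate of~$s^2$. This avoids the entire halving loop and the log\,log bookkeeping. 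For the final step the paper rescales $\by$ by $2\widehat s$ and invokes \Cref{prob:problem3} with $\eps=1/(4n)$, which is essentially equivalent to your use of \Cref{lem:mont} with $\sigma_{\text{bound}}=\widehat s$.

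What your route buys is that it only needs \Cref{lem:mont} as a black box (not the Bernoulli/Terhal chain behind \Cref{lem:ter}); what the paper's route buys is brevity and no need for the log\,log trick at this stage. Both are $O(n)$ queries.
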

\begin{proof}
    The first step is to \emph{estimate}~$\second$ to within a factor of~$2$.
    To do this, we apply our solution to \Cref{prob:problem4b} to the random variable~$\bz \coloneqq \by^2$.  
    We have $\bz \in [0,1]$ since $\by \in [-1,1]$, so with $O(n)$ queries we can get an additive estimate of $\E_p[\bz] = \E_p[\by^2] = s^2$ that is correct to within an additive~$\sqrt{s^2}/(2n) = s/(2n)$.
    But $s/(2n) \leq s^2/2$ since we have the promise $s \geq 1/n$.
    Thus our  estimate of~$s^2$ is within an additive~$s^2/2$; i.e., it is a factor-$2$ multiplicative estimate. 
    So we have a factor-$2$ (or even~$\sqrt{2}$) multiplicative estimate~$\hat{s}$ of~$s$.
    
    Given this, we can form the rescaled random variable~$\by' \coloneqq \by/(2\hat{s})$, which has $(s')^2 \coloneqq \E_p[(\by')^2] \leq 1$.
    Then applying our solution to \Cref{prob:problem3}
    with error parameter $\eps = 1/(4n)$, we use $O(n)$ queries to get an estimate $\wh{\mu}'$ of $\E_p[\by] = \mu/(2\hat{s})$ that is correct to an additive $1/4n$.
    Finally, taking $\wh{\mu} = (2\hat{s}) \cdot \wh{\mu}'$, we have an estimate of~$\mu$ that is correct to an additive $(2\hat{s})/(4n) \leq s/n$, as desired. 
\end{proof}

We now come to the (almost-final) step: using the quantum Quantile Estimation algorithm of Hamoudi~\cite{Ham21}.
With $O(n)$ queries, this will allow us to find a suitable ``cap'' value~$B$ such that replacing~$\by$ with its truncation~$\by'$ to the interval~$[-B,B]$ does not substantially change the mean estimation task.
As long as we have $B \leq n \cdot \sqrt{\E_p[(\by')^2]}$, we will be able to employ our solution to \Cref{prob:problem6} (after dividing~$\by'$ by~$B$).

The correct value to choose for~$B$ is, roughly speaking, the ``$(1-1/n^2)$''-quantile value for~$\by$; i.e., the largest~$B$ such that $\Pr[\by \geq B] \geq 1/n^2$.
Hamoudi's algorithm can find this~$B$ with $O(n)$ samples from~$\by$. 
(Classically, we could find this~$B$ by taking~$\Theta(n^2)$ draws from~$\by$ and and outputting the maximum sample seen. 
The intuition for Hamoudi's algorithm is to take this and apply the square-root quantum speedup afforded by the Minimum Finding algorithm of~\cite{DH96}.)

On one hand, a Chebyshev-type argument shows that if~$B$ is so large that $\Pr[\by \geq B] \ll 1/n^2$, then capping~$\by$ at~$B$ does not affect the mean/second-moment enough to make a substantial difference to the mean estimation problem.   On the other hand, this value of~$B$ will be small enough that the $B \leq n \cdot \sqrt{\E_p[(\by')^2]}$ required for \Cref{prob:problem6} holds. 
This is because (roughly speaking) we have $\Pr[\by \approx B] \gtrapprox 1/n^2$  (else the quantile value~$B$ could be chosen larger), and hence even the capped~$\by'$ will have $\Pr[\by' \approx B] \gtrapprox 1/n^2$, implying $\E_p[(\by')^2] \gtrapprox B^2/n^2$.

\begin{prob}[Mean estimation in terms of second moment] \label{prob:problem7}
    \begin{tabbing}
      Input: \quad\= Positive integer $n$.\\
      Output: \> An estimate $\widehat{\mu}$ such that $\abs{\widehat\mu - \mu} \leq \second/n$.
    \end{tabbing}
\end{prob}
\begin{lemma}
    We can solve \Cref{prob:problem7} with $O(n)$ queries to the code for~$\by$.
\end{lemma}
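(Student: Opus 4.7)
The plan is to follow the high-level strategy sketched in the paragraph preceding the problem: find a threshold $B$ using Hamoudi's quantum quantile algorithm~\cite{Ham21}, truncate $\by$ at level $\pm B$, and then reduce to \Cref{prob:problem6}. Concretely, first build the code for $|\by|$ from the code for $\by$ at constant overhead and apply Hamoudi's algorithm with target quantile level $1/n^2$; this returns, with $O(n)$ queries and constant confidence, a threshold $B \ge 0$ satisfying
\begin{equation}
    \Pr_p[|\by| > B] \le 1/n^2, \qquad \Pr_p[|\by| \ge B] \ge c_0/n^2
\end{equation}
for a universal constant $c_0 > 0$. I would handle the degenerate case $B = 0$ directly: then $\Pr_p[\by \ne 0] \le 1/n^2$, so by Cauchy--Schwarz $|\mu| \le \sqrt{s^2 \cdot \Pr_p[\by\ne 0]} \le s/n$ and we may output $\widehat{\mu}=0$.

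Second, define the truncated random variable $\by' = \operatorname{sign}(\by)\min(|\by|, B)$, whose code is obtained by computing $|\by|$, comparing with $B$, and recombining. Since $\by - \by'$ vanishes on $\{|\by| \le B\}$ and $(\by-\by')^2 \le \by^2$, Cauchy--Schwarz yields
\begin{equation}
    |\mu - \mu'| = \bigl|{\E}_p[(\by-\by')\mathbb{1}[|\by|>B]]\bigr| \le \sqrt{{\E}_p[(\by-\by')^2]}\cdot\sqrt{\Pr_p[|\by|>B]} \le s/n,
\end{equation}
and clearly $(s')^2 := \E_p[(\by')^2] \le s^2$. In the other direction, since $|\by'| = B$ on $\{|\by| \ge B\}$,
\begin{equation}
    (s')^2 \ge B^2 \Pr_p[|\by|\ge B] \ge c_0 B^2/n^2, \quad \text{i.e.,} \quad B \le (n/\sqrt{c_0}) \cdot s'.
\end{equation}

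Third, rescale: let $\bz = \by'/B \in [-1,1]$ with $s_{\bz} = s'/B \ge \sqrt{c_0}/n$. Applying \Cref{prob:problem6} to $\bz$ with parameter $n' := \lceil n/\sqrt{c_0}\rceil = \Theta(n)$ uses $O(n)$ more queries and returns $\widehat{\nu}$ with $|\widehat{\nu} - {\E}_p[\bz]| \le s_{\bz}/n'$. Unscaling, $\widehat{\mu} := B\widehat{\nu}$ satisfies $|\widehat{\mu}-\mu'| \le s'/n' \le s/n$, and combined with the truncation bias of Step~2 we get $|\widehat{\mu}-\mu| \le 2s/n$. The factor-$2$ is removed by running the whole procedure with $n$ replaced by a sufficiently large constant multiple of $n$, which preserves the $O(n)$ query complexity; constant-confidence boosting is standard.

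The main obstacle is really Step~1: we need Hamoudi's algorithm to certify both the upper bound $\Pr_p[|\by| > B] \le 1/n^2$ (controlling truncation bias) \emph{and} the lower bound $\Pr_p[|\by| \ge B] \gtrsim 1/n^2$ (controlling how small $s'$ can be), using only $O(n)$ queries. Since this is a two-sided multiplicative quantile guarantee at the $1/n^2$-quantile, and Grover-type minimum/quantile search achieves quadratic speedup here, it follows from the statement of~\cite{Ham21}, but some care is needed in invoking it and in handling boundary cases such as atoms of $|\by|$ at the quantile value and the $B = 0$ branch. All other steps are routine estimates built on tools already established in the paper.
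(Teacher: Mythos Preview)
Your proposal is correct and follows essentially the same approach as the paper: invoke Hamoudi's Quantile Estimation to obtain~$B$ with the two-sided guarantee, truncate to $[-B,B]$, bound the truncation bias $|\mu-\mu'|$ via Cauchy--Schwarz using $\Pr[|\by|>B]\lesssim 1/n^2$, use $\Pr[|\by|\ge B]\gtrsim 1/n^2$ to verify the promise of \Cref{prob:problem6} for $\by'/B$, and finish by rescaling and adjusting~$n$ by a constant. The only cosmetic differences are the placement of the constant ($c_0$ on the lower bound versus the paper's~$C$ on the upper bound) and your direct handling of the $B=0$ branch; neither affects the argument.
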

\begin{proof}
    We begin by performing the Quantile Estimation algorithm of Hamoudi~\cite{Ham21} on the random variable~$\abs{\by}$. This uses $O(n)$ queries and (with high probability) determines a number~$B$ (a possible outcome for~$\abs{\by}$) such that:
    \begin{equation}    \label[ineq]{ineq:ham}
        {\Pr}_p[\abs{\by} \geq B] \geq 1/n^2, \qquad {\Pr}_p[\abs{\by} > B] \leq C/n^2.
    \end{equation}
    (Here $C$ is a large universal constant, and Hamoudi's analysis also requires that~$n$ is at least some universal~$n_0$ --- but  we may freely assume that.)
    
    We now follow Hamoudi's idea and define 
    \begin{equation}
        \by' = \begin{cases}
            \phantom{+}\by & \text{if $\abs{\by} \leq B$,} \\
            +B & \text{if $\by > B$,}\\
            -B & \text{if $\by < -B$.}
        \end{cases}
    \end{equation}
    Our first goal is to get a good estimate for $\mu' \coloneqq \E_p[\by']$.
    If $B = 0$ then we immediately know~$\mu' = 0$, since~$\by' \equiv 0$.
    Otherwise, let us consider the random variable $\by'' \coloneq \by'/B \in [-1,1]$.  
    We know 
    \begin{equation}
        (s')^2 \coloneqq {\E}_p[(\by')^2] \geq B^2 \cdot {\Pr}_p[\abs{\by'} \geq B] = B^2 \cdot {\Pr}_p[\abs{\by} \geq B] \geq B^2/n^2,
    \end{equation}
    where we used the first \Cref{ineq:ham}. 
    Thus  $s'' = s'/B \geq 1/n$, meaning the promises of \Cref{prob:problem6} are  satisfied for~$\by''$.
    Thus with $O(n)$ we can obtain an estimate for~$\mu''$ to within~$s''/n$, hence an estimate~$\wh{\mu}$ for~$\mu'$ within~$s'/n$.
    Since $\by'$ is a truncation of~$\by$, we clearly have $s' \leq s$; thus $\abs{\wh{\mu} - \mu'} \leq s/n$.

    It now suffices to show the claim $|\mu - \mu'| \leq O(s/n)$; this will imply $|\wh{\mu} - \mu| \leq O(s/n)$, and we can complete the proof of the lemma by adjusting~$n$ by a constant factor.
    
    To show this the claim, observe that 
    \begin{equation}
        \abs{\mu - \mu'} = \abs{{\E}_p[\by -\by']}
        \leq {\E}_p[\abs{\by - \by'}] \leq {\E}_p\bracks*{\Bigl|\abs{\by} - B\Bigr| \cdot 1_{\{\abs{\by} > B\}}},
    \end{equation}
    as $\by' = \by$ when $\abs{\by} \leq B$. 
    Now Cauchy--Schwarz implies the above is at most
    \begin{equation}
        \sqrt{{\E}_p[(\abs{\by}-B)^2 \cdot 1_{\{\abs{\by} > B\}}]} \sqrt{{\E}_p[1_{\{\abs{\by} > B\}}]} \leq \sqrt{\E[\abs{\by}^2]} \cdot \sqrt{{\Pr}_p[\abs{\by} > B]} \leq \sqrt{C} \cdot \second/n,
    \end{equation}
    where we used the second \Cref{ineq:ham}.
    Thus we have established the claim $\abs{\mu - \mu'} \leq O(s/n)$, completing the proof.
\end{proof}

Finally, we come to the main Mean Estimation problem; its only difference from \Cref{prob:problem7} is that it has the standard deviation~$\sigma$ in place of~$s$: 
\begin{prob}[Mean Estimation]   \label{prob:meanestimationmain}
    \begin{tabbing}
        Input: \quad\= Positive integer $n$.\\
        Output: \> An estimate $\widehat{\mu}$ such that $\abs{\widehat\mu - \mu} \leq \sigma/n$.
      \end{tabbing}
\end{prob}
Our main \Cref{thm:main0} is equivalent to saying that \Cref{prob:meanestimationmain} can be done with~$O(n)$ queries.
But this follows from our solution to \Cref{prob:problem7} via the Montanaro trick, in exactly the same way that \Cref{lem:mont} follows from \Cref{lem:us}.

\bibliographystyle{alphaurl}
\bibliography{quantum}

\appendix

\section{Gate complexity}   \label{sec:gates}

In this section we sketch how to establish \Cref{rem:gatecxty}, that our algorithm's gate complexity is (essentially) minimal given its sample complexity: namely, $O(nS)$, where $S$ is the gate complexity of ``the code'' for~$\by$.
The only potential excess comes from having to classically compute the $\arctan$ function. 
More precisely, we show the gate complexity is
\begin{equation}
    O(nS) + O(n \log n \cdot (\log \log n)^2).
\end{equation}
\begin{remark}
    The extra additive term $O(n \log n \cdot (\log \log n)^2)$ above can be absorbed into the~$O(nS)$ except when $S < o(\log n \cdot (\log \log n)^2)$.
    On the other hand, if $S < o(\log n)$, with gate complexity~$n^{o(1)}$ we can compute~$\E_p[\by]$ \emph{exactly}\footnote{Exactly, if the gates used to compute~$\by$ have amplitudes that are exactly representable.  Otherwise, up to $O(\log n)$ bits of precision --- which suffices, as we will explain.} by brute-force analysis of all computational paths in the circuit for~$\by$.
    Thus only in the unusual case of $\Omega(\log n) \leq S < o(\log n \cdot (\log \log n)^2)$ must we report our algorithm's gate complexity as~$o(nS \cdot (\log \log n)^2)$, rather than~$O(nS)$.
\end{remark}

\paragraph{Assumptions.} We use the standard model of CNOT gates together with any $1$-qubit gate.
(From these one can also build Toffoli gates~\cite[Fig.~4.9]{NC10}.)
We will assume that the code for~$\by$ outputs its value in a ``floating point'' format (of at most~$S$ bits). Hence given output values of the code, we can perform the following with gate complexity~$O(S)$: subtraction, comparison with~$0$, rounding to a power of~$2$, and multiplication/division by a power of~$2$ (shifting).

\subsection{Summary of the steps of the algorithm}
Here we summarize the algorithms needed for Mean Estimation with error~$\sigma/n$.

\paragraph{Solving \Cref{prob:problem3}.} (I.e., mean estimation for random variables~$\by$ satisfying~$\E_p[\by^2] \leq 1$.)
This algorithm will always be run with precision parameter $\eps \geq \Omega(1/n)$:
\begin{itemize}
    \item Binary search for~$\mu$ with intervals of of width decreasing geometrically from~$2$ to~$\Omega(\eps)$.
    \item Test each interval centered at~$\wh{\mu}$ by replacing~$\by$ with $\by - \wh{\mu}$ and performing our solution to the Main Task, namely:
    \begin{itemize}
        \item Converting the code for~$\by$ to (controlled versions of) $\Refl$ and~$\RotF$.
        \item Performing Quantum Phase Estimation on~$\calU = \Refl \cdot \RotF$.
    \end{itemize}
\end{itemize}

\paragraph{The overall Mean Estimation algorithm.}
This is obtained by reading \Cref{sec:consequences} roughly backward:
\begin{itemize}
    \item Draw one sample~$m$ from~$\by$ and replace~$\by$ with $\by - m$. (This is to go from \Cref{prob:meanestimationmain} to \Cref{prob:problem7}.)
    \item Perform Hamoudi's Quantile Estimation on~$\abs{\by}$ obtaining~$B$; replace~$\by$ with its truncation to~$[-B,B]$ and divide it by~$B$.
    Call the resulting random variable~$\ol{\by}$.
    \item Estimate~$\E[\bz]$ for $\bz = \ol{\by}^2$ to factor~$2$.  This uses the solution to \Cref{prob:problem4b}, as follows:
    \begin{itemize}
        \item Replace the $[0,1]$-valued~$\bz$ with a randomized $\{0,1\}$-valued version.
        \item Starting with a trivial upper bound $\ol{p}$ for $p = \Pr[\bz = 1]$, repeatedly use the solution to \Cref{prob:problem3} to estimate $\ol{p}$ to within $\eps \approx \frac{1}{4}\sqrt{\ol{p}}$, and lower~$\ol{p}$ if necessary (but not below $1/(4n^2)$)
    \end{itemize}
    \item Having determined $\E_p[\ol{\by}^2]$ to factor~$2$, rescale it so that~$\E[\ol{\by}^2] \approx 1$; then use the solution to \Cref{prob:problem3} on~$\ol{\by}$.
\end{itemize}

\subsection{Precision issues}
Suppose we have done the first two steps of the overall Mean Estimation algorithm, obtaining $m$~and~$B$; these numbers are expressed in floating point with at most~$S$ bits.
All subsequent stages of the algorithm work with the $[-1,1]$-bounded random variable $\ol{\by} =\frac1B \cdot \mathrm{trunc}_{[-B,B]}(\by - m)$.
In gate complexity $O(S)$ it is easy to compute~$\mathrm{trunc}_{[-B,B]}(\by - m)$.
Dividing by~$B$ is not as easy, but we argue that it is fine if the algorithm simply divides by the next largest power of~$2$, call it~$B'$. 
The only properties we needed in from this scaling in \Cref{prob:problem6} were that the resulting random variable has~$\ol{\by} \in [-1,1]$ and $\sqrt{\E[\ol{\by}^2]} \geq 1/n$.
If we divide by a slightly larger~$B'$, the first property still holds; and while the second is no longer literally true, it is true up to a factor of~$4$, we we can easily compensate for by adjusting the constant factor on~$n$.
Thus once we are dividing by~$B'$ --- a power of~$2$ --- (and also subsequently multiplying our final estimate by~$B'$), the gate complexity becomes~$O(S)$ as this just amounts to bit-shifting. 

Thus we have argued so far that we can still get one sample from~$\ol{\by} \in [-1,1]$ with gate complexity~$O(S)$.
We now wish to argue that samples and computations done with~$\ol{\by}$ can be rounded to $O(\log n)$ bits of precision.  
(Recall also that $S \geq \Omega(\log n)$ without loss of generality.)
To see this, it suffices to note that all the remaining steps of the algorithm only care about numbers and interval widths that are at least~$\Omega(1/n^2)$; hence round-off to a sufficiently large $O(\log n)$ bits will affect their accuracy/correctness by at most small constant factors.
As usual, we can ultimately compensate for these small constant factors by increasing our sample complexity by a constant factor.

\paragraph{Precision conclusion:} For the post-Hamoudi part of our algorithm, we can assume the gate complexity of obtaining a sample is still~$O(S)$, and then that all subsequent numbers and  computations require bit complexity only~$O(\log n)$.

\subsection{Final gate analysis}

We now analyze the steps of the algorithm to justify our claim that the overall gate complexity is $O(nS)$ plus $O(n) \cdot \wt{O}(\log n)$.

The first step of sampling and subtracting~$m$ does not cost more than~$O(S)$ gates per sample, so the first serious piece of the algorithm to analyze is Hamoudi's Quantile Estimation algorithm.

\paragraph{Hamoudi's algorithm.}
Hamoudi does not explicitly analyze the gate complexity of his algorithm; he just shows the sample complexity is~$O(n)$.
We argue that the gate complexity is $O(nS) + O(n) \cdot \wt{O}(\log n)$.
To do this, we first sketch his algorithm as applied to a random variable~$\bx$.
In short, the algorithm produces a sequence $-\infty = x_0 < x_1 < x_2 < \cdots$ by repeatedly setting~$y_{t+1}$ to be a draw from $\bx \mid (\bx > x_t)$.
Obtaining each draw is done via a ``Sequential Amplitude Estimation'' quantum algorithm, as in~\cite{BBHT98} (similar to our \Cref{lem:sim}).
In turn, the pseudocode for this is roughly the following:
\begin{enumerate}
    \item For $j = 1, 2, 3, \dots$
    \item \qquad Let $T_j$ be a random integer in $[(6/5)^{j-1}, (6/5)^{j}]$.
    \item  \qquad Do Amplitude Amplification~\cite{BHMT00} with~$O(T_j)$ applications of the ``code for~$\bx$''. More precisely, the synthesizer for~$\bx$ and its inverse are used~$O(T_j)$ times, as is~$1-2\Pi$, where~$\Pi$ is a projector depending on the code~$\calX$ for~$\bx$ and comparison with the current~$x_t$.  (This comparison takes only~$O(S)$ gates.)
    \item  \qquad Measure with $(\Pi, \Id - \Pi)$ and quit the ``for loop'' if~$\Pi$ occurs.
    \item Do a measurement to obtain the next~$x_t$.
\end{enumerate}
Moreover, throughout this pseudocode the algorithm maintains a counter of the number of times the code for~$\bx$ has been applied, and it halts (with the final value of~$x_t$ as its output) once the budget of $O(n)$ samples has been hit.

As mentioned, we claim that this algorithm can be implemented with a circuit of size~$O(nS) + O(n) \cdot \wt{O}(\log n)$.  Essentially, we can unroll all loops and make one sequence of~$O(n)$ applications of (pieces of) the circuit for~$\bx$. Interspersed between of applications of the circuit for~$\bx$ will be conditionals and operations operating on~$T_j$ (which is $O(\log n)$ bits), comparisons of~$O(S)$-bit quantities, and conditional measurements (including measuring qubits to get classical random bits).  These measurements can be moved to the end by the usual Principle of Deferred Measurements.

The last output~$x_t$ is the value of ``$B$''.
As discussed before, with~$B$ in hand, we can assume that all future calculations done on draws from~$\ol{\by}$ are done with~$O(\log n)$ bits of precision.

\paragraph{Estimating $\E[\bz] = \E[\ol{\by}^2]$ to factor~$2$.}
The next step of the algorithm, estimating the mean of~$\ol{\by}^2$ to a factor of~$2$, first involves replacing~$\ol{\by}$ by~$\bz = \ol{\by}^2$.  
In turn this means we have to multiply an~$O(\log n)$-bit sample by itself, creating an overhead of $M(\log n) = O(\log n \log\log n)$ gates per sample~\cite{HH21}.

Next, in \Cref{lem:ter}, we must implement the additional randomness for converting a $[0,1]$-valued random variable to a $\{0,1\}$-valued one.
This involves taking a sample outcome~$z$ (expressed with~$O(\log n)$ bits) and producing $\sqrt{1-z}\ket{0} + \sqrt{z} \ket{1}$.
In turn, this involves $O(\log n)$ controlled rotations, plus the computation of~$\arctan \sqrt{z/(1-z)}$ (to $O(\log n)$ bits of precision).
The gate complexity of this computation is $O(M(\log n) \cdot \log \log n) = O(\log n \cdot (\log \log n)^2)$~\cite{BZ11,HH21}.
Thus now one ``sample'' costs gate complexity $O(S + \log n \cdot (\log \log n)^2)$; this (along with a subsequent $\arctan$ computation) is the computational ``bottelneck'' leading to our final gate complexity of~$n$ times the cost of a sample.

To complete the estimate of~$\E[\bz]$, we need to do the perform the ``successive halving'' routine of~\Cref{lem:sim}.
This mainly uses our solution to \Cref{prob:problem3}, which we analyze below.  
The only other aspect is the log\,log trick --- for which the appropriate~$\delta_j$ values can be precalculated --- and minor computations on~$O(\log n)$ bit numbers (which are within our budget).

It remains to analyze the gate complexity of our solution to \Cref{prob:problem3}.

\paragraph{The binary search.}
In the binary search, it is easy to adjust constants so that the intervals~$I_j$ decrease in width by precisely some factor $1-c'$ at each stage.  Thus the whole binary search can be straightforwardly unrolled with the interval widths and the $\delta_j$'s precalculated.
The only additional work that needs to be done is counting and comparing $O(\log n)$-bit integers (to compute majorities).
Again, this portion of the algorithm only incurs an additive overhead of $O(n) \cdot \wt{O}(\log n)$.

Finally, we reach:

\paragraph{The Main Task.}  Finally we come to our algorithm for the Main Task from \Cref{sec:mainthing}.
At this point, each of our samples costs $O(S) + \wt{O}(\log n)$ gates, we need to do all calculations with precision $O(\log n)$ bits, and  the sample complexity is~$O(1/\eps) = O(n)$.
The gate complexity of phase estimation is the cost to prepare the initial state ($O(S)$ for us), plus the sample complexity times the cost of controlled-$\calU$, plus the cost of some minor calculations on numbers of $O(\log (1/\eps)) = O(n)$ bits.  
Since our final claimed gate complexity is $O(nS) + O(n \log n \cdot (\log \log n)^2)$, it now remains to argue that the cost to compute controlled-$\calU$ is $O(S) + O(M(\log n) \log \log n)$.

The cost to compute controlled-$\calU$ is the cost to compute controlled-$\Refl$ and controlled-$\RotF$.
The former has gate complexity~$O(S)$.
As for the latter, we need to take a sample~$y$, compute $\alpha = -2\arctan y$ (to $O(\log n)$ bits of precision), and apply $O(\log n)$ controlled-phase gates to implement the phase $e^{\i \alpha}$. 
Similarly to before, the main bottleneck is computing the $\arctan$, and the gate complexity is
$O(M(\log n) \cdot \log \log n) = \wt{O}(\log n)$~\cite{BZ11,HH21}.

\end{document}